\let\cl@chapter\undefined
\tikzset{ brokenrect/.style={
    append after command={
      \pgfextra{
      \path[draw,#1]
       decorate[decoration={zigzag,segment length=0.3em, amplitude=.7mm}]
       {(\tikzlastnode.north east)--(\tikzlastnode.south east)}      
        -- (\tikzlastnode.south west)|-cycle;
        }}}}
\tikzset{ brokenrect2/.style={
    append after command={
      \pgfextra{
      \path[draw,#1]
       decorate[decoration={zigzag,segment length=0.3em, amplitude=.7mm}]
       {(\tikzlastnode.north west)--(\tikzlastnode.south west)}      
        -- (\tikzlastnode.south east)|-cycle;
        }}}}
\newenvironment{varalgorithm}[1]
  {\algorithm}
  {\endalgorithm}
\newtheorem{theorem}{Theorem}
\newtheorem{lemma}{Lemma}
\newtheorem{claim}{Claim}
\newtheorem{remark}{Remark}
\newtheorem{proposition}{Proposition}
\newtheorem{observation}{Observation}
\newtheorem{example}{Example}
\newcommand{\J}{\mathcal{J}}
\newcommand{\T}{\mathcal{T}}
\newcommand{\R}{\mathcal{R}}
\newcommand{\Q}{\mathcal{Q}}
\renewcommand{\O}{\mathcal{O}}
\newcommand{\States}{L}
\newcommand{\Fmax}{F_{\max}}
\newcommand{\psum}{p_{\mathrm{sum}}}
\newcommand{\new}[1]{{\color{black}#1}}
\newcommand{\jsi}{\new{B_{(i)}}}
\newcommand{\js}{\new{B}}
\begin{document}

\title{Joint replenishment meets scheduling
}


\author{P\'eter Gy\"orgyi$^1$ \and Tam\'as~Kis$^{1,*}$ \and  T\'{\i}mea~Tam\'asi$^{1,2}$ \and J\'ozsef B\'ek\'esi$^3$ 
}



\maketitle

\begin{abstract}
In this paper we consider a combination of the joint replenishment problem (JRP) and single machine scheduling with release dates. There is a single machine and one or more item types. Each  job  has a release date, a positive processing time, and it requires a subset of items.
A job can be started at time $t$ only if all the required item types were replenished between the release date of the job and time point $t$. The ordering of item types for distinct jobs can be combined.
The objective is to minimize the total ordering cost plus a scheduling criterion, such as total weighted completion time or maximum flow time, where  the cost of ordering a subset of items simultaneously is the sum of a joint ordering cost, and an additional item ordering cost for each item type in the subset.
We provide several complexity results for the offline problem, and  competitive analysis for online variants with min-sum and min-max criteria, respectively.

\textbf{Keywords} Joint replenishment \and single machine scheduling \and complexity \and polynomial time algorithms \and online algorithms
\end{abstract}

\let\thefootnote\relax\footnotetext{
This work has been supported by the National Research, Development and Innovation Office -- NKFIH, grant no.~SNN~129178, and ED\_18-2-2018-0006. The research of P\'eter Gy\"orgyi was supported by the J\'anos Bolyai Research Scholarship of the Hungarian Academy of Sciences.

$^1$  Institute for Computer Science and Control, E\"{o}tv\"{o}s Lor\'{a}nd Research Network, Kende Str. 13-17., Budapest 1111, Hungary\\
              Tel.: +3612796156\\
              Fax: +3614667503\\
$^2$  Department of Operations Research, Institute of Mathematics, ELTE E\"otv\"os Lor\'and University, Budapest, Hungary \\
$^3$ Department of Computer Algorithms and Artificial Intelligence, Faculty of Science and Informatics, University  of Szeged, Hungary \\
              Email: gyorgyi.peter@sztaki.hu,\ kis.tamas@sztaki.hu,\ timea.tamasi@sztaki.hu,\ bekesi@inf.u-szeged.hu         
\\
$^*$ corresponding author
}

\section{Introduction}\label{sec:intro}
The Joint Replenishment Problem (JRP) is a classical problem of supply chain management with several practical applications.
In this problem a number of  demands emerge over the time horizon, where each demand has an arrival time, and an item type (commodity). 
To fulfill a demand, the required item must be ordered {\em not sooner\/} than the arrival time of the demand.
Orders of different demands can be combined, and the cost of simultaneously ordering a subset of item types incurs a joint ordering cost and an additional item ordering cost for each item type in the order. None of these costs depends on the number of units ordered. 
Thus the total ordering cost can be reduced by combining the ordering of item types of distinct demands. However, delaying a replenishment of an item type delays the fulfillment of all the demands that require it. The objective function expresses a trade-off between the total ordering cost, and the cost incurred by delaying the fulfillment of some demands, see e.g., \cite{buchbinder13}. There are other variants of this basic problem, see Section~\ref{sec:lit_rev}.

In the above variant of JRP a demand becomes ready as soon as the required item type has been replenished after its arrival time.
However, in a make-to-order manufacturing environment the demands may need some processing by one or more machines before they become ready, or in other words, a scheduling problem emerges among the demands for which the required item types have been ordered.
This paper initiates the systematic study of these variants. 
By adopting the common terminology of scheduling theory, from now on we call the demands and item types  'jobs' and 'resource types', respectively.

We consider the simplest scheduling environment, the single machine. 
It means that there is one machine, which can process at most one job at a time.
Each job has a release date and  a processing time, and it requires one or more resource types (this is a slight generalization of JRP, where each demand has only one item type). 
A job becomes \new{ready to be started} only if all the required resources are replenished after its release date. We only consider non-preemptive scheduling problems in this paper.
The objective function is the sum of the replenishment costs and the scheduling cost (such as the total weighted completion time, the total weighted flow time, or the maximum flow time  of the jobs). 
Observe that if we delay a replenishment of a resource type $i$, then all the jobs waiting for $i$ will occupy the machine after $i$ arrives, thus the processing of these jobs may delay the processing of some other jobs arriving later.
In other words, at each time moment we have to take care of the jobs with release dates in the future, which is not easy in the offline case, and almost impossible in an online problem, where the jobs are not known in advance.

After a literature review in Section~\ref{sec:lit_rev}, we provide a formal  problem statement along with a brief overview of the main results of the paper in Section~\ref{sec:prob_statement}.
Some hardness results for min-sum, and min-max type criteria are presented in Section~\ref{sec:hardness_sum}, and \ref{sec:hardness_fmax}, respectively.
Polynomial time algorithms are described for some variants of the problem in Sections~\ref{sec:dyn_prog} and \ref{sec:poly_fmax}.
Then, in Sections~\ref{sec:online_min-sum} and \ref{sec:online_fmax} we provide online algorithms for min-sum type criteria, and for the $F_{\max}$ objective function, respectively. We conclude the paper in Section~\ref{sec:conclude}.

\section{Literature review}\label{sec:lit_rev}

The first results on JRP are more than 50 years old, see e.g., Starr and Miller~\cite{starr1962}, for an overview of older results we refer the reader to Khouja and Goyal~\cite{khouja2008}. 
Since then, several theoretical and practical results appeared, this review cites only the most relevant literature.

Over the years, a number of different variants of the JRP have been proposed and studied, some of which are mathematically equivalent. Originally, JRP was an inventory management problem, where demands have due dates, and to fulfill a demand, the required item type must be \new{ordered}  {\em before\/} the due-date of the demand. However, keeping the units on stock incurs an inventory holding cost beyond the total ordering costs, and we call this variant JRP-INV.  
In another variant, demands have release dates and deadlines, and they can be fulfilled by ordering the corresponding items in the specified time intervals. The cost of a solution is determined solely by the total ordering costs. We will refer to this variant as JRP-D.
In the last variant considered in this review, demands have a delay cost function instead of deadlines, which determines the cost incurred by the delay between the release date of the demand and the time point when it is fulfilled by an order. 
The objective function balances the total cost incurred by delaying the fulfillment of the demands and the total ordering costs. We  call this variant JRP-W. In this paper we combine JRP-W with a scheduling problem.

The complexity of JRP-INV is studied by Arkin et al.~\cite{arkin1989}, who proved strong NP-hardness.
Levi et al.~\cite{levi2006} give a 2-approximation algorithm, and the approximation ratio is reduced to 1.8 by Levi and Sviridenko~\cite{levi2006improved}, see also Levi et al.~\cite{levi2008}.
Cheung et al.~\cite{cheung2016} describe approximation algorithms for  several variants under the assumption that the ordering cost function is a monotonically increasing, submodular function over the item types, see also Bosman and Olver~\cite{bosman2020}.

The JRP-D is shown NP-hard in the strong sense by Becchetti, et al.~\cite{becchetti2009latency} as claimed by Bienkowski et al.~\cite{bienkowski2015}, and APX-hardness is proved by Nonner and Souza~\cite{nonner2009}, \new{who} also describe an $5/3$-approximation algorithm.
Bienkowski et al.~\cite{bienkowski2015} provide an 1.574 approximation algorithm, and new lower bounds for the best possible approximation ratio. For the special case when the demand periods are of equal length, they give an 1.5-approximation algorithm, and prove a lower bound of 1.2.
In Bienkowski et al.~\cite{bienkowski2014}, the online version of JRP-D is studied, and an optimal 2-competitive algorithm is described.

The NP-hardness of JRP-W with linear delay cost functions follows from that of JRP-INV (reverse the time line). This has been sharpened by Nonner and Souza~\cite{nonner2009} by showing that the problem is still NP-hard even if each item admits only three distinct demands over the time horizon.
Buchbinder et al.~\cite{buchbinder13} study the online variant of the problem providing a 3-competitive algorithm along with a lower bound of 2.64 for the best possible competitive ratio.
Bienkowski et al.~\cite{bienkowski2014} provide a 1.791-approximation algorithm for the offline problem, and they also prove a lower bound of 2.754 for the best possible competitive ratio of an algorithm for the online variant of the problem with linear delay cost function.

For scheduling problems we use the well-known $\alpha|\beta|\gamma$ notation of Graham et al.~\cite{graham1979}.
$1|r_j|\sum C_j$ is a known strongly NP-hard problem, see, e.g., Lenstra et al.~\cite{lenstra1977}. 
There is a polynomial time approximation scheme (PTAS) for this problem even in the case of general job weights and parallel machines (Afrati et al.~\cite{Afrati1999}).
However, there are other important approximation results for this problem.
Chekuri et al.~\cite{chekuri2001} provide an $e/(e-1)\approx 1.5819$-approximation algorithm  based on the $\alpha$-point method.
For the weighted version of the same problem Goemans et al.~\cite{goemans2002} present an $1.7451$-approximation algorithm.
If each job has the same processing time then the weighted version of the problem is solvable in polynomial time even in case of constant number of parallel machines (Baptiste \cite{baptiste2000}).

Anderson and Potts~\cite{anderson2004} devise a 2-competitive algorithm  for  the online version of $1|r_j|\sum w_jC_j$, i.e., each job $j$ becomes known only at its due-date $r_j$, and scheduling decisions cannot be reversed.  
This is the best possible algorithm for this problem, since Hoogeveen and Vestjens \cite{hoogeveen1996} proved that no online algorithm can have a competitive ratio less than 2, even if the weights of the jobs are identical.

Kellerer et al.~\cite{kellerer1999} describe  a $O(\sqrt{n})$-approximation algorithm for $1|r_j|\sum F_j$ and proved that no polynomial time algorithm can have $O(n^{1/2-\varepsilon})$ approximation ratio for any $\varepsilon>0$ unless $P=NP$.
It is also known that the best competitive ratio of the online problem with unit weights is $\Theta(n)$, and it is unbounded if the jobs have distinct weights \cite{epstein2001}.
Due to these results, most of the research papers assume preemptive jobs, see e.g., \cite{bansal2007,chekuri2001b,epstein2001}.

\new{In one of the online variants of the problem we study in this paper the only unknown parameter is the number of consecutive time periods while the jobs arrive.
The ski-rental problem, introduced by  L.~Rudolph according to Karp~\cite{karp1992line}, is  very much alike. A person would ski for an unknown number of $d$ consecutive days, and she can either rent the equipment for unit cost every day, or buy it in some day for the remaining time for cost $y$.
If she buys the equipment in day $t \leq d$, the total cost is $t+y$, while the offline optimum is $\min\{d,y\}$. One seeks a strategy to minimize the ratio of these two values in the worst case without knowing $d$.
The best deterministic online algorithm has a competitive ratio of 2, while the best randomized online algorithm has $e/(e-1)\approx 1.58$, and both bounds are tight \cite{karlin1988competitive}, \cite{karlin1994competitive}. }

\new{
While this paper is probably the first one to study the joint  replenishment problem combined with production scheduling,
there is a considerable literature on integrated production and outbound distribution scheduling.
In such models, the production and the  delivery of  customer orders are scheduled simultaneously, while minimizing the total production and distribution costs. After the first paper by Potts~\cite{potts1980analysis}, there appeared several models and approaches over the years, for an overview see e.g.~\cite{chen2010integrated}. While most of the papers focus on the off-line variants of the problem, 
there are a few results on the online version as well. In particular, Averbakh et al \cite{averbakh2007line,averbakh2013approximation} propose online algorithms for  single machine problems with linear  competitive ratio in either $\Delta$ or $\rho$, where $\Delta$ and $\rho$  denote the ratio of the maximum to minimum delivery costs, and job processing times, respectively. This is considerably improved by Azar et al.~\cite{azar2016make}, who propose poly-logarithmic competitive online algorithms in $\Delta$ and $\rho$  in single as well as in parallel machine environments, and they also prove that their bounds are essentially best possible.}
\section{Problem statement and overview of the results}\label{sec:prob_statement}
We have a set $\J$ of $n$ jobs that have to be scheduled on a single machine.
Each job $j$ has a processing time $p_j>0$, a release date $r_j\geq 0$, and possibly a weight $w_j>0$ (in case of min-sum type objective functions).
In addition, there is a set of resources $\R=\{R_1,\ldots,R_s\}$, and each job  $j\in\J$ requires a non-empty subset $R(j)$ of $\R$.
Let $\J_i\subseteq \J$ be the set of those jobs that require resource $R_i$.
A job $j$ can only be started if all the resources in $R(j)$ are replenished after $r_j$.
Each time some resource $R_i$ is replenished, a fixed cost $K_i$ is incurred on top of a fixed cost $K_0$, which must be \new{paid} each time moment when any replenishment occurs. These costs are independent of the amount replenished. Replenishment is instantaneous.\footnote{This assumption is without loss of generality in offline problems, and can be handled by transforming the problem data in the online variants, see Section~\ref{sec:online_Cj}.}

A {\em solution of the problem\/} is a pair $(S, \Q)$, where $S$ is a {\em schedule\/} specifying a starting time for each job $j \in \J$, and \new{$\Q=\{(\R_1,t_1),\ldots,(\R_q,t_q)\}$ is a {\em replenishment structure\/}, which specifies time moments $t_\ell$ along with subsets of resources $\R_\ell \subseteq \R$ such that $t_1<\cdots <t_q$.}
We say that job $j$ is \new{\textit{ready to be started at time moment $t$ in replenishment structure $\Q$}}, if each resource $R \in R(j)$ is replenished at some time moment in $[r_j,t]$, i.e., $R(j) \subseteq \bigcup_{t_\ell \in [r_j,t]} \R_\ell$. 
The solution is {\em feasible\/} if (i) the jobs do not overlap in time, i.e., $S_j + p_j \leq S_k$ or $S_k +p_k \leq S_j$ for each $j\neq k$, (ii)  each job $j\in \J$ is \new{ready to be started} at $S_j$ in $\Q$.

The {\em cost of a solution\/} is the sum of the scheduling cost  $c_S$, and the replenishment cost $c_\Q$.
The former can be any optimization criteria know in scheduling theory, but in this paper we confine our discussion to the weighted sum of job completion times $\sum w_j C_j$,  the  sum of flow times $\sum F_j$, where $F_j = C_j - r_j$, and to the maximum flow  time $F_{\max} := \max F_j$.
Note that $C_j$ is the completion time of job $j$ in a schedule.
The replenishment cost is calculated as follows: $c_{\Q} := \sum_{\ell=1}^{|\Q|} (K_0+\sum_{R_i \in \R_\ell} K_i)$.

In the offline problem the complete input is known in advance, and we seek a feasible solution of minimum cost over the set of all feasible solutions. In the online variant of the problem, where the jobs arrive over time, and the input becomes known only gradually, the solution is built step-by-step, however, decisions cannot be reversed, i.e., if a job is started, or some replenishment is made, then this cannot be altered later.

Note that $\sum w_jF_j=\sum w_jC_j-\sum w_jr_j$, where the last sum is a constant, thus the complexity status (polynomial or NP-hard) is the same for the two problems.
However, there can be differences in approximability and the best competitive ratio of algorithms for the online version of the problems.

We extend the notation of \cite{graham1979} by $jrp$ in the $\beta$ field to indicate the joint replenishment of the resources, thus we denote our problem by $1|jrp, r_j|c_S+c_\Q$. In addition, $s=1$ or $s=const$ indicate that the number of resources is  1, and constant, respectively, and not part of the input.
Further on, $p_j = 1$ and $p_j = p$ imply that all the jobs have the same processing time 1 (unit-time jobs), and $p$, respectively.

The next example describe two feasible solutions for a one-resource problem.

\begin{example} Suppose there are 3 jobs, $p_1=4$, $p_2=p_3=1$, $r_1=0$, $r_2=3$, and $r_3=7$.
If there are 3 replenishments from a single resource $R$, i.e., \new{$\Q = ((\{R\},0),(\{R\},3), (\{R\},7))$}, and the starting times of \new{the jobs} are $S_1=0, S_2=4$, and $S_3=7$, then $(S, \Q)$ is a feasible solution.

However, if there are only two replenishments in $\Q'$ at $t_1=3$ and $t_2=7$, then we have to start the jobs later, e.g., if $S'_1=3,S'_2=7$, and $S'_3=8$, then $(S',\Q')$ is feasible.
Observe that in the second solution we have saved the cost of a replenishment ($K_0+K_1$), however, the total completion time of the jobs has increased from 17 to 24.
See Figure~\ref{fig:example1} for illustration.

\begin{figure}
\begin{tikzpicture}
\def\ox{0} 
\def\oy{0} 
\def\ui{1.5}
\def\uii{3.50}
\def\uiii{8}
\coordinate(o) at (\ox,\oy); 
\coordinate(u1) at (\ui,\oy);
\coordinate(u2) at (\uii,\oy);
\coordinate(u3) at (\uiii,\oy);

\tikzstyle{mystyle}=[draw, minimum height=0.5cm,rectangle, inner sep=0pt,font=\scriptsize]

\def\tl{5.0} 
\def\oyi{0}
\draw [-latex](\ox,\oyi) node[above left]{$S$} -- (\ox+\tl,\oyi) node[above,font=\small]{$t$};

\draw[<-] (o) -- ($(o)-(0,0.2)$) node[below] {\tiny 0};
\draw[<-] (u1) -- ($(u1)-(0,0.2)$) node[below] {\tiny 3};
\draw[<-] (u2) -- ($(u2)-(0,0.2)$) node[below] {\tiny 7};

\def\pi{0.7}
\node(b1) [above right=-0.01cm and -0.01cm of o,mystyle, minimum width=2 cm]{$j_1$};
\node(b3) [right=0cm of b1,mystyle, minimum width=0.5cm]{$j_2$};
\node(b3) [above right=-0.01cm and -0.00cm of u2,mystyle, minimum width=0.5cm]{$j_3$};
\end{tikzpicture}
\begin{tikzpicture}
\def\ox{0} 
\def\oy{0} 
\def\ui{1.5}
\def\uii{3.50}
\def\uiii{8}
\coordinate(o) at (\ox,\oy); 
\coordinate(u1) at (\ui,\oy);
\coordinate(u2) at (\uii,\oy);
\coordinate(u3) at (\uiii,\oy);

\tikzstyle{mystyle}=[draw, minimum height=0.5cm,rectangle, inner sep=0pt,font=\scriptsize]

\def\tl{5.0} 
\def\oyi{0}
\draw [-latex](\ox,\oyi) node[above left]{$S'$} -- (\ox+\tl,\oyi) node[above,font=\small]{$t$};

\draw[<-] (u1) -- ($(u1)-(0,0.2)$) node[below] {\tiny 3};
\draw[<-] (u2) -- ($(u2)-(0,0.2)$) node[below] {\tiny 7};

\def\pi{0.7}
\node(b1) [above right=-0.01cm and -0.01cm of u1,mystyle, minimum width=2 cm]{$j_1$};
\node(b3) [right=0cm of b1,mystyle, minimum width=0.5cm]{$j_2$};
\node(b3) [right=0cm of b3,mystyle, minimum width=0.5cm]{$j_3$};
\end{tikzpicture}
\caption{Two feasible solutions. The arrows below the axis denote the replenishments.}\label{fig:example1}
\end{figure}
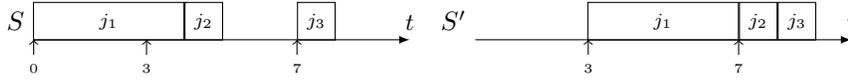

\end{example}

\noindent{\em Results of the paper\/}. The main results of the paper fall in 3 categories: (i) NP-hardness proofs, (ii) polynomial time algorithms, and  (iii) competitive analysis of online variants of the problem, see Table~\ref{tab:results} for an overview. We provide an almost complete complexity classification for the offline problems with both of the $\sum w_j C_j$ and $F_{\max}$ objectives. Notice that the former results imply analogous ones for the $\sum w_j F_j$ criterion. 
While most of our polynomial time algorithms work only with unit-time jobs, a notable exception is the case with a single resource and the $F_{\max}$ objective, where the job processing times are arbitrary positive integer numbers.
We have devised online algorithms for some special cases of the problem for both min-sum and min-max criteria. In all variants for which we present an online algorithm with constant competitive ratio, we  have to assume unit-time jobs. While we have a 2-competitive algorithm with unit time jobs for min-sum criteria, for the online problem with the $F_{\max}$ objective we also have to assume that the input is \new{regular}, i.e., in every time unit a new job arrives, but in this case the competitive ratio is $\sqrt{2}$.
\begin{table}
\caption{Results of the paper.}
\label{tab:results}
\resizebox{\textwidth}{!}{%
\begin{tabular}{ccc}
\hline
Problem&Result&Source\\
\hline
$1|jrp,s=1,r_j|\sum C_j+c_\Q$	&NP-hard	&Proposition~\ref{cor:s1_np} \\
$1|jrp,p_j=1,r_j|\sum C_j+c_\Q$	&NP-hard	&Theorem~\ref{thm:pw1_np} \\
$1|jrp,s=2,r_j|\Fmax+c_\Q$			&NP-hard	&Theorem~\ref{thm:offline_fmax_np} \\
\hline
$1|jrp,s=const,p_j=1,r_j|\sum w_jC_j+c_\Q$	&polynomial alg.	&Theorem~\ref{thm:p1_dyn} \\
$1|jrp,s=const,p_j=p,r_j|\sum C_j+c_\Q$		&polynomial alg.	&Theorem~\ref{thm:w1_dyn} \\
$1|jrp,s=1,r_j|\Fmax+c_\Q$					&polynomial alg.	&Theorem~\ref{thm:offline_fmax_dyn} \\
$1|jrp,s=const,p_j=p,r_j|\Fmax+c_\Q$		&polynomial alg.	&Theorem~\ref{thm:offline_fmax_pjp} \\
\hline
$1|jrp,s=1,p_j=1,r_j|\sum C_j+c_\Q$		&2-competitive	alg.	&Theorem~\ref{thm:onl_2} \\
$1|jrp,s=1,p_j=1,r_j|\sum C_j+c_\Q$		&no $\left(\frac32-\varepsilon\right)$-competitive	alg.	&Theorem~\ref{thm:onl_negative} \\
$1|jrp,s=1,p_j=1,r_j|\sum w_jC_j+c_\Q$	&no $\left(\frac{\sqrt{5}+1}{2}-\varepsilon\right)$-competitive	alg.	&Theorem~\ref{thm:onl_negative_weighted} \\
$1|jrp,s=1,p_j=1,r_j|\sum F_j+c_\Q$		&2-competitive	alg.	&Theorem~\ref{thm:onl_flow} \\
$1|jrp,s=1,p_j=1,r_j|\sum F_j+c_\Q$		&no $\left(\frac32-\varepsilon\right)$-competitive	alg.	&Theorem~\ref{thm:online_sum_fj_3/2} \\
$1|jrp,s=1,p_j=1,\new{regular}\ r_j|\Fmax+c_\Q$		&$\sqrt{2}$-competitive alg.	&Theorem~\ref{thm:online_fmax_sqrt2} \\
$1|jrp,s=1,p_j=1,\new{regular}\ r_j|\Fmax+c_\Q$					&no $\left(\frac43-\varepsilon\right)$-competitive	alg.	&Theorem~\ref{thm:online_fmax_4/3} \\
$1|jrp,s=1,p_j=1,r_j|\Fmax+c_\Q$		&no $\left(\frac{\sqrt{5}+1}{2}-\varepsilon\right)$-competitive	alg.	&Theorem~\ref{thm:onl_fmax_negative_general_case} \\
\hline
\end{tabular}}
\end{table}

\noindent{\em Preliminaries}.
Let $\T:=\{\tau_1,\ldots,\tau_{|\T|}\}$ be the set of different job release dates $r_j$ such that $\tau_1<\ldots<\tau_{|\T|}$.
For technical reasons we introduce $\tau_{|\T|+1}:=\tau_{|\T|}+\sum_{j\in\J} p_j$.

\begin{observation}
For any scheduling criterion, the problem $1|jrp, r_j| c_S + c_\Q$ admits an optimal solution in which all replenishments occur at the release dates of some jobs. 
\label{obs:release}
\end{observation}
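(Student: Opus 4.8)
The plan is to take an arbitrary feasible solution $(S,\Q)$ and perform a local modification to every replenishment time that does not coincide with a job release date, showing that the modification preserves feasibility and does not increase the cost. First I would fix a feasible solution $(S,\Q)$ with $\Q=\{(\R_1,t_1),\dots,(\R_q,t_q)\}$ and consider some $t_\ell \notin \T$. Let $\tau$ be the largest release date with $\tau \le t_\ell$ (if no such $\tau$ exists, i.e. $t_\ell < \tau_1$, then no job can have been made ready by this replenishment, since readiness of job $j$ requires a replenishment in $[r_j,S_j]$ with $r_j \ge \tau_1 > t_\ell$; in that case the replenishment $(\R_\ell,t_\ell)$ is useless and can simply be deleted, strictly decreasing $c_\Q$). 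Assuming $\tau$ exists, I would move this replenishment earlier to $\tau$, i.e.\ replace $t_\ell$ by $\tau$, merging $\R_\ell$ into the resource set already replenished at $\tau$ if $\tau$ is itself an existing replenishment time (this only decreases $c_\Q$ by removing a joint ordering cost $K_0$), and otherwise just relocating it.

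The key claim is that moving a replenishment \emph{earlier} never destroys feasibility. Indeed, the only feasibility condition involving $\Q$ is that each job $j$ is ready to be started at $S_j$, i.e.\ $R(j)\subseteq\bigcup_{t_m\in[r_j,S_j]}\R_m$. Moving $t_\ell$ down to $\tau$ keeps $t_\ell$ inside every interval $[r_j,S_j]$ it was previously in, \emph{provided} $r_j \le \tau$ still holds for those jobs; and this is exactly guaranteed by the choice of $\tau$ as the largest release date $\le t_\ell$: any job $j$ with $r_j \le t_\ell$ has $r_j \le \tau$ as well (because $r_j\in\T$ and $r_j\le t_\ell$ force $r_j\le\tau$), while the upper endpoint $S_j \ge r_j$ need not move at all, since the schedule $S$ is left unchanged. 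Hence every job that was ready at its start time remains ready, no job starting time changes, so the scheduling cost $c_S$ is unchanged, and $c_\Q$ is non-increasing. Repeating this for every replenishment not at a release date — processing them, say, from left to right, and noting each step strictly decreases the number of replenishments off $\T$ (or leaves it the same while not increasing cost, in the pure relocation case) — terminates after finitely many steps with a feasible solution of cost at most that of $(S,\Q)$ in which all replenishments occur at release dates.

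I do not expect a serious obstacle here; the only point requiring a little care is the bookkeeping when the target time $\tau$ already carries a replenishment, so that two entries of $\Q$ must be merged and the ordering $t_1<\dots<t_q$ re-established — this is where one must check that the cost genuinely does not increase (the merge saves one $K_0$ term and the per-item costs $K_i$ are unaffected), and that the finiteness/termination argument still goes through. A clean way to organize it is to argue by induction on the number of replenishment times in $\Q$ lying outside $\T$: each application of the move strictly reduces this number while not increasing cost, so after at most $q$ steps we reach the desired form. Finally, applying this to an optimal solution yields an optimal solution with the claimed property.
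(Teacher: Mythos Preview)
Your argument is correct and is exactly the standard justification one has in mind for this fact. Note, however, that the paper does not actually supply a proof: the statement is labelled an \emph{Observation} and is left without argument, so there is nothing substantive to compare against. Your write-up simply spells out the details the authors considered evident. One tiny remark: in the pure relocation case the number of replenishment times outside $\T$ also strictly decreases (the moved replenishment now sits at $\tau\in\T$), so the parenthetical ``or leaves it the same'' is unnecessary; termination follows directly in both cases.
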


\section{Hardness results for min-sum criteria}\label{sec:hardness_sum}

In this section we prove results for $1|jrp,r_j| \sum w_j C_j + c_Q$.
However, these results remain valid, if we replace $\sum w_j C_j$ to $\sum w_jF_j$, since the difference between the two objective functions is a constant ($\sum_{j\in\J} w_j r_j$).

We say that two optimization problems, $A$ and $B$, are \textit{equivalent} if we can get an optimal solution for  problem $A$ in polynomial time by using an oracle which solves $B$ optimally in constant time, and vice versa. Of course, preparing the problem instance for the oracle takes time as well, and it must be polynomial in the size of the instance of problem $A$ ($B$). 

\new{Recall the definition of $\R$ and $\T$ in  Section~\ref{sec:prob_statement}.}

\begin{theorem}
If $|\R|$ and $|\T|$ are constants, then $1|jrp,r_j|\sum w_jC_j+c_\Q$ is equivalent to $1|r_j|\sum w_jC_j$.\label{thm:equiv}
\end{theorem}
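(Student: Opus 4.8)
The plan is to prove the two directions of the equivalence separately; solving $1|r_j|\sum w_jC_j$ with a JRP oracle is immediate, so the substance is the converse. For the \emph{easy direction}, given an instance of $1|r_j|\sum w_jC_j$ I would build a JRP instance on one resource $R_1$ with $K_0=K_1=0$, let every job require $\{R_1\}$, and keep $p_j,w_j,r_j$ unchanged. All replenishment costs are then $0$, and placing an order $(\{R_1\},\tau_\ell)$ at every $\tau_\ell\in\T$ costs nothing while making each job ready at its own release date; hence a schedule is feasible for some $\Q$ precisely when it obeys the release dates, so the JRP optimum equals the scheduling optimum and the schedule part of an optimal JRP solution is optimal for the original instance. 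One oracle call suffices, and the constructed instance has $|\R|=1$ and the same $\T$, so the constant-$|\T|$ hypothesis is inherited.

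\emph{Main direction.} To solve $1|jrp,r_j|\sum w_jC_j+c_\Q$ with a $1|r_j|\sum w_jC_j$ oracle I would enumerate replenishment structures. By Observation~\ref{obs:release} it suffices to consider structures in which every order occurs at some $\tau_\ell\in\T$; such a structure is a map $f\colon\T\to 2^{\R}$ recording which resources are ordered at each release date, so there are at most $2^{|\R|\cdot|\T|}$ of them, a constant under the hypothesis. For a fixed such $\Q$, define for each job $j$ its effective release date $\rho_j(\Q)$ as the earliest $t\ge r_j$ with $R(j)\subseteq\bigcup_{t_\ell\in[r_j,t]}\R_\ell$, discarding $\Q$ whenever some $\rho_j(\Q)$ is undefined. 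With $\Q$ fixed, a schedule is feasible iff it is non-overlapping and $S_j\ge\rho_j(\Q)$ for all $j$, so the least scheduling cost compatible with $\Q$ equals the optimum $V(\Q)$ of the $1|r_j|\sum w_jC_j$ instance whose release dates are $\rho_j(\Q)$; I would obtain this value from the oracle. Because all order times and all $r_j$ lie in $\T$, every $\rho_j(\Q)$ lies in $\T$ too, so this instance has at most $|\T|$ distinct release dates and is a legitimate oracle input. The algorithm then returns $\min_\Q\bigl(V(\Q)+c_\Q\bigr)$ over the retained structures.

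\emph{Correctness and the main obstacle.} Verifying the last formula is the only delicate point. For each retained $\Q$, pairing it with the oracle's optimal schedule for the release dates $\rho_j(\Q)$ gives a feasible JRP solution of cost $V(\Q)+c_\Q$, so $\mathrm{OPT}\le\min_\Q(V(\Q)+c_\Q)$. Conversely, by Observation~\ref{obs:release} there is an optimal JRP solution $(S^*,\Q^*)$ with all orders at release dates; then $S^*$ is non-overlapping with $S^*_j\ge\rho_j(\Q^*)$, hence $c_S(S^*)\ge V(\Q^*)$ and $\mathrm{OPT}=c_S(S^*)+c_{\Q^*}\ge V(\Q^*)+c_{\Q^*}\ge\min_\Q(V(\Q)+c_\Q)$. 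The hypothesis is genuinely used in exactly two places — to keep the enumeration of constant size, and (through $\rho_j(\Q)\in\T$) to keep the reduced instances inside the constant-$|\T|$ regime the oracle is assumed to handle — and apart from this the proof is routine bookkeeping, so I expect no further obstacle.
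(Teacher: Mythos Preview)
Your proposal is correct and follows the same approach as the paper: the easy direction sets all ordering costs to zero, and the main direction enumerates the $(2^{|\R|})^{|\T|}$ release-date-supported replenishment structures, computes effective release dates (the paper's $r'_j$, your $\rho_j(\Q)$), calls the $1|r_j|\sum w_jC_j$ oracle on each resulting instance, and returns the structure minimizing $V(\Q)+c_\Q$. Your additional observation that $\rho_j(\Q)\in\T$ is correct but not needed in the paper's argument, which treats the scheduling oracle as handling arbitrary instances rather than only constant-$|\T|$ ones.
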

\begin{proof}
If $K_0=K_1=\ldots=K_s=0$ then $1|jrp,r_j|\sum w_jC_j+c_\Q$ is exactly the same as $1|r_j|\sum w_jC_j$ (each resource can be replenished no-cost at any time moment).

For the other direction consider an instance $I$ of $1|jrp,r_j|\sum w_jC_j+c_\Q$.
We prove that we can get the optimal solution for $I$ by solving a constant number of instances of $1|r_j|\sum w_jC_j$.
First, we define $\left(2^{|\R|}\right)^{|\T|}$ replenishment structures  for $I$:
\[
W=\{\Q:\Q = \{(\R_1,\tau_1),\ldots,(\R_{|\T|},\tau_{|\T|})), \text{ where } \R_\ell \subseteq \R \}.
\]

By Observation~\ref{obs:release}, there \new{exists} an optimal solution of $1|jrp,r_j|\sum w_jC_j+c_\Q$ with one of the above replenishment structures.

We define an instance $I_{\Q}$ of $1|r_j|\sum w_jC_j$ for each $\Q\in W$. 
The jobs have the same $p_j$, and $w_j$ values as in $I$, the differences are only in their release dates.
For a given $\Q$, let $r'_j:=\min\{t\geq r_j:R(j)\subseteq \cup_{\tau_\ell \in [r_j,t]}\R_\ell\}$ be the release date of $j$ in $I_{\Q}$.
If this value is infinity for any job $j$, then there is no feasible schedule for $I_\Q$, because some resource in $R(j)$ is not replenished at or after $r_j$. 
Observe that, if $S$ is a feasible solution of $I_\Q$, then $(\Q,S)$ is a feasible solution of $I$, and its objective function value is $\sum w_jC_j+c_{\Q}$, where $C_j$ is the completion time of $j$ in $S$.

Thus, if $(\Q^\star,S^\star)$ is an optimal solution of $I$ then $S^\star$ is an optimal solution for $I_{\Q^\star}$ and the optimal solution of $I_{\Q^\star}$ yields an optimal solution for $I$ with $\Q^\star$.
Invoking the oracle for each $I_\Q$ ($\Q\in W$), we can determine an optimal schedule $S^{\Q}$ for each $I^{\Q}$.
From these schedules we can determine the value $\O_\Q:=c_{\Q}+\sum_{j\in\J} w_jC_j^{\Q}$ for each $\Q\in W$, where $C_j^{\Q}$ denotes the completion time of $j$ in $S^{\Q}$.
Let $\tilde{\Q}\in W$ denote a replenishment structure such that $\O_{\tilde{Q}}=\min\{\O_{\Q}:\Q\in W\}$.
Due to our previous observation,  $(\tilde{\Q},S^{\tilde{\Q}})$ is an optimal solution of $I$.
\end{proof}

Note that Theorem~\ref{thm:equiv} remains valid even if we add restrictions on the processing times of the jobs, e.g., $p_j = p$ or $p_j = 1$.

We need the following results of Lenstra et al.~\cite{lenstra1977}.
\begin{theorem}
The problem $1|r_j|\sum C_j$ is NP-hard.
\end{theorem}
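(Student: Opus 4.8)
The plan is to reduce from a known strongly NP-hard problem, namely 3-PARTITION, following the classical argument of Lenstra, Rinnooy Kan and Brucker. First I would recall the setup: given $3m$ positive integers $a_1,\dots,a_{3m}$ with $\sum_i a_i = mB$ and $B/4 < a_i < B/2$, the task is to decide whether the $a_i$ can be partitioned into $m$ triples each summing to exactly $B$. From such an instance I would construct a single-machine instance of $1|r_j|\sum C_j$ as follows: create $3m$ \emph{item jobs}, one per integer $a_i$, all with release date $0$ and processing time $p_i = a_i$ (possibly scaled by a common factor), and $m-1$ \emph{enforcer jobs} of small unit processing time, with release dates $r_k = k(B+1)$ for $k=1,\dots,m-1$, chosen so that in any schedule achieving a small enough objective value, the $k$-th enforcer must be processed in the interval $[k(B+1), k(B+1)+1]$, thereby carving the time line into $m$ blocks each of length exactly $B$ available for item jobs.

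The key steps, in order, are: (1) argue that if a valid 3-partition exists, then the schedule that packs each triple into one block and runs each enforcer immediately at its release date is feasible and has objective value exactly some target $Y$ computed from the block structure; (2) conversely, show that any schedule with $\sum C_j \le Y$ must run each enforcer job essentially at its release date --- here one uses that delaying an enforcer job by even one time unit pushes the completion times of all later jobs up by at least one unit, and the enforcers' release dates are spaced so that the machine is forced to be busy with item jobs in the intervening intervals; (3) conclude that the item jobs are partitioned by the enforcers into $m$ groups, the $k$-th occupying a window of total length exactly $B$, and since each $a_i \in (B/4, B/2)$ each window contains exactly three item jobs summing to $B$, yielding a 3-partition. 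Putting (1)–(3) together gives the equivalence between the existence of a 3-partition and the existence of a schedule of objective at most $Y$, so $1|r_j|\sum C_j$ is strongly NP-hard, in particular NP-hard.

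The main obstacle I expect is the bookkeeping in step (2): one must choose the processing times and the enforcer parameters (and the target value $Y$) precisely enough that \emph{any} deviation from the intended block structure strictly increases $\sum C_j$ above $Y$, while keeping all numbers polynomially bounded so that the reduction certifies strong NP-hardness. This requires a careful exchange/interchange argument showing that within each block the order of item jobs is irrelevant to the sum of completion times contributed by later blocks (only the block boundaries matter), and that the enforcer jobs cannot be profitably deferred. A secondary subtlety is handling the weights: since the statement concerns $\sum C_j$ (unit weights), one cannot use large weights to pin jobs down, so the pinning must come purely from the release-date spacing and the counting argument $B/4 < a_i < B/2$. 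Since this construction is exactly the one in Lenstra et al.~\cite{lenstra1977}, I would simply cite it rather than reproduce the calculation in full.
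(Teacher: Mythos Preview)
The paper does not prove this theorem; it is quoted as a known result of Lenstra, Rinnooy Kan and Brucker~\cite{lenstra1977}. Your final decision to cite rather than reproduce the argument therefore matches the paper exactly.

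That said, the sketched construction has a real gap. With unit-length enforcers and unit weights, the block-structured schedule is \emph{not} forced to be optimal, so step~(2) fails. Take $m=2$, $B=10$, items $\{3,3,3,3,4,4\}$, and one enforcer with $p=1$, $r=10$: the block schedule $3,3,4\mid E\mid 3,3,4$ has $\sum C_j = 82$, but the schedule $3,3,3,3\mid E\mid 4,4$ (which delays the enforcer to time~$12$) has $\sum C_j = 81$. Your claim that ``delaying an enforcer pushes all later jobs'' is wrong in a no-idle schedule of fixed makespan: by SPT logic, pushing the short enforcer past longer item jobs can \emph{decrease} $\sum C_j$. Moreover, even among block-structured schedules the value of $\sum C_j$ depends on which triple goes in which block, so the target $Y$ is not well-defined from the 3-partition instance alone. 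The construction is therefore not ``exactly the one in Lenstra et al.''; the original proof uses a more delicate choice of parameters to force the structure. Since you ultimately defer to the citation this does not affect your conclusion, but the sketch as written would not go through.
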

The next results is a direct corollary.
\begin{proposition}\label{cor:s1_np}
$1|jrp,s=1,r_j|\sum C_j+c_\Q$ is NP-hard even if $K_0=K_1=0$.
\end{proposition}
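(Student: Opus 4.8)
The plan is to reduce the classical problem $1|r_j|\sum C_j$ --- NP-hard by the theorem of Lenstra et al.\ quoted above --- to our problem in the special case $K_0=K_1=0$. Given an instance of $1|r_j|\sum C_j$ whose jobs have processing times $p_j$ and release dates $r_j$, I would build an instance of $1|jrp,s=1,r_j|\sum C_j+c_\Q$ with exactly the same jobs, a single resource $R_1$, $R(j)=\{R_1\}$ for every $j$, and $K_0=K_1=0$. This construction is clearly polynomial. (This is, in effect, the first half of the proof of Theorem~\ref{thm:equiv}, the part that does not rely on $|\T|$ being constant.)

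Next I would argue that the two instances have the same optimal value. Since replenishing $R_1$ costs $K_0+K_1=0$, consider the replenishment structure $\Q$ that replenishes $R_1$ at every distinct release date $\tau\in\T$; then $c_\Q=0$, and every job $j$ is ready to be started at time $r_j$ in $\Q$, because its unique required resource $R_1$ is replenished at $r_j\in[r_j,r_j]$. Hence any schedule $S$ feasible for the $1|r_j|\sum C_j$ instance (i.e.\ $S_j\ge r_j$ with non-overlapping jobs) yields a feasible solution $(S,\Q)$ of the constructed instance of cost exactly $\sum_j C_j$. Conversely, in any feasible solution $(S',\Q')$ of the constructed instance, each job $j$ is ready to be started at $S'_j$, which in particular forces $S'_j\ge r_j$; thus $S'$ alone is a feasible schedule for the $1|r_j|\sum C_j$ instance, with value $\sum_j C_j \le \sum_j C_j + c_{\Q'}$. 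Combining the two directions, the optima coincide, so any algorithm solving $1|jrp,s=1,r_j|\sum C_j+c_\Q$ with $K_0=K_1=0$ also solves $1|r_j|\sum C_j$.

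There is essentially no hard step here: the entire point is that making all replenishments free renders the resource-availability constraint vacuous --- one may replenish at every release date at no cost, which by Observation~\ref{obs:release} is without loss of generality anyway --- so the problem collapses onto $1|r_j|\sum C_j$. The only thing to verify carefully is that the reduction preserves optimal values in both directions, and both inclusions above are immediate. NP-hardness of $1|jrp,s=1,r_j|\sum C_j+c_\Q$, even with $K_0=K_1=0$, then follows.
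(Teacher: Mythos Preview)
Your proof is correct and matches the paper's approach: the paper states this proposition as a direct corollary of the NP-hardness of $1|r_j|\sum C_j$ (Lenstra et al.), relying on precisely the observation you spell out --- that with $K_0=K_1=0$ one may replenish at every release date for free, so the JRP problem collapses onto $1|r_j|\sum C_j$. You have simply made explicit the one-line argument the paper leaves implicit (and correctly noted that only the easy direction of Theorem~\ref{thm:equiv} is needed, the one not requiring $|\T|$ to be constant).
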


We have another negative result when $p_j=1$ and  $s$ is not a constant:

\begin{theorem}\label{thm:pw1_np}
$1|jrp,p_j=1,r_j|\sum C_j+c_\Q$ is NP-hard even if $K_0=0$, and $K_i=1$, $i=1,\ldots,s$.
\end{theorem}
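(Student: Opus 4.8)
The plan is to reduce from a known strongly NP-hard problem whose structure naturally matches "unit-time jobs plus per-resource ordering costs $K_i=1$." A good candidate is a covering/partition-type problem where the ordering decisions encode the combinatorial choices. Since $p_j=1$ and $K_0=0$, each replenishment of resource $R_i$ costs exactly $1$, so the replenishment cost of a solution is simply the total number of $(R_i,\tau)$ pairs used; the scheduling part $\sum C_j$ then behaves like a tie-breaker that forces the jobs to start as early as their resources allow. I would aim to reduce from \textsc{3-SAT} or, more conveniently, from \textsc{Vertex Cover} / \textsc{Exact Cover by 3-Sets}, building one resource per "element to be covered" and structuring the release dates into a few well-separated blocks so that ordering resource $R_i$ at an early time block can serve many jobs at once (saving on $\sum C_j$) but costs $1$ per block used.

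Concretely, I would partition the time horizon into $T$ widely separated windows, each far enough apart (distance $\gg \sum_j p_j$) that what happens in one window does not interact with another; the number of jobs is chosen so the makespan inside each window is tiny compared to the inter-window gaps. In each window I place a cluster of unit jobs whose resource requirements encode the clauses/sets of the source instance. A job released at the start of window $k$ that requires resources $R(j)$ can be started promptly only if every $R_i\in R(j)$ has been replenished somewhere in $[r_j,\,S_j]$; the cheapest way to do this globally is to replenish a small "covering" set of resources once, early, and reuse those replenishments across windows. The target objective value $B$ is then set so that a solution of cost $\le B$ exists if and only if the source instance is a yes-instance: the replenishment budget forces the chosen resource set to be a cover of the required size, and the $\sum C_j$ term (with all $w_j=1$, i.e.\ the unweighted version the theorem demands) is minimized exactly when every job starts at the release date of its window, which is possible precisely when the cover is valid.

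The two directions: (i) a valid cover (or satisfying assignment) of the prescribed size yields a replenishment structure using that many resource-orders, with every job startable at its window's release date, giving a solution meeting $B$; (ii) conversely, any solution of cost $\le B$ cannot afford more than the prescribed number of resource-orders, and — because any delayed job pushes $\sum C_j$ above the slack built into $B$ — every job must be ready at its window's release date, which forces the ordered resources to cover all requirements, hence to encode a valid solution of the source problem. The reduction is clearly polynomial, and the source problem being strongly NP-hard (and all numbers here polynomially bounded) gives ordinary NP-hardness of $1|jrp,p_j=1,r_j|\sum C_j+c_\Q$.

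The main obstacle I anticipate is controlling the $\sum C_j$ term precisely enough that it cannot be traded off against the replenishment budget in unintended ways: I must ensure that (a) the per-job completion-time penalties for being "on time" are identical across all feasible orderings of the within-window jobs (so $\sum C_j$ at the "ideal" value is a fixed constant independent of which cover is chosen), and (b) any deviation — either an extra resource-order, or a single job delayed past its ideal start — changes the objective by strictly more than any possible saving elsewhere. Achieving (b) typically needs the inter-window gaps and the number of jobs requiring each resource to be chosen carefully (polynomially large weights encoded via job multiplicities, since weights must be $1$), so that delaying one resource's replenishment delays a provably large block of unit jobs. Verifying this separation of scales is the delicate, if routine, part of the argument.
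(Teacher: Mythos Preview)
Your proposal contains a genuine gap rooted in a misreading of the replenishment semantics. You correctly state that a job $j$ can start only if every resource in $R(j)$ is replenished in $[r_j,S_j]$, but then you write that ``the cheapest way to do this globally is to replenish a small `covering' set of resources once, early, and reuse those replenishments across windows.'' These two statements are incompatible: a replenishment made in window $k$ is invisible to any job whose release date lies in a later window $k'>k$, because that job's release date is strictly after the replenishment time. Hence there is no ``reuse across windows,'' and the mechanism you rely on to tie the replenishment budget to the size of a cover collapses. Without this, the sketch gives no link between feasible replenishment structures and yes-instances of the source problem. The proposal is also too schematic to evaluate further: no concrete instance is built, the target value $B$ is not specified, and the claimed separation-of-scales argument is asserted rather than carried out.

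For contrast, the paper's reduction is from \textsc{Max Clique} and uses only \emph{two} release dates, $0$ and $|V|-k$. Resources are edges; the first $|V|$ jobs correspond to vertices (released at $0$, each requiring the resources of its incident edges), and a block of $M$ heavy jobs released at $|V|-k$ require \emph{all} resources, forcing a full replenishment at $|V|-k$. The large $M$ makes any idle time before $|V|-k$ too costly, so exactly $|V|-k$ vertex-jobs must be startable at time $0$; the extra replenishment cost at time $0$ is precisely the number of edges incident to those $|V|-k$ vertices. That number is at most $|E|-\binom{k}{2}$ if and only if the remaining $k$ vertices span at least $\binom{k}{2}$ edges, i.e., form a $k$-clique. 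The point is that the trade-off is not ``which resources to cover'' but ``which vertex-jobs can be made cheap to release early''; the clique structure, not a covering structure, is what the per-resource unit costs encode.
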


\begin{proof}
We reduce the MAX CLIQUE problem to our scheduling problem.
\vskip 10pt
MAX CLIQUE: Given a graph $G=(V,E)$, and  $k\in\mathbb{Z}$. 
Does there exist a complete subgraph (clique) of $G$ with $k$ nodes?
\vskip 10pt

Consider an instance $I$ of MAX CLIQUE, from which we create an instance $I'$ of $1|jrp,p_j=1,r_j|\sum C_j+c_\Q$.
We define $|V|+M$ jobs (where $M$ is a sufficiently large number), and $s=|E|$ resources in $I'$.
\new{There is a bijection between  the edges in $E$ and the resources, and likewise, there is a bijection between the nodes in $V$ and the first $|V|$ jobs}.
The release date of these jobs is $r'=0$, and such a job $j$ requires a resource $r$ if and only if the node corresponding to job $j$ is an endpoint of the edge associated with resource $r$.
We define $M$ further jobs with a release date $r''=|V|-k$, which require every resource.

We claim that there is a solution for $I$, if and only if $I'$ admits a solution of objective function value of $\O=2|E|-\binom{k}{2}+\binom{M+|V|+1}{2}$.
If there is a $k$-clique $G'=(V',E')$ in $G$ then we introduce two replenishments: at $r'=0$ we replenish every resource that \new{corresponds} to an edge in $E\setminus E'$, and at $r''$, we replenish every resource. 
Observe that the replenishment cost is $2|E|-|E'|=2|E|-\binom{k}{2}$.
Then, consider the following schedule $S$: schedule the jobs that correspond \new{to} the nodes of $V\setminus V'$ and \new{have} a release date 0 from $t=0$ in arbitrary order, after that schedule the remaining jobs in arbitrary order.
It is easy to see that $S$ is feasible since the number of the jobs scheduled in the first step is $|V|-k=r''$.   
The total completion time of the jobs is $\binom{|V|+M+1}{2}$, thus the total cost is $\O$.

Now suppose that there is a solution of $1|jrp,p_j=1,r_j|\sum C_j+c_\Q$ with an objective function value of $\O$.
Observe that each resource must be replenished at $r''$, since the jobs with a release date $r''$ require every resource. 
The cost of this replenishment is $|E|$.
If there is a gap in the schedule before $r''$\new{,} then the total completion time of the jobs is at least $\binom{|V|+M+2}{2}-(|V|-k)=\binom{|V|+M+1}{2}+M+1+k$, which is greater than $\O$ for $M$ sufficiently large.
Hence, there is no gap in the schedule before $r''$.
It means the resources replenished at $r'=0$ are sufficient for $r''=|V|-k$ jobs.
The total completion time of a schedule without any gap is $\binom{|V|+M+1}{2}$, thus  at most $|E|-\binom{k}{2}$ resources can be replenished at $r'=0$, otherwise, the objective function value would exceed $\O$.
This means at least $|V|-k$ nodes have at most $|E|-\binom{k}{2}$ incident edges.
Consider the  remaining at most $k$ nodes of the graph. 
According to the previous observation, there are at least $\binom{k}{2}$ edges among them, which means this subgraph is a $k$-clique.
\end{proof}

\section{Hardness of $1|jrp, r_j| \Fmax + c_\Q$}\label{sec:hardness_fmax}
\new{When the scheduling objective is $\Fmax$}, the problem with two resources is NP-hard, even in a very special case:

\begin{theorem}\label{thm:offline_fmax_np}
If $s=2$, then the problem $1|jrp,r_j|\Fmax+c_{\Q}$ is NP-hard even if $K_0=K_1=0$ and every job requires only one resource.
\end{theorem}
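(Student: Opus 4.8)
\emph{Proof plan.}
The plan is to reduce from \textsc{Partition}; since only NP-hardness (not strong NP-hardness) is claimed, a weakly NP-hard source problem is enough, so let $a_1,\dots,a_n$ be positive integers with $\sum_i a_i=2A$ and ask whether some $S\subseteq\{1,\dots,n\}$ satisfies $\sum_{i\in S}a_i=A$. In the constructed instance $K_0=K_1=0$ and $K_2$ is a suitably chosen positive number; thus jobs requiring $R_1$ are ordinary jobs (replenishing $R_1$ costs nothing), while a job requiring $R_2$ can start only after a \emph{paid} $R_2$-replenishment placed after its release date, and by Observation~\ref{obs:release} every replenishment may be assumed to occur at a release date. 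Every job requires exactly one resource. The idea is to use a set of $R_2$-``anchor'' jobs with very tight flow-time requirements: in any budget-respecting solution the anchors must run within a tiny slack of their release dates and the $R_2$-replenishments must sit exactly at the anchor release dates (using more of them would make $c_\Q$ too large, using fewer would delay some $R_2$-job by a whole ``window''), which freezes the time horizon into a fixed system of windows. The $n$ ordinary $R_1$-jobs carry the \textsc{Partition} numbers as processing times (scaled up by a polynomial factor) and are released so that each is forced into one of two adjacent windows of free capacity $A$ each; since the loads sum to $2A$, a budget-respecting placement exists iff the items split evenly, i.e.\ iff the \textsc{Partition} instance is a yes-instance.

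Concretely I would proceed in four steps. (1) Fix the numeric parameters with large multiplicative gaps between the relevant scales (window length, the magnitude of the $a_i$, the permitted $\Fmax$-slack, $1$, and $K_2$), exhibit the ``intended'' solution obtained from a \textsc{Partition} solution, and let $B$ be its cost $\Fmax+c_\Q$. (2) Prove that any solution of cost at most $B$ must replenish $R_2$ precisely at the anchor release dates and run the anchor jobs within the allowed slack, which pins down the window structure. (3) Given this structure, show that the only remaining freedom is the assignment of each $R_1$-job to one of its two feasible windows, and that $\Fmax\le B-c_\Q$ holds iff neither window is overloaded; with capacities $A$ and $A$ and loads summing to $2A$ this is exactly solvability of \textsc{Partition}. (4) Conclude the equivalence and check that all numbers produced have polynomially many bits, which is fine because the source problem is only weakly NP-hard, so polynomial scaling of the $a_i$ suffices.

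The main obstacle is step~(2): ruling out ``cheating'' schedules. One must show that every deviation from the intended replenishment pattern --- delaying an anchor, reordering the anchors, inserting idle time to shift work around, letting an $R_1$-job spill into a further window, or using an altogether different set of replenishment times --- strictly raises $\Fmax+c_\Q$ above $B$. The dangerous case is ``postpone an anchor and process the floating $R_1$-jobs first''; to block it one apparently needs the anchors chained so that postponing one cascades into the next, together with a careful quantitative choice of the gaps between window length, the $a_i$, the slack and $K_2$ so that the intended solution is optimal precisely in the yes-case. This calibration is the delicate heart of the argument. (Note that $K_2>0$ is essential: for $K_2=0$ replenishment is free and the problem collapses to $1|r_j|\Fmax$, which is polynomial, in line with the one-resource result of Theorem~\ref{thm:offline_fmax_dyn}.)
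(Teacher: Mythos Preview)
Your plan is correct and is exactly the paper's approach: reduce from \textsc{Partition}, make $R_1$ free ($K_0=K_1=0$), make $R_2$ expensive, let the partition items become $R_1$-jobs, and use $R_2$-jobs as anchors that carve the timeline into two windows of capacity $A$ each.

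Where you overshoot is in the anticipated complexity of the construction. The paper needs only \emph{two} $R_2$-jobs, both of unit length, with release dates $0$ and $2B$ (where $B=A$ in your notation); all $n$ partition jobs have the common release date $B$ and processing times $a_j$. With $K_2=B^2$ and target value $\mathcal{O}=B^2+2B+1$, two $R_2$-replenishments already cost $2B^2>\mathcal{O}$, so exactly one is affordable. Since the late $R_2$-job forces the replenishment to be at least $2B$, while the early $R_2$-job must complete by $2B+1$, the single replenishment is pinned at $2B$ and the early $R_2$-job occupies $[2B,2B+1]$. This unit-time job is the lone separator: it splits $[B,3B+1]$ into two length-$B$ windows that the partition jobs must fill exactly. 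No chain of anchors, no cascading argument, and no multi-scale calibration is required; your step~(2) collapses to the single observation ``two $R_2$-replenishments exceed the budget''. Your final parenthetical remark about the necessity of $K_2>0$ is exactly right.
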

\begin{proof}
We will reduce the NP-hard PARTITION problem to our scheduling problem.
\vskip 10pt
PARTITION: Given a set $A$ of $n$ items with positive integer sizes \new{$a_1,a_2,\ldots,$ $a_{n}$} such that  $B=\sum_{j\in A}a_j /2$ is integral.
Is there a subset $A_1$ of $A$ such that $\sum_{j\in A_1}a_j=B$?
\vskip 10pt
Consider an instance $I$ of PARTITION, and \new{we define the corresponding instance $I'$ of  $1|r_j,jrp|\Fmax+c_{\Q}$} as follows.
There are $n+2$ jobs: $n$ \textit{partition jobs}, such that $p_j=a_j$, $r_j=B$, and $R(j)=R_1$ ($j=1,\ldots,n$).
The other two jobs require resource $R_2$, they have unit processing times, $r_{n+1}:=0$, and $r_{n+2}:=2B$.
Further on, let $K_0 = K_1 := 0$, and $K_2:=B^2$.

We prove that PARTITION instance $I$ has a solution if and only if $I'$ admits a solution with an objective function value of at most $\O:=B^2+2B+1$.

If $I$ has a solution, then consider the following solution $(S,\Q)$ of $I'$.
There are two replenishments: $\Q=((\{R_1\},B),(\{R_2\},2B))$  thus $c_\Q=B^2$. 
Let the partition jobs correspond to the items in $A_1$ be scheduled in $S$ in arbitrary order in $[B,2B]$, while the remaining partition jobs in $[2B+1,3B+1]$. 
Let $S_{n+1}:=2B$, and $S_{n+2}:=3B+1$ (see Figure~\ref{fig:fmax_np} for illustration), therefore $F_j\leq 2B+1$, $j=1,\ldots,n+2$.
Observe that $(S,\Q)$ is feasible, and its objective function value is exactly $\O$.

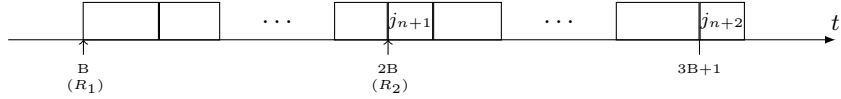
\begin{figure}
\begin{tikzpicture}
\def\ox{0} 
\def\oy{0} 
\def\ui{1.0}
\def\uii{5.05}
\def\uiii{9.19}
\coordinate(o) at (\ox,\oy); 
\coordinate(u1) at (\ui,\oy);
\coordinate(u2) at (\uii,\oy);
\coordinate(u3) at (\uiii,\oy);

\tikzstyle{mystyle}=[draw, minimum height=0.5cm,rectangle, inner sep=0pt,font=\scriptsize]

\def\tl{11.0} 
\def\oyi{0}
\draw [-latex](\ox,\oyi) node[above left]{} -- (\ox+\tl,\oyi) node[above,font=\small]{$t$};

\draw[<-] (u1) -- node[below right= 0.3cm and -0.35 cm]{\tiny ($R_1$)} ($(u1)-(0,0.2)$) node[below] {\tiny B};
\draw[<-] (u2) --node[below right= 0.3cm and -0.35 cm]{\tiny ($R_2$)} ($(u2)-(0,0.2)$) node[below] {\tiny 2B};
\draw[-] (u3) -- ($(u3)-(0,0.2)$) node[below] {\tiny 3B+1};

\def\pi{0.7}
\node(b1) [above right=-0.01cm and -0.01cm of u1,mystyle, minimum width=1.0 cm]{};
\node(b3) [right=0cm of b1,mystyle, minimum width=0.8cm]{};
\node(b3) [right=0cm of b3, minimum width=1.5 cm]{$\ldots$};
\node(b3) [right=0cm of b3,mystyle, minimum width=0.7cm]{};
\node(b3) [right=0cm of b3,mystyle, minimum width=0.5cm]{$j_{n+1}$};
\node(b3) [right=0cm of b3,mystyle, minimum width=0.9cm]{};
\node(b3) [right=0cm of b3, minimum width=1.5 cm]{$\ldots$};
\node(b3) [right=0cm of b3,mystyle, minimum width=1.1cm]{};
\node(b3) [right=0cm of b3,mystyle, minimum width=0.5cm]{$j_{n+2}$};
\end{tikzpicture}
\caption{Illustration of the schedule. The arrows below the axis denote the replenishments (the replenished resources are in a bracket).}\label{fig:fmax_np}
\end{figure}

Now suppose that there is a solution $(S,\Q)$ of $1|jrp,r_j|\Fmax+c_{\Q}$ with $v(S,\Q)\leq\O$.
Observe that there is only one replenishment from $R_2$, since two replenishments would have higher cost than $\O$.
It means the flow time of every job is at most $2B+1$.
Since  $j_{n+2}$ requires $R_2$, the replenishment time of $R_2$ cannot be earlier than $r_{n+2}=2B$.
However, $j_{n+1}$ has to be completed until $2B+1$, which means it requires $R_2$ not later than $2B$, therefore the replenishment of $R_2$ is at $2B$, and $S_{n+1}=2B$.
Since the partition jobs has a flow time of at most $2B+1$, thus these jobs are scheduled in the interval $[B,3B+1]$.  
Due to the position of $j_{n+1}$, the machine cannot be idle in this interval, which means the partition jobs are scheduled in $[B,2B]$, and in $[2B+1,3B+1]$ without idle times ($j_{n+1}$ separates the two intervals). Hence, we get a solution for the PARTITION instance $I$. 
\end{proof}

\section{Polynomial time algorithms for min-sum criteria}\label{sec:dyn_prog}
In this section we describe polynomial time algorithms, based on dynamic programming, for solving special cases of $1|jrp, r_j|\sum w_j C_j + c_Q$. Again, the same methods work for the $\sum w_j F_j + c_Q$ objective function.
Throughout this section we assume that each job $j$ requires a single resource only, and with a slight abuse of notation, let $R(j)$ be this resource.
\new{Recall the definition of $\T$ in the end of Section~\ref{sec:prob_statement}.}

\begin{theorem}\label{thm:p1_dyn}
$1|jrp, s=const,p_j=1,r_j|\sum w_jC_j+c_\Q$ is solvable in polynomial time.
\end{theorem}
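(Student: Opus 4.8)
The plan is to solve $1|jrp, s=const, p_j=1, r_j|\sum w_j C_j + c_\Q$ by a dynamic program over ``blocks'' of consecutive unit slots, exploiting the fact (Observation~\ref{obs:release}) that all replenishments may be assumed to occur at release dates $\tau_\ell\in\T$. Since the machine processes unit-time jobs, we may assume all starting times are integral and there are at most $n + |\T|$ relevant time slots; in fact, in any reasonable schedule the machine runs jobs consecutively within ``busy intervals'' separated only by forced idle time caused by waiting for a replenishment. The first step is to set up the combinatorial structure: for each pair $(\tau_\ell, \R')$ with $\tau_\ell\in\T$ and $\R'\subseteq\R$ (there are only $|\T|\cdot 2^{s}$ many, polynomial since $s$ is constant), decide whether resources $\R'$ are replenished at $\tau_\ell$. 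A replenishment structure $\Q$ is thus encoded as a function from $\T$ to $2^\R$; there are $2^{s|\T|}$ of these, which is too many to enumerate directly, so the DP must build $\Q$ incrementally while simultaneously scheduling.

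Here is the DP I would use. Process the times $\tau_1 < \tau_2 < \cdots < \tau_{|\T|}$ left to right. The state records: (a) the index $\ell$ of the current release date; (b) for each resource $R_i$, the most recent time $\le \tau_\ell$ at which $R_i$ was replenished (so we know which jobs released so far are already ``ready''); (c) the number of unit jobs already scheduled (equivalently the current ``machine time'' $= $ number of committed slots, since jobs are unit length and we never leave an avoidable gap). Actually the key simplification: once we fix $\Q$, the set of jobs ``ready by time $t$'' is determined, and the optimal schedule of unit jobs subject to these ``effective release dates'' $r'_j$ and minimizing $\sum w_j C_j$ is obtained greedily — at each integral slot, among available jobs schedule the one of largest weight (a standard exchange argument for unit jobs on one machine with release dates, analogous to the $1|r_j, p_j=1|\sum w_j C_j$ result). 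So the DP need only enumerate $\Q$ implicitly: moving from $\tau_\ell$ to $\tau_{\ell+1}$, we choose which subset $\R'\subseteq\R$ to replenish at $\tau_{\ell+1}$ (paying $K_0 + \sum_{R_i\in\R'}K_i$), then ``flush'' the greedy schedule over the slots in $[\tau_\ell, \tau_{\ell+1})$ using the currently-ready jobs, and update the state. To make the state space polynomial, note that the per-resource ``last replenishment time'' only matters through which jobs it makes ready; and the relevant statistic for the greedy continuation is just how many unit jobs remain unscheduled in each ``class'' — but since we take the globally-largest-weight available job, what we really need to carry is the multiset of not-yet-scheduled, already-released jobs, which could be large. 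I would instead carry only the current machine-completion-time (an integer in $[0, n+\tau_{|\T|}]$) and rely on the observation that the greedy choice is consistent across subproblems, letting us compute the scheduling contribution of each interval $[\tau_\ell,\tau_{\ell+1})$ locally once the ``ready set'' is known from $\Q$ restricted to $\tau_1,\dots,\tau_\ell$. The cleanest formulation: define $F(\ell, \vec{m})$ where $\vec m = (m_1,\dots,m_s)$ gives, for each resource, the index into $\T$ of its last replenishment ($m_i \le \ell$, or $0$ for ``never''); there are at most $(|\T|+1)^s$ such vectors, polynomial for constant $s$. The value $F(\ell,\vec m)$ is the minimum of (replenishment cost incurred so far) $+$ ($\sum w_j C_j$ contribution of jobs that were scheduled in slots before $\tau_\ell$, scheduled greedily), over all $\Q$ consistent with $\vec m$. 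The transition from $F(\ell,\vec m)$ picks the new subset to replenish at $\tau_{\ell+1}$, updating $\vec m$, and adds the greedy-scheduling cost of the interval $[\tau_\ell, \tau_{\ell+1})$.

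The one genuine subtlety — and the step I expect to be the main obstacle — is making the ``scheduling contribution of an interval'' well-defined as a local quantity, because the greedy schedule's behavior in $[\tau_\ell, \tau_{\ell+1})$ depends on exactly which jobs are still waiting (not just their count), and that backlog depends on the whole prefix of $\Q$, not only on $\vec m$. To handle this I would prove a structural lemma: in an optimal solution, within each maximal busy interval of the machine, the jobs are sequenced in non-increasing weight order among those simultaneously available, and crucially that the set of jobs processed in $[\tau_\ell,\tau_{\ell+1})$ is determined by $\vec m$ together with the number of slots available in that interval and a ``carry'' count of leftover ready jobs — and that this carry count is itself a function of $\vec m$ and $\ell$ (since the number of jobs made ready by $\tau_\ell$ is $|\{j : r_j \le \tau_{m_{R(j)}}\}|$, which depends only on $\vec m$, and the number of slots consumed before $\tau_\ell$ is likewise determined). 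Once this is established, $F(\ell,\vec m)$ is well-defined, the transition is $\sum_{\R'\subseteq\R}$ many choices (constant), and the table has size $|\T|\cdot (|\T|+1)^s = $ poly$(n)$, giving overall polynomial running time; the final answer is $\min_{\vec m} [F(|\T|+1, \vec m)$ with the understanding that $\tau_{|\T|+1}$ is the artificial horizon defined in the preliminaries, so all $n$ jobs are forced to be scheduled$]$. I would also remark that the $\sum w_j F_j$ variant follows since the objectives differ by the constant $\sum_j w_j r_j$, and that the same DP specializes to give Theorem~\ref{thm:w1_dyn} after replacing the unit length by $p$ and adjusting slot granularity.
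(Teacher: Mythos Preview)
Your overall plan—dynamic programming over release-date layers, choosing at each layer which resources to replenish and then greedily scheduling the available jobs by weight—is exactly the paper's approach. The gap is in your state $(\ell,\vec m)$: the structural lemma you intend to prove, that the carry count of leftover ready jobs (equivalently, the number of slots consumed before $\tau_\ell$) is determined by $(\ell,\vec m)$ alone, is false. A one-resource counterexample: take $\T=\{0,2,4\}$ with three unit-weight jobs released at $0,2,4$, all requiring $R_1$. The two replenishment prefixes ``replenish at $0$ and $4$'' and ``replenish only at $4$'' both reach your state $(\ell{=}3,\,m_1{=}3)$, but in the first the greedy has already scheduled $j_1$ in $[0,1)$ (one slot consumed, backlog $\{j_2,j_3\}$), while in the second nothing is ready before time $4$ (zero slots consumed, backlog $\{j_1,j_2,j_3\}$). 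Your $F(\ell,\vec m)$ collapses these and cannot compute a correct continuation.

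The paper repairs exactly this by enlarging the state to carry $\alpha_i$, the number of jobs already scheduled from $\J_i$, for each $i=1,\dots,s$ (plus replenishment counts $\gamma_i,\delta$ for cost bookkeeping), and by storing the actual partial schedule $S^N$ with each state so that the unscheduled set is explicitly available when forming the greedy extension. The resulting state space is $O(n^{3s+2})$, still polynomial for constant $s$. Your earlier instinct to carry ``the number of unit jobs already scheduled'' was on the right track; what is needed is the per-resource counts $\vec\alpha$, not an argument that they are redundant.
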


\begin{proof}
We describe a dynamic program for computing an optimal solution of the problem.
The states of the dynamic program are arranged in $|\T|+1$ layers, $\States(\tau_1), \ldots, \States(\tau_{|\T|+1})$, where $\States(\tau_k)$ is the \textit{$k^{th}$ layer} ($k=1,\ldots,|\T|+1$). 
A state $N \in \States(\tau_k)$ is a tuple encoding some properties of a partial solution of the problem. Each state has  has a corresponding partial schedule $S^N$ such that the last job scheduled completes not later than $\tau_k$, and also a replenishment structure $\Q^N$ such that \new{replenishments occur only at $\{\tau_1,\ldots,\tau_{k-1}\}$}.
 
\new{More formally, a state $N$ from $\States(\tau_k)$ is a tuple $[\tau_k;\alpha_1,\ldots,\alpha_s;\beta_1,\ldots,\beta_s;$ $ \gamma_1,\ldots,\gamma_s; \delta]$,  with the following definitions of the variables:
\begin{itemize}
\item[]$\alpha_i$ -- number of the jobs from $\J_i$ (jobs that require resource $R_i$) that are scheduled in $S^N$, 
\item[]$\beta_i$ -- time point of the last replenishment from $R_i$ in $\Q^N$ (i.e., $\beta_i$ is the largest time point $\tau_\ell$ such that $R_i\in\R_\ell$, and $(\R_\ell, \tau_\ell) \in \Q^N$, or $'-'$ if there is no such replenishment in $\Q^N$),
\item[]$\gamma_i$ -- number of the replenishments from $R_i$ in $\Q^N$ (the number of $(\R_\ell, \tau_\ell)$ pairs in $\Q^N$ such that $R_i\in\R_\ell$),
\item[]$\delta$ -- total number of replenishments in $\Q^N$, i.e., the number of $(\R_\ell, \tau_\ell)$ pairs in $\Q^N$ such that $\R_\ell \neq \emptyset$. 
\end{itemize} 

See Figure~\ref{fig:dyn_prog} for an easy example.
}
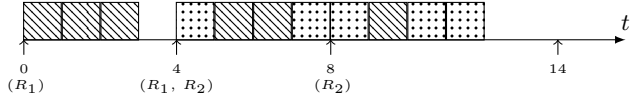
\begin{figure}
\new{
\begin{tikzpicture}
\def\ox{0} 
\def\oy{0} 
\def\ui{0.0}
\def\uii{2.03}
\def\uiiv{4.08}
\def\uiii{7.1}
\coordinate(o) at (\ox,\oy); 
\coordinate(u1) at (\ui,\oy);
\coordinate(u2) at (\uii,\oy);
\coordinate(u25) at (\uiiv,\oy);
\coordinate(u3) at (\uiii,\oy);

\tikzstyle{mystyle}=[draw, minimum height=0.5cm,rectangle, inner sep=0pt,font=\scriptsize,pattern=dots]
\tikzstyle{mystyle2}=[draw, minimum height=0.5cm,rectangle, inner sep=0pt,font=\scriptsize,pattern=north west lines]

\def\tl{8.0} 
\def\oyi{0}
\draw [-latex](\ox,\oyi) node[above left]{} -- (\ox+\tl,\oyi) node[above,font=\small]{$t$};

\draw[<-] (u1) -- node[below right= 0.3cm and -0.35 cm]{\tiny ($R_1$)} ($(u1)-(0,0.2)$) node[below] {\tiny $0$};
\draw[<-] (u2) --node[below=0.3cm]{\tiny ($R_1,\,R_2$)} ($(u2)-(0,0.2)$) node[below] {\tiny $4$};
\draw[<-] (u25) --node[below right= 0.3cm and -0.35 cm]{\tiny ($R_2$)} ($(u25)-(0,0.2)$) node[below] {\tiny $8$};
\draw[<-] (u3) -- ($(u3)-(0,0.2)$) node[below] {\tiny $14$};

\def\pi{0.5}
\node(b1) [above right=-0.01cm and -0.01cm of u1,mystyle2, minimum width=\pi cm]{};
\node(b3) [right=0cm of b1,mystyle2, minimum width=\pi cm]{};
\node(b3) [right=0cm of b3,mystyle2, minimum width=\pi cm]{};

\node(b1) [above right=-0.01cm and -0.01cm of u2,mystyle, minimum width=\pi cm]{};
\node(b3) [right=0cm of b1,mystyle2, minimum width=\pi cm]{};
\node(b3) [right=0cm of b3,mystyle2, minimum width=\pi cm]{};
\node(b3) [right=0cm of b3,mystyle, minimum width=\pi cm]{};

\node(b1) [above right=-0.01cm and -0.01cm of u25,mystyle, minimum width=\pi cm]{};
\node(b3) [right=0cm of b1,mystyle2, minimum width=\pi cm]{};
\node(b3) [right=0cm of b3,mystyle, minimum width=\pi cm]{};
\node(b3) [right=0cm of b3,mystyle, minimum width=\pi cm]{};
\end{tikzpicture}
}
\caption{\new{A corresponding partial schedule and replenishment structure of state $[14;6,5;4,8;2,2;3]$. Jobs of $\J_1$ are hatched, while the jobs of $\J_2$ are dotted.}}\label{fig:dyn_prog}
\end{figure}

The schedule $S^N$ associated with $N$ is specified by $n$  numbers that describe the completion time of the jobs, or $-1$, if the corresponding job is not scheduled in $S^N$.

\new{Let $\J_i(\bar{\beta_i})$ be the subset of those jobs from $\J_i$ that have a release date at most $\bar{\beta}_i$.}

The first layer of the dynamic program has only one state $N_0$ where all parameters are 0 or $'-'$, and the corresponding schedule and replenishment structure is empty.
The algorithm builds the states of $\States(\tau_{k+1})$ from those of $\States(\tau_{k})$, and each state has a 'parent state' in the previous layer (except $N_0$ in $\States(\tau_1)$).
We compute the states by Algorithm~\ref{alg:dyn_prog1}. \new{When processing state $N \in \States(\tau_k)$, the algorithm goes through all the subsets of the resources, and for each subset it generates a tuple $\bar{N}$ by selecting jobs unscheduled in $S^N$.
The number of jobs selected is at most $\tau_{k+1}-\tau_k$, so that they can fit into the interval $[\tau_k, \tau_{k+1}]$. From each set $\J_i$ we consider only  those unscheduled jobs that are ready to be started while taking into account the latest replenishment time of resource $R_i$, see step~\ref{step:choosejobs}. Then, $S^N$ is extended with the selected jobs to obtain a new schedule $S^{new}$, and also a new replenishment structure $Q^{new}$ is obtained in step~\ref{step:construct_schedule}.
There are two cases to consider:
\begin{itemize}
\item If $\bar{N}$ is a new state, i.e., not generated before, then it is unconditionally added to $L(\tau_{k+1})$, see step~\ref{step:new_state}.
\item If $\bar{N}$ is not a new state, then we compare the schedule already associated with it and $S^{new}$. If $S^{new}$ is the better, then $(S^{new},Q^{new})$ replaces $(S^{\bar{N}}, \Q^{\bar{N}})$, and the parent node of $\bar{N}$ is updated to $N$, see step~\ref{step:dyn_last}.
\end{itemize}
}

\begin{varalgorithm}{DynProg{\_}wjCj}
\begin{small}
\caption{}\label{alg:dyn_prog1}
\new{
Initialization: $k:=1$, $N_0:=[\tau_1;0,\ldots,0;-,\ldots,-;0,\ldots,0;0]$, $S^{N_0}$ and $\Q^{N_0}$ are empty sets, $\States(\tau_1) :=\{N_0\}$
\begin{enumerate}
\item For each state $N:=[\tau_k;\alpha_1,\ldots,\alpha_s;\beta_1,\ldots,\beta_s; \gamma_1,\ldots,\gamma_s; \delta]\in \States(\tau_k)$ and for each subset $\bar{\mathcal{R}}$ of $\{R_1,\ldots,R_s\}$ do steps \ref{step:dyn1}-\ref{step:dyn_last}:
\label{step:choose_state_and_R}
\begin{enumerate}[a.]
\item For each resource $R_i$, $i \in \{1,\ldots,s\}$, determine $\bar{\beta}_i$ and $\bar{\gamma}_i$ as follows:
if $R_i\in \bar{\mathcal{R}}$, then $\bar{\beta}_i:=\tau_{k}$ and $\bar{\gamma}_i:=\gamma_i+1$;
otherwise, $\bar{\beta}_i:=\beta_i$ and $\bar{\gamma}_i:=\gamma_i$. \label{step:dyn1} 
\item If $\bar{\mathcal{R}}=\emptyset$, then let $\bar{\delta}:=\delta$, otherwise, let $\bar{\delta}:=\delta+1$.
\label{step:dyn2}
\item 
Determine the sets of jobs $\J_i(\bar{\beta}_i)$ ($i=1,\ldots,s$).
Let $\J^k\subseteq \cup_{i=1}^s \J_i(\bar{\beta}_i)$ be  the $\min\{\tau_{k+1}-\tau_k,\sum_{i=1}^s (|\J_i(\bar{\beta}_i)|- \alpha_i)\}$ unscheduled jobs with the largest weights. \label{step:choosejobs}
\item
Let $\bar{\alpha}_i:=\alpha_i+|\{j\in \J^k: R(j)=i\}|$ for $i=1,\ldots,s$. \label{step:determine_alpha_tau}
Let $\bar{\tau} := \tau_{k+1}$. 
\item Let $\bar{N}:=[\bar{\tau};\bar{\alpha}_1,\ldots,\bar{\alpha}_s;\bar{\beta}_1,\ldots,\bar{\beta}_s;\bar{\gamma}_1,\ldots,\bar{\gamma}_s;\bar{\delta}]$. \label{step:define_N_bar}
\item 
Construct the schedule $S^{new}$ from $S^N$ by scheduling the jobs of $\J^k$ in non-increasing $w_j$ order from $\tau_k$ without idle times. 
Analogously, $\Q^{new}$ is obtained from $\Q^N$ by extending it with $(\bar{\R}, \bar{\tau})$. \label{step:construct_schedule}
\item If 
$\bar{N}\notin \States(\bar{\tau})$, then add $\bar{N}$ to $\States(\bar{\tau})$, set $N$ to be its parent node, and associate $(S^{new}, \Q^{new})$ with  $\bar{N}$.\label{step:new_state}
\item If $\bar{N}\in \States(\bar{\tau})$ and 
the total weighted completion time of $S^{new}$ is smaller than that of $S^{\bar{N}}$, then replace  $S^{\bar{N}}$ with $S^{new}$, $\Q^{\bar{N}}$ with $\Q^{new}$, and set the parent node of $\bar{N}$ to $N$.
\label{step:dyn_last}
\end{enumerate}
\item If $k\leq|\T|$ then let $k:=k+1$ and go to the first step. 
\item  Determine a state $\hat{N} \in \States(\tau_{|\T|+1})$ with the smallest $\sum_{j \in \J} w_j C_j(S^N) + K_0 \delta + \sum_{i=1}^s K_i \gamma_i$ value and output the corresponding solution. \label{step:choose_opt_sol}
\end{enumerate}}
\end{small}
\end{varalgorithm}

Observe that the number of the states is at most $n^{3s+2}$, because there are at most $n$ different options in each coordinate of a state.
This observation shows that the above procedure is polynomial in the size of the input parameters of the problem.

\begin{lemma} \label{lem:feas}
Let $N=[\tau_{k};\alpha_1,\ldots,\alpha_s;\beta_1,\ldots,\beta_s; \gamma_1,\ldots,\gamma_s; \delta]\in \States(\tau_k)$ be an arbitrary state.
\begin{enumerate}[(i)] 
\item 
Each job in $S^N$ completes not later than $\tau_k$.\label{lem:feas_cj} 
\item 
The jobs do not overlap in $S^N$.\label{lem:feas_overlap}
\item 
If $\beta_i$ is a number (i.e., $\beta_i\neq -$), then there is a replenishment from $R_i$ at $\beta_i$ in $\Q^N$.\label{lem:feas_replen}
\item 
There are $\delta$ replenishments in $\Q^N$ of which there are $\gamma_i$ replenishments from $R_i$ ($i=1,\ldots,s$).\label{lem:feas_no_replen}
\item 
If each $\alpha_i=|\mathcal{J}_i|$ then the full schedule $S^N$ is feasible for $\Q^N$.\label{lem:feas_alpha}
\item The cost of the replenishments in $\Q^N$ can be determined from $N$ in polynomial time.\label{lem:feas_Qcost}
\end{enumerate}
\end{lemma}
\begin{proof}
\begin{enumerate}[(i)]
\item 
Follows from  the observation  that $|\J^k|\leq \tau_{k+1}-\tau_k$ in Step~\ref{step:choosejobs}.
\item Follows from (\ref{lem:feas_cj}).
\item Follows from the definition of $\beta_i$.
\item 
Follows from Algorithm \ref{alg:dyn_prog1}.
\item 
Each job is scheduled because $\alpha_i=|\J_i|$ and the jobs do not overlap due to (\ref{lem:feas_overlap}).
Observe that Algorithm \ref{alg:dyn_prog1} cannot schedule a job before its release date. 
If it chooses a job $j\in\J_i$ to process in $[\tau_k,\tau_{k+1}]$ at step \new{\ref{step:choosejobs}}, it has a release date at most $\bar{\beta}_i$, since \new{$j\in\J_i(\bar{\beta}_i)$}, thus the statement follows from (\ref{lem:feas_replen}) \new{and the definition of $\bar{\beta}_i$ in step \ref{step:dyn1}}.
\item
Follows from (\ref{lem:feas_no_replen}). 
The cost of $\Q^N$ is $K_0\cdot \delta + \sum_{i=1}^s K_i\cdot\gamma_i$.
\end{enumerate}
\end{proof}

Consider the set $\J^k$ at Step~\ref{step:choosejobs}, when the algorithm creates an arbitrary state $\bar{N}$ from its parent state $N$. 
\begin{claim}
If $|\J^k|< \tau_{k+1}-\tau_k$, then $\J^k$ is the set of jobs \new{ready to be started} at $\tau_k$ with respect to $\Q^{new}$ that are unscheduled in $S^{N}$.
Otherwise, it is the $\tau_{k+1}-\tau_k$ elements of this set with the largest weights. \label{clm:jk}
\end{claim}
\begin{proof}
Follows from the definitions.
\end{proof}

After determining the states,  Algorithm~\ref{alg:dyn_prog1} chooses a state $\hat{N}\in \States(\tau_{|\T|+1})$ representing a feasible solution of smallest objective function value, and later we prove that $(\Q^{\hat{N}},S^{\hat{N}})$ is an optimal solution.
For each state $N \in \States(\tau_{|\T|+1})$, we decide whether $(\Q^N,S^N)$ is feasible or not (by Lemma~\ref{lem:feas}~(\ref{lem:feas_alpha})), calculate the cost of $\Q^N$ (by Lemma~\ref{lem:feas}~(\ref{lem:feas_Qcost})), and the total weighted completion time of $S^N$.
We define $\hat{N}$ as the state such that $(\Q^{\hat{N}},S^{\hat{N}})$ is feasible, and it has  best objective function value among that of the states of the last layer with a feasible solution.
Observe that this procedure requires polynomial time.

%
%
%
%

Now we prove that there is a state of the last layer such that the corresponding solution of the problem is optimal, thus the solution $(\Q^{\hat{N}},S^{\hat{N}})$ calculated by our method must be optimal.
Let $(\Q^\star,S^\star)$ be an optimal solution.
Notice that the jobs scheduled in the interval $[\tau_k,\tau_{k+1}]$ in $S^\star$ are in non-increasing $w_j$ order, and replenishments occur only at the release dates of the jobs.
Recall that $N_0$ is the only state of $\States(\tau_{1})$.
Let $N_k\in \States(\tau_{k+1})$ be the state that we get from $N_{k-1} \in \States(\tau_k)$, when $\bar{\R}$ was set to \new{$\R^\star_k$ (the set of those resources replenished at $\tau_k$ in $\Q^\star$)} in Algorithm \ref{alg:dyn_prog1} ($k=1,\ldots,|\T|$).
Due to the feasibility of the optimal solution, there is a replenishment from $R_i$ not earlier than $\max_{j\in \J_i}r_j$, thus Algorithm \ref{alg:dyn_prog1} will schedule every job in $S^{N_{|\T|}}$, i.e., each $\alpha_i=|\J_i|$ in $N_{|\T|}$. Therefore, $(\Q^{N_{|\T|}}, S^{N_{|\T|}})$  is feasible by Lemma~\ref{lem:feas}~(\ref{lem:feas_alpha}).

We prove that $(\Q^{N_{|\T|}}, S^{N_{|\T|}})$ is an optimal solution.
Observe that $\Q^\star\equiv \Q^{N_{|\T|}}$ (i.e., the $\Q^\star$ and $\Q^{N_{|\T|}}$ consists of the same pairs of resource subsets and time points).
It \new{remains to prove} that the total weighted completion time is the same for $S^\star$ and $S^{N_{|\T|}}$.
The next claim describes an important observation on these schedules.

\begin{claim}\label{clm:periods}
The machine is working in the same periods in $S^\star$ and in $S^{N_{|\T|}}$. 
\end{claim}
\begin{proof}

Suppose for a contradiction that $\tau_k\leq t< \tau_{k+1}$ is the first time point, when  the machine  becomes idle either in $S^\star$ or in $S^{N_{|\T|}}$, but not in the other.

\new{If} the machine  is idle from $t$ only in $S^\star$, then there are more jobs scheduled until $t+1$ in $S^{N_{|\T|}}$ than in $S^\star$.
Hence, there exists a resource $R_i$ such that there are more jobs scheduled from $\J_i$ (the set of jobs that require $R_i$) until $t+1$ in  $S^{N_{|\T|}}$ than in \new{$S^{\star}$}.
This means there is at least one job $j\in \J_i$ with a release date not later than the last replenishment before $t+1$ from $R_i$,  which starts later than $t$ in $S^\star$, otherwise, $S^{N_{|\T|}}$ would not be feasible.
However, if we modify the starting time of $j$ in $S^\star$ to $t$ (the starting time of  other jobs does not change) then we would get a schedule $S'$ such that $(\Q^\star,S')$ is better than $(\Q^\star,S^\star)$, a contradiction.
 
On the other hand, if  the machine  is idle from $t$ only in $S^{N_{|\T|}}$, then there is a resource $R_i$ such that there are more jobs scheduled from $\J_i$ until $t+1$ in $S^\star$ than in $S^{N_{|\T|}}$. 
The machine is idle from $t$ in $S^{N_{|\T|}}$, thus when Algorithm~\ref{alg:dyn_prog1} creates $N_k$, it chooses a set $\J^k$ at step~\ref{step:choosejobs} such that  $|\J^k|<\tau_{k+1}-\tau_k$.
Hence, we know from Claim~\ref{clm:jk} that each job from \new{$\J^k \cap \J_i(\tau_k)$} is scheduled until $t$ in $S^{N_{k}}$ at that step.
Observe that these jobs are scheduled until $t$ also in $S^{N_{|\T|}}$, and all the other jobs from $\J_i$ must be scheduled after $\tau_{k+1}$ \new{in} any feasible schedule for replenishment structure $\Q^\star$.
It is in a contradiction with the definition of $\J_i$.
\end{proof}

From Claim~\ref{clm:jk} we know that Algorithm~\ref{alg:dyn_prog1} always \new{starts} a job with highest weight among the unscheduled jobs that are \new{ready to be started at any time moment $t$, if any}, during the construction of $S^{N_{|\T|}}$.
We claim that this property is even valid for $S^\star$, thus for every time moment $t$, the job \new{started} at $t$ in $S^{N_{|\T|}}$ has the same weight as the job \new{which starts} at $t$ in $S^\star$, therefore, the total weighted completion time of the two schedules is the same.

Suppose  that the above property \new{is not met by} $S^\star$, that is, let $j_1$ be a job of minimum starting time in $S^\star$ such that there is a job $j_2$ such that $w_{j_2}>w_{j_1}$, $S^\star_{j_2}>S^\star_{j_1}$, and $j_2$ is \new{ready to be started} at $S^\star_{j_1}$ in  $\Q^\star$.
Let $S'$ be the schedule such that $S'_{j_1}:=S^\star_{j_2}$, $S'_{j_2}:=S^\star_{j_1}$, and $S'_j:=S^\star_j$ for each $j\neq j_1,j_2$.
Observe that (i) $j_2$ is \new{ready to be started} at $S'_{j_2}$ in $\Q^\star$ \new{by assumption}, (ii) $j_1$ is \new{ready to be started} at $S'_{j_1}$ in $\Q^\star$, because $r_{j_1}\leq S^\star_{j_1}\leq S'_{j_1}$, and (iii) each other job $j\notin\{j_1,j_2\}$ is ready to be started at $S'_j$ in $\Q^\star$, because $S'_j=S^\star_j$.
Thus schedule $S'$ is feasible for $\Q^\star$. 
The total weighted completion time of $S'$ is smaller than that of $S^\star$, which contradicts that $(\Q^\star,S^\star)$ is optimal. This proves the theorem.
\end{proof}

We can prove Theorem~\ref{thm:w1_dyn} with a slightly modified version of the previous algorithm.

\begin{theorem}\label{thm:w1_dyn}
$1|jrp, s=const,p_j=p,r_j|\sum C_j+c_\Q$ is solvable in polynomial time.
\end{theorem}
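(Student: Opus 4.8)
The plan is to reuse the layered dynamic program of Theorem~\ref{thm:p1_dyn} almost verbatim, with two changes forced by the fact that $p$ need not divide the gaps $\tau_{k+1}-\tau_k$. First, a job started inside $[\tau_k,\tau_{k+1})$ may complete after $\tau_{k+1}$, so the machine need not be free at the layer boundary; I would add to every state one coordinate $\varphi$, the completion time of the last job of $S^N$ (with the convention $\varphi:=\tau_k$ if no job completes after $\tau_k$). By Observation~\ref{obs:release}, together with the fact that the earliest moment a job can be started under a fixed replenishment structure is itself a release date, in a left-shifted schedule every maximal busy block starts at some $\tau_j$ and the jobs inside it complete at $\tau_j+p,\tau_j+2p,\dots$; hence $\varphi$ ranges over the $O(n^2)$-element set $\{\tau_j+mp:\ 1\le j\le|\T|,\ 1\le m\le n\}\cup\T$, and the number of states stays $n^{O(s)}$, so the procedure is still polynomial. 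Second, when extending a state $N\in\States(\tau_k)$ with free time $\varphi$ and a chosen replenishment set $\bar{\mathcal R}$, I would set $f:=\max\{\varphi,\tau_k\}$ and replace the interval capacity $\tau_{k+1}-\tau_k$ of Step~\ref{step:choosejobs} by $\sigma:=\max\{0,\lfloor(\tau_{k+1}-1-f)/p\rfloor+1\}$, the number of start times $f,f+p,f+2p,\dots$ lying below $\tau_{k+1}$; Step~\ref{step:construct_schedule} then schedules the selected jobs back-to-back from $f$ (not from $\tau_k$) and records the new free time $\bar\varphi:=f+|\J^k|\,p$, truncated to $\tau_{k+1}$ if smaller. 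Everything else -- the definitions of $\bar\beta_i,\bar\gamma_i,\bar\delta$, the sets $\J_i(\bar\beta_i)$, the state bookkeeping of Steps~\ref{step:new_state}--\ref{step:dyn_last}, and the final selection over $\States(\tau_{|\T|+1})$ among states with every $\alpha_i=|\J_i|$ -- remains as in Algorithm~\ref{alg:dyn_prog1}. Since all weights equal $1$, Step~\ref{step:choosejobs} now just picks an arbitrary $\min\{\sigma,\sum_i(|\J_i(\bar\beta_i)|-\alpha_i)\}$ ready unscheduled jobs: which ones, and from which resource classes, is immaterial for the value, because under a fixed replenishment structure the multiset of completion times produced by ``run some ready job whenever the machine is free, never idle while a job is ready'' depends only on that structure and not on the job-to-slot assignment -- the standard left-shift/interchange fact for $1\,|\,r_j,p_j=p\,|\,\sum C_j$.

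For correctness I would restate Lemma~\ref{lem:feas}: parts (i)--(iv) and (vi) are immediate from the construction (the bound ``$C_j\le\tau_k$'' in (i) becomes ``$C_j\le\varphi$'', and non-overlap holds because blocks are filled back-to-back, a block overrunning a boundary is simply continued, and $\sigma=0$ whenever $\varphi\ge\tau_{k+1}$), and in (v) a job of $\J_i$ chosen in a transition still has release date at most the current $\bar\beta_i$, hence is ready at its start time. Optimality repeats the argument of Theorem~\ref{thm:p1_dyn}: take an optimal $(\Q^\star,S^\star)$; by Observation~\ref{obs:release} its replenishments are at release dates, and by a left shift -- which cannot increase $\sum C_j$ for the fixed $\Q^\star$ -- we may assume $S^\star$ is idle only when no job is ready, so its busy blocks start at release dates and its completion-time multiset is exactly the one the slot-greedy produces for $\Q^\star$. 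Driving the dynamic program along the replenishment subsets of $\Q^\star$ then reaches a feasible state of $\States(\tau_{|\T|+1})$ with replenishment cost $c_{\Q^\star}$ and scheduling cost $\sum_j C_j(S^\star)$; since the algorithm outputs the cheapest such state, its value is at most, hence equal to, the optimum.

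The step I expect to be the main obstacle is the free-time bookkeeping: one must check that a busy block of the constructed schedule that runs across one or more layer boundaries is handled consistently by the transitions it spans (those transitions add no job, since then $f\ge\tau_{k+1}$ forces $\sigma=0$, yet they may still perform replenishments needed later), and that this matches the left-shift normalization of $S^\star$. The restriction to $\sum C_j$ (rather than $\sum w_jC_j$) is essential for the ``immateriality of the split'' shortcut, and equal processing times are essential altogether, since $1\,|\,r_j\,|\,\sum C_j$ is already NP-hard by Proposition~\ref{cor:s1_np}. The remaining details are a routine rerun of the proof of Theorem~\ref{thm:p1_dyn}.
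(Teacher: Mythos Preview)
Your proposal is correct and essentially equivalent to the paper's argument, but packages the extra information differently.

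The paper refines the time grid to $\T':=\{\tau+\lambda p:\tau\in\T,\ 0\le\lambda\le n\}$ (of size $O(n^2)$) and lets the layer index $\tau'\in\T'$ itself serve as the machine's free time; in each transition it schedules \emph{all} currently ready unscheduled jobs and jumps directly to layer $\tau'+|\J^k|\,p$ (or, if nothing is ready, advances to the next point of $\T'$). You instead keep the coarse grid $\T$, add an explicit free-time coordinate $\varphi$ (whose range is precisely the paper's $\T'$), and fill only as many jobs as start before the next release date, carrying any overrun in $\varphi$. These are dual encodings of the same data: the state spaces are of the same order and the correctness arguments run in parallel. The paper's choice sidesteps exactly the obstacle you flag --- a busy block straddling several layer boundaries --- since it always jumps to the end of a batch, at the price of allowing replenishments at points of $\T'\setminus\T$; your choice keeps replenishments at release dates (consistent with Observation~\ref{obs:release}) but must track $\varphi$ and handle the $\sigma=0$ transitions. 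Either way, the key fact is the one you isolate: with unit weights and $p_j=p$, the multiset of completion times of any no-unforced-idle schedule depends only on the replenishment structure, which is what justifies replacing the weight-sorted selection in Step~\ref{step:choosejobs} by an arbitrary one (the paper picks non-increasing release date order, you pick arbitrarily; both are fine).
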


\begin{proof}[Proof (sketch)]
We sketch a similar dynamic program to that of Theorem \ref{thm:p1_dyn}.
Let $\T':=\{\tau+ \lambda\cdot p:\tau\in\T, \lambda =0,1,\ldots,n\}=\{\tau'_1<\ldots<\tau'_{|\T'|}\}$.
Observe that $|\T'|\in O(n^2)$, and $\T'$ contains all the possible starting and completion times of the jobs in any schedule without unnecessary idle times.

We modify Algorithm \ref{alg:dyn_prog1} as follows.
There are $|\T'|+1$ layers and each state is of the form $[\tau';\alpha_1,\ldots,\alpha_{s};\beta_1,\ldots,\beta_s; \gamma_1,\ldots,\gamma_s; \delta]$, where $\tau' \in \T'$, and all the states with the same $\tau'$  constitute $\States(\tau')$. The other coordinates have the same meaning as before.

Let $N = [\tau';\alpha_1,\ldots,\alpha_{s};\beta_1,\ldots,\beta_s; \gamma_1,\ldots,\gamma_s; \delta] \in \States(\tau')$ be the state chosen in Step~\ref{step:choose_state_and_R} of the Algorithm \ref{alg:dyn_prog1}. 
Steps \ref{step:dyn1} and \ref{step:dyn2} remain the same and we  modify \new{step~\ref{step:choosejobs} } as follows.
We define $\J^k$ as the set of all unscheduled job from \new{$\bigcup_{i=1}^s\J_i(\bar{\beta}_i)$}. \new{In step~\ref{step:determine_alpha_tau}, we define $\bar{\alpha}_i$ in the same way as in Algorithm \ref{alg:dyn_prog1}}. If $\J^k\neq \emptyset$, then $\bar{\tau}' = \tau' + |\J^k|\cdot p$ (clearly, $\bar{\tau}' \in \T'$ by the definition of $\T'$).
If $\J^k=\emptyset$, then $\bar{\tau}'$ of $\bar{N}$ equals the next member of $\T'$ after $\tau'$ of $N$.
\new{In steps~\ref{step:construct_schedule}, we schedule the jobs  of $\J^k$ in non-increasing release date order. In steps~\ref{step:dyn_last} and \ref{step:choose_opt_sol}, we \new{use unit weights}}.  The remaining steps are the same.

The proof of soundness of the modified dynamic program is analogous to that of  Theorem \ref{thm:p1_dyn}.
\end{proof}
\begin{remark}
The results of Theorems~\ref{thm:p1_dyn} and \ref{thm:w1_dyn} remain valid even if the jobs may require more than one resource. 
\new{To this end, we consider all non-empty subsets of the $s$ resources (there are $2^s-1$ of them), and let $\phi_\ell$ be the $\ell^{th}$ subset in some arbitrary enumeration of them.
Then we define $\J_\ell$ as the set of those jobs that require the subset of resources $\phi_\ell$.
Clearly, the $\J_\ell$ constitute a partitioning of the jobs. In the dynamic program, let $\alpha_\ell$ be the number of jobs scheduled from $\J_\ell$, and this also means that each state has  $2^s-1$ components $\alpha_\ell$. 
In step \ref{step:choosejobs}, $\J^k$ is chosen from
$\cup_{\ell=1}^{2^s-1} \J_\ell(\min_{i\in\phi_\ell} \bar{\beta}_i)$ as the $\min\{\tau_{k+1} - \tau_k, 
\sum_{\ell=1}^{2^s-1} (|\J_\ell(\min_{i\in\phi_\ell} \bar{\beta}_i)|-\alpha_\ell)\}$ unscheduled jobs of largest weight.
With these modifications, the dynamic programs solve the problems, where the jobs may require more than one resource.}
\end{remark}

\section{Polynomial time algorithms for $1|jrp,r_j|\Fmax + c_Q$}
\label{sec:poly_fmax}
In this section we obtain new complexity results for the maximum flow time objective. We start with a special case with a single resource.
\begin{lemma}\label{lem:offline_fmax_struct}
If $s=1$, there is an optimal solution, where jobs are scheduled in non-decreasing $r_j$ order, all replenishments occur at the release dates of some jobs, and the machine is idle only before replenishments.
\end{lemma}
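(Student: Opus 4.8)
The plan is a standard exchange argument applied in three stages to an arbitrary optimal solution $(S^\star, \Q^\star)$, turning it into one with all three claimed structural properties while not increasing $\Fmax + c_\Q$. Throughout we rely on the fact that $s=1$, so a replenishment structure is just a set of time points $t_1 < \cdots < t_q$, each carrying cost $K_0+K_1$, and a job $j$ is ready to be started at time $t$ iff some $t_\ell$ lies in $[r_j,t]$.

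\emph{Step 1: replenishments at release dates.} By Observation~\ref{obs:release} we may already assume every replenishment in $\Q^\star$ occurs at the release date of some job; alternatively, shift each $t_\ell$ left to the largest release date $r_j \le t_\ell$ (if no such $r_j$ exists the replenishment is useless and can be deleted, strictly decreasing $c_\Q$). Shifting a replenishment earlier only enlarges each interval $[r_j,t_\ell]$, so no job loses its ``ready'' status and $S^\star$ remains feasible; $c_\Q$ is unchanged. This gives property two without changing $\Fmax$.

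\emph{Step 2: reorder jobs into non-decreasing $r_j$ order.} Keep $\Q^\star$ fixed and consider the sequence of jobs in $S^\star$ (ignoring idle gaps, which we keep in place as ``time slots''). Suppose two consecutive jobs $j$ (scheduled first) and $k$ violate the order, i.e.\ $r_k < r_j$. Swap their positions in the schedule. Job $k$ now starts no later than before, and since $r_k < r_j \le S^\star_j$ it is still released; moreover the replenishment that made $j$ ready at its old start time occurred in $[r_j, S^\star_j] \subseteq [r_k, S^\star_j]$, so it also makes $k$ ready. Job $j$ now occupies the later slot, whose start time is $\ge S^\star_j$, so $j$ is trivially still released; and the replenishment witnessing $k$'s original readiness lay in $[r_k, S^\star_k]$, but we need one in $[r_j, \text{new start of }j]$ — here I would argue that the replenishment used for $j$ originally (at some $t_\ell \in [r_j, S^\star_j]$) still lies in the new, later interval for $j$, so $j$ stays feasible. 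Completion times of $j,k$ only get permuted among the two slots, and no other job moves, so $\Fmax$ does not increase. Finitely many such adjacent swaps (a bubble sort) yields a schedule with jobs in non-decreasing $r_j$ order; $c_\Q$ is untouched.

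\emph{Step 3: push idle time to just before replenishments.} With jobs now in release-date order and $\Q^\star$ fixed, suppose there is an idle interval that is \emph{not} immediately followed by a replenishment. Let $j$ be the first job started after this gap. Since no replenishment occurs strictly inside the gap nor at its right endpoint, the set of replenishments lying in $[r_j, t]$ is the same for all $t$ in the gap up to $S^\star_j$; hence $j$ is already ready at the left endpoint of the gap, and we may start $j$ (and, cascading, all later jobs in the fixed order) earlier, closing the gap. This never increases any completion time, so $\Fmax$ does not increase, and $c_\Q$ is unchanged. Repeating, every remaining idle interval abuts a replenishment on its right, i.e.\ the machine is idle only immediately before replenishments. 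The main obstacle is Step 2: one must check carefully that after a swap the \emph{later}-placed job $j$ still has a replenishment in its (now larger) ready-window — this is exactly where $s=1$ and the release-date ordering are used, since the replenishment originally serving $j$ sits at a time point $\ge r_j$ that is still $\le$ the new start of $j$. The other two steps are routine left-shifts that only ever enlarge ready-windows.
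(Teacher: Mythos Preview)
Your three-step exchange argument is the same plan the paper follows (sort into release-date order by adjacent swaps, then left-shift jobs to absorb idle time), and the feasibility reasoning in Step~2 is exactly right. Two points deserve tightening, however.

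In Step~2 your sentence ``completion times of $j,k$ only get permuted among the two slots'' is not true when $p_j\neq p_k$; after the swap the earlier completion time is $S^\star_j+p_k$, not one of the old values. What \emph{is} true, and is all you need, is that the later completion time stays equal to the old $C_k$, so the new $F_j=C_k-r_j< C_k-r_k$ and the new $F_k\le$ old $F_k$; hence $\Fmax$ does not increase. The paper states it this way.

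In Step~3 you assert that ``no replenishment occurs strictly inside the gap'', but for a maximal idle interval this is not guaranteed: a replenishment can sit inside the idle period, and then the enabling replenishment for $j$ may lie strictly after the gap's left endpoint, so your shift of $j$ all the way left can break feasibility. Relatedly, the ``cascading'' shift of all later jobs by the gap length can push a later job before the replenishment that enables it. The paper's fix is to work between two \emph{consecutive} replenishment times $\tau_1<\tau_2$, take the first idle subinterval of $[\tau_1,\tau_2)$, and shift only the single next job left to the start of that subinterval; since there is no replenishment in $(\tau_1,\tau_2)$, the job's enabling replenishment is at or before $\tau_1$, so the shift is safe. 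Iterating this one-job shift is what gives the desired structure.
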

\begin{proof}
Let $(S^\star, \Q^\star)$ be an optimal solution.
Suppose there exist two jobs, $i$ and $j$, such that $r_i < r_j$, and $i$ is scheduled directly after $j$ in $S^\star$. By swapping $i$ and $j$, $F_i$ will decrease, and $F_j$ will not be bigger than $F_i$ before the swap. Hence, $\Fmax$ will not increase, and the  replenishment structure $\Q^\star$ will be feasible for the updated schedule as well, since all replenishments occur at the release dates of some jobs (see Observation~\ref{obs:release}).
 
Let $\tau_1, \tau_2$ be two consecutive replenishment dates in the optimal solution, and assume the machine is idle in the time interval  $[t_1,t_2] \subset [\tau_1,\tau_2)$ such that it is busy throughout the interval $[\tau_1,t_1]$, and let $j_1$ be the first job starting after $t_2$. Shift $j_1$ to the left until $t_1$. This yields a feasible schedule, since again, we may assume that all replenishments occur at the release dates of some jobs. Moreover,  $\Fmax$ will not increase. Repeating this transformation, we can transform $S^\star$ into the desired form.
\end{proof}

\begin{theorem}\label{thm:offline_fmax_dyn}
If $s = 1$, then the problem $1|jrp,r_j|\Fmax+c_\Q$ can be solved in polynomial time.
\end{theorem}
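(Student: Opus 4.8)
The plan is to convert Lemma~\ref{lem:offline_fmax_struct} into a polynomial dynamic program over contiguous ``blocks'' of jobs. Since $s=1$, a replenishment structure is simply a subset of $\T$ (Observation~\ref{obs:release}), with cost $(K_0+K_1)$ times its cardinality. Index the jobs as $j_1,\dots,j_n$ with $r_{j_1}\le\cdots\le r_{j_n}$, and set $P_0:=0$, $P_\ell:=\sum_{i=1}^\ell p_{j_i}$. By Lemma~\ref{lem:offline_fmax_struct} we may restrict attention to solutions in which the jobs run in this order and the machine is idle only just before a replenishment; hence the schedule decomposes into maximal busy periods, each of which starts at a release date carrying a replenishment and processes a contiguous block $j_s,\dots,j_e$ with no internal idle time. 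Such a solution is described by a partition of $1,\dots,n$ into contiguous blocks, a start time $a_i\in\T$ for block $i$, and, inside each block, the further replenishments that make its jobs ready at their (now forced) start times. One checks that a block $[s,e]$ with start $a$ is feasible iff $r_{j_\ell}\le a+P_{\ell-1}-P_{s-1}$ for every $\ell\in[s,e]$, that $j_\ell$ then completes at $a+P_\ell-P_{s-1}$, and that the minimum number of \emph{extra} replenishments the block needs, $\rho(s,e,a)$, is the optimum of a one-dimensional point-cover instance --- cover the intervals $[\,r_{j_\ell},\,a+P_{\ell-1}-P_{s-1}\,]$ over those $\ell\in[s,e]$ with $r_{j_\ell}>a$ using points from $\T$ --- solvable greedily.

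The key structural fact I would establish is that, in an optimal solution of this form, the replenishment cost decomposes exactly as $\sum_i\bigl(1+\rho(s_i,e_i,a_i)\bigr)$: the start replenishment of block $i$ is genuinely needed --- it is the first replenishment making $j_{s_i}$ ready, since otherwise $j_{s_i}$ would already be ready during the idle interval preceding the block, contradicting the form of Lemma~\ref{lem:offline_fmax_struct}; each block may be assumed to use a minimum cover inside (shrinking it never hurts the objective); and no replenishment lying inside block $i$ can serve a later block, because everything inside block $i$ happens strictly before $a_{i+1}$, whereas the within-block replenishments of block $i+1$ serve only jobs with release date $>a_{i+1}$. Conversely, every block decomposition meeting the feasibility condition yields a genuine feasible solution of cost exactly $\sum_i\bigl(1+\rho(s_i,e_i,a_i)\bigr)(K_0+K_1)+\max_i\max_{s_i\le\ell\le e_i}\bigl(a_i+P_\ell-P_{s_i-1}-r_{j_\ell}\bigr)$. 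These two facts together show that minimizing over block decompositions is equivalent to the original problem.

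Given this, the dynamic program is routine. A state is a triple $(e,q,T)$: the first $e$ jobs lie in blocks, $q$ replenishments are used in total, and the last block ends at time $T$. Finiteness of the table rests on the observation that $T$ always has the form $\tau_a+P_e-P_b$ for some indices, so there are only $O(n^2)$ relevant values of $T$ per $e$ (and $q\le|\T|\le n$), giving polynomially many states. Let $D[e][q][T]$ be the least possible value of $\max_{\ell\le e}(C_\ell-r_{j_\ell})$ over feasible placements of $j_1,\dots,j_e$ into blocks using $q$ replenishments and ending at $T$; set $D[0][0][0]=0$. From $(e,q,T)$ one branches over the next block $[e{+}1,e']$ and its start $a\in\T$ with $a\ge\max\{T,r_{j_{e+1}}\}$ satisfying the feasibility condition, updating $D[e'][\,q+1+\rho(e{+}1,e',a)\,][\,a+P_{e'}-P_e\,]$ with $\max\{D[e][q][T],\ \max_{e<\ell\le e'}(a+P_\ell-P_e-r_{j_\ell})\}$. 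The optimum is $\min\{\,q(K_0+K_1)+D[n][q][T]\,\}$ over all $q,T$. Each transition, including the greedy evaluation of $\rho$, costs $O(n)$, and there are $O(n^2)$ transitions per state, so the algorithm runs in time polynomial in $n$.

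I expect the running time bound to be the easy part; the part needing real care is the correctness argument of the second paragraph --- in particular verifying that the optimal solution supplied by Lemma~\ref{lem:offline_fmax_struct} can be taken in exactly the block form the DP enumerates and that no replenishment is double-counted or shared across blocks, so that minimizing the DP objective genuinely minimizes $\Fmax+c_\Q$.
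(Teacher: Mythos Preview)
Your approach is correct, but it takes a genuinely different route from the paper's.

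The paper's dynamic program (Algorithm \textsc{DynProg\_Fmax}) decomposes the solution \emph{by replenishments}: a layer corresponds to a replenishment at some $r_j$, and a transition from layer $i$ to layer $j$ adds exactly one replenishment (at $r_j$) and schedules the batch of jobs $i{+}1,\dots,j$ starting at $\max(P,r_j)$, where $P$ is the current tail of the busy run. The state $(\alpha,\beta,\gamma,u)$ records where the last replenishment sits, where the current busy run began, which job started it, and the replenishment count; $F_{\max}$ is the value carried. This needs no auxiliary subroutine, because each transition contributes exactly one replenishment.

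You instead decompose \emph{by maximal busy periods}. A transition lays down an entire block $[s,e]$ started at some $a\in\T$, and the number of replenishments the block consumes is $1+\rho(s,e,a)$, where $\rho$ is computed by a separate interval-piercing greedy. This is sound: the argument you sketch (start replenishments are distinct and genuinely needed; within-block points live in $(a_i,T_i)$ and cannot serve another block; the optimal's own block decomposition yields a DP value at most the optimum because $\rho_i\le w_i$) closes the correctness gap you flag.

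What each approach buys: the paper's version is conceptually simpler—replenishments are the primitive moves, so there is no $\rho$ to analyse and no cross-block sharing to worry about. Your version makes the schedule structure (busy periods) the primitive, at the cost of the extra piercing subroutine and the delicate ``no double-counting'' argument; on the other hand, it isolates the scheduling side (the $F_{\max}$ contribution of a block) from the replenishment side cleanly, which could be convenient if one wanted to swap in another within-block cost. Both yield polynomial time; the paper's state space is $O(n^4)$ per layer, and yours is likewise polynomial with $O(n^2)$ possible tail times $T$ per $e$.
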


\begin{proof}
Note that if two or more jobs have the same release dates, then in every optimal solution they are replenished at the same time, and they can be scheduled consecutively in arbitrary order without any idle time. Therefore, we can assume that the release dates are distinct, i.e., $r_1 < r_2 < \ldots < r_n$.

Denote with $K=K_0+K_1$ the cost of a replenishment. We define $n$ layers in Algorithm~\ref{alg:offline_fmax_dyn}, layer $i$ corresponding to $r_i$. In layer $i$, the resource is replenished at $r_i$, and the jobs arriving before $r_i$ and having no replenishment yet are scheduled in increasing release date order.

Throughout the algorithm we maintain several states for each layer, where a state $N$ is a 5-tuple, i.e.,  $N = (\alpha_N, \beta_N, \gamma_N, \Fmax(N), u_N)$. 
\begin{itemize}
\item[] $\alpha_N$ -- release date of a job when the last replenishment occurs,
\item[] $\beta_N$ -- release date of a job from which the remaining jobs are scheduled consecutively without idle times,
\item[] $\gamma_N$ -- index of the first job scheduled from $\beta_N$,
\item[] $\Fmax(N)$ -- maximum flow time,
\item[] $u_N$ -- number of replenishments.
\end{itemize}
From each layer $i$, the algorithm derives new states for layers $j=i+1,\ldots,n$. We define an initial layer $0$, consisting of the unique state $N_0 := (0,0,0,0,0)$. 

\medskip
\begin{varalgorithm}{DynProg{\_}Fmax}
\begin{small}
\caption{}\label{alg:offline_fmax_dyn}
Initialization: $N_0 =(0,0,0,0,0)$, $L_0 = \{N_0\}$.
\begin{enumerate}
\item For $i \in \{ 0, \ldots, n \} $ and $j \in \{ i+1, \ldots, n \}$, do steps \ref{step:offline_fmax_dyn_first}-\ref{step:offline_fmax_dyn_last}:
\begin{enumerate}[a.]
\item For each {$N \in L_i$}, $N = (\alpha_{N},\beta_{N},\gamma_{N},\Fmax(N),u_{N})$, define a new state for layer $j$: $N'$. \label{step:offline_fmax_dyn_first}
\item Let $P := \beta_N + \sum_{k=\gamma_N}^{i}{p_k}$ (the completion time of jobs scheduled consecutively).
\item \new{Start} the jobs between $i+1$ and $j$ in increasing release date order from $\max{(P,r_j)}$ consecutively, without idle times.
\item Calculate $F_k = C_k - r_k$ for each newly scheduled job.
\item 
$\alpha_{N'} := r_i$.

$\beta_{N'} :=
  \begin{cases}
    r_j  &  \text{if } P < r_j \\
    \beta_N  &  \text{if } P \geq r_j.
  \end{cases}
$

$\gamma_{N'} :=
  \begin{cases}
    i+1  &  \text{if } P < r_j \\
    \gamma_N  &  \text{if } P \geq r_j.
  \end{cases}
$

$\Fmax(N') := \max(\Fmax(N), \max_{k=i+1}^{j}F_k)$.

$u_{N'} := u_N + 1$.
\item Let $N' = (\alpha_{N'},\beta_{N'},\gamma_{N'},\Fmax(N'),u_{N'})$.
\item Insert $N'$ in $L_j$ if it has no other state with the same  $(\alpha, \beta, \gamma, u)$ attributes, or replace the state $N'' \in L_j$ with $N'$ if $(\alpha_{N'}, \beta_{N'}, \gamma_{N'}, u_{N'}) = (\alpha_{N''}, \beta_{N''}, \gamma_{N''}, u_{N''})$, while $\Fmax(N') < \Fmax(N'')$. \label{step:offline_fmax_dyn_last}
\end{enumerate}
\item Determine the state $N \in L_n$ with the smallest $F_{\max}(N) + K u_N$ value and output the corresponding solution. 
\end{enumerate}
\end{small}
\end{varalgorithm}

From Lemma~\ref{lem:offline_fmax_struct} we know that there is an optimal solution, where the jobs are scheduled in increasing release date order.  In the proposed algorithm we consider all of the replenishment structures, and the best schedule for each of them. Hence,  the algorithm finds the optimal solution.

Since for each $N \in L_i$, $(\alpha_N, \beta_N, \gamma_N, u_N) \in \{r_1,\ldots,r_i\}^2 \times \{1,\ldots,i\}^2$, there are at most $n^4$ states in the layer $n$. Therefore, our algorithm is of polynomial time complexity, and it finds an optimal state $N^\star=(\alpha^\star, \beta^\star, \gamma^\star, F^\star_{\max}, u^\star)$ with objective $F^\star_{\max} + K u^\star$. 

By recording the parent state of each state $N$ in the algorithm, we can define a series of states leading to $N^\star$: $\{N_0,N^\star_1, \ldots, N^\star_k\}$, where $N^\star_k = N^\star$. The replenishment times are defined by $\alpha_1^\star, \ldots \alpha_k^\star$. The optimal schedule can be built in the following way: for $i = k, \ldots, 1$, jobs with indices at least $\gamma^\star_i$ are scheduled from $\beta^\star_i$.
\end{proof}

Now we propose a more efficient algorithm for the case, where $s=1$, $p_j = 1$ for every job, and all the job release dates are distinct (denoted by '{\em distinct $r_j$\/}').
\begin{lemma}\label{lem:offline_fmax_same_flowtime}
The problem $1|jrp, s=1, p_j = 1, \textit{distinct } r_j| \Fmax + c_Q$ admits  an optimal solution, where every job has the same flow time.
\end{lemma}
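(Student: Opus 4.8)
The plan is to start from an optimal solution $(S^\star,\Q^\star)$ that already has the structure guaranteed by Lemma~\ref{lem:offline_fmax_struct}: jobs are scheduled in non-decreasing $r_j$ order, all replenishments occur at release dates, and the machine idles only immediately before a replenishment. Because $p_j=1$ and the release dates are distinct, the time axis between consecutive replenishments decomposes into unit slots, and between two consecutive replenishment times the machine processes a contiguous block of jobs (in release-date order) after possibly some idle slots. I would look at the job $j^\star$ attaining $\Fmax$ in $S^\star$, say $\Fmax = C_{j^\star}-r_{j^\star} = F^\star$, and argue that we may delay the start of every other job as much as possible without ever exceeding $F^\star$, thereby equalizing all flow times to $F^\star$.

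The key step is the following local rebalancing. Process the replenishment blocks from left to right. Within a block, the jobs are scheduled consecutively in release-date order starting at the block's replenishment time $\tau$; suppose some job $j$ in the block has $C_j - r_j < F^\star$. Then I would try to shift $j$ (and, if needed, only $j$ — the later jobs in the block are already constrained) one slot to the right. This is legal provided (i) the slot to the right is free or occupied by a job we can cascade, and (ii) the shift does not push $C_j - r_j$ above $F^\star$, which holds by the choice $C_j-r_j<F^\star$ and unit processing times. The cleaner way to package this: define a new schedule $S'$ job by job, in decreasing order of scheduled position, setting each job's start as late as possible subject to $C_j - r_j \le F^\star$ and non-overlap with the already-placed later jobs, using the same replenishment structure $\Q^\star$. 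One checks that $S'$ is feasible for $\Q^\star$ — each job $j$ still starts no earlier than $r_j$ (since $C_j-r_j\le F^\star$ forces $S_j\ge r_j$ only if we also never start before $r_j$, which the construction enforces) and no earlier than the replenishment that served it in $S^\star$, because we only ever moved jobs to the right. Since $\Q^\star$ is unchanged, $c_{\Q'}=c_{\Q^\star}$, and $\Fmax(S')\le F^\star$, so $(S',\Q^\star)$ is also optimal. Finally, in $S'$ every job $j$ has $C_j - r_j = F^\star$: if some job had $C_j-r_j<F^\star$, then either the slot to its right is empty (contradicting Lemma~\ref{lem:offline_fmax_struct}, which forces idle slots only right before a replenishment, i.e. at the right end of a block — but a job with a strictly free slot to its right and flow time below $F^\star$ could have been placed later) or it is occupied by a later job $k$ with $C_k - r_k = F^\star$ and $r_k > r_j$, whence $C_k = C_j + 1$ gives $F^\star = C_k - r_k < C_j + 1 - r_j \le F^\star$, a contradiction.

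I expect the main obstacle to be the boundary bookkeeping at replenishment times: one must make sure that when jobs are pushed to the right they are never pushed past the next replenishment time in a way that would make them need a replenishment that has not yet occurred — but since $\Q^\star$ is held fixed and we only delay, each job $j$ was already served by a replenishment at or before its $S^\star_j \le S'_j$, so this cannot happen; the only subtlety is a job whose rightward shift would collide with a job of a strictly later block, which is handled by the cascade/late-as-possible construction. A secondary point to state carefully is that "every job has the same flow time" uses distinctness of release dates: with ties one could only equalize up to reordering within a tie class, which is exactly why the hypothesis \emph{distinct $r_j$} appears. With these pieces in place the lemma follows.
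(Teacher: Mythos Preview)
Your proposal is correct and follows essentially the same idea as the paper: keep $\Q^\star$ fixed and shift jobs to the right until every flow time equals $F^\star=\Fmax$. The paper does this by repeatedly locating a consecutive pair $(j,k)$ with $F_j<\Fmax=F_k$, observing that $C_k-C_j=(r_k-r_j)+(F_k-F_j)$ forces a gap of length at least $\Delta=F_k-F_j$ between them (since $p_k=1$ and $r_k>r_j$), and sliding $j$ right by $\Delta$; your right-to-left ``place each job as late as possible'' construction is a clean global version of the same local move. One small wrinkle: your appeal to Lemma~\ref{lem:offline_fmax_struct} to rule out an empty slot to the right of $j$ \emph{in $S'$} is misplaced (that lemma constrains $S^\star$, not your rebuilt schedule), but the parenthetical reason you give --- that such a job ``could have been placed later'' by the greedy rule --- is already the correct argument; likewise, the assertion $F_k=F^\star$ for the successor of $j$ is justified once you take $j$ to be the \emph{last} job with $F_j<F^\star$, which you should state explicitly.
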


\begin{proof}
Consider an optimal solution, where there is a job $j$ with $F_j < \Fmax$, directly followed by a job $k$ with $F_k = \Fmax$. Since $r_j < r_k$ and $F_j < F_k$, we have $C_k - C_j = r_k - r_j + \Delta$, where $\Delta > 0$. This also means that there is a gap of length $r_k - r_j + \Delta - 1 > \Delta$ between $j$ and $k$. By moving $j$ to the right with $\Delta$ time units, we obtain a feasible schedule such that $F_j = F_k = \Fmax$. We can repeat this transformation until we obtain an optimal schedule of the desired structure.
\end{proof}

\begin{theorem}\label{thm:offline_fmax_fast}
There is an $O(n^2)$ time algorithm that finds the optimal solution for $1|jrp, s=1, p_j = 1, \textit{distinct } r_j| \Fmax + c_Q$.
\end{theorem}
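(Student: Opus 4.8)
The goal is to beat the generic $O(n^4)$ dynamic program of Theorem~\ref{thm:offline_fmax_dyn} in the unit-processing-time, distinct-release-dates setting, exploiting the stronger structural fact of Lemma~\ref{lem:offline_fmax_same_flowtime}: there is an optimal solution in which \emph{every} job has flow time exactly $\Fmax$. My plan is to parametrize candidate solutions by a single quantity --- the common flow time $F$ --- and show that for each fixed $F$ the cheapest feasible solution can be computed in $O(n)$ time by a greedy sweep, while the set of values of $F$ worth trying has size $O(n)$; this gives the $O(n^2)$ bound.

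First I would sort the jobs so that $r_1 < r_2 < \cdots < r_n$ (one $O(n\log n)$ sort, dominated by the final bound anyway, or $O(n)$ if we assume the input is presorted). Fix a target flow time $F$. Since every job completes at $r_j + F$ and processing times are $1$, job $j$ must occupy the unit slot $[r_j + F - 1,\, r_j + F)$; feasibility of the schedule requires these slots to be pairwise disjoint, which — because the $r_j$ are distinct integers and strictly increasing — holds as long as consecutive required completion times differ by at least $1$, i.e. this is automatic once we have fixed that each job realizes flow time $F$, \emph{provided} $F$ is large enough that the machine is never asked to run two jobs at once; more precisely, scheduling job $j$ in $[r_j+F-1, r_j+F)$ is conflict-free for all $j$ iff no two of these intervals overlap, and one checks in a single left-to-right pass whether a given $F$ is feasible for the scheduling side. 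The real content is the replenishment side: job $j$ needs the (single) resource replenished in the window $[r_j,\, r_j + F - 1]$. So for fixed $F$ we must choose a minimum-cardinality set of replenishment time points (which by Observation~\ref{obs:release} may be taken among the $r_j$) that stabs every window $[r_j, r_j + F - 1]$. This is the classical interval point-cover / clique-cover problem on an interval graph: scan jobs in increasing $r_j$, and greedily place a replenishment at the right endpoint $r_j + F - 1$ of the earliest uncovered window, then skip all windows it stabs. That greedy is $O(n)$ and optimal, and it directly yields the minimum number $u(F)$ of replenishments, hence replenishment cost $K \cdot u(F)$ and total cost $F + K\, u(F)$.

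Next I would argue we need only try $O(n)$ values of $F$. The optimal common flow time $F^\star$ is not arbitrary: in the optimal schedule of Lemma~\ref{lem:offline_fmax_same_flowtime}, the schedule consists of blocks of consecutively-run jobs separated by idle gaps that end at replenishment times (as in Lemma~\ref{lem:offline_fmax_struct}). Within a block starting with job index $a$ run back-to-back, job $a+i$ completes at $S_a + i + 1$, so its flow time is $S_a + i + 1 - r_{a+i}$; requiring all these equal forces $S_a$ and pins the flow time of the block to a value of the form $r_b - r_a + (\text{something})$ for indices $a \le b$. Consequently $F^\star$ lies in the set $\{\, r_j - r_i + c : 1 \le i \le j \le n,\ c \in O(n)\,\}$ intersected with the integers --- but more usefully, $F^\star$ is determined by \emph{which} job's window is the binding one, so it suffices to let $F$ range over the candidate values at which the greedy cover count $u(F)$ changes, and these breakpoints are among $\{\, r_j - r_i + 1 : i \le j \,\}$ restricted appropriately; since $u(F)$ is a nonincreasing step function of $F$ with at most $n$ steps, and $F + K u(F)$ is optimized at one of these $O(n)$ breakpoints (at a jump of $u$, only the smallest $F$ giving that value of $u$ matters), we enumerate $O(n)$ candidates, each costing $O(n)$, for $O(n^2)$ overall. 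I would also handle the lower cutoff: $F$ must be at least large enough for the slots $[r_j+F-1,r_j+F)$ to be schedulable, which is itself one of the $O(n)$ candidate thresholds.

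The main obstacle I anticipate is pinning down \emph{exactly} which $O(n)$ values of $F$ need to be tested, and proving no smaller set of candidates misses the optimum — i.e. making rigorous the claim that the optimal $F^\star$ is a breakpoint of $u(F)$ (or the minimal feasible $F$). The clean way is: $\text{cost}(F) = F + K u(F)$ with $u$ nonincreasing and piecewise constant; on any maximal interval where $u$ is constant, $\text{cost}$ is increasing in $F$, so the minimum over that interval is at its left endpoint; the left endpoints of these intervals, together with the minimal feasible $F$, are $O(n)$ explicitly computable values (each of the form $\max(0,\ r_j - r_i)+1$ or similar), and evaluating the greedy cover at each is $O(n)$. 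A secondary, lesser obstacle is verifying the scheduling feasibility check and the interval-cover greedy mesh correctly — in particular that replenishments restricted to $\{r_1,\dots,r_n\}$ lose nothing here, which is exactly Observation~\ref{obs:release}. Everything else (the sort, the sweeps) is routine.
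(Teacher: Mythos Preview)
Your approach is the paper's: invoke Lemma~\ref{lem:offline_fmax_same_flowtime} to fix a common flow time $F$, place job $j$ in the slot $[r_j+F-1,\,r_j+F)$, and cover the windows $[r_j,\,r_j+F-1]$ by the minimum number of stabbing points with the standard interval greedy. The paper runs its greedy right-to-left rather than left-to-right, but that is cosmetic.

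Where you and the paper diverge is exactly the point you flag as your main obstacle. The paper does not argue about which values of $F$ must be tried; it simply loops over $F\in\{1,\dots,n\}$. You are right to be uneasy: the bound $F^\star\le n$ is \emph{not} justified in the paper, and in fact it can fail. Take $n=2$, $r_1=0$, $r_2=100$, $K=1000$: a single replenishment at $r_2$ yields $\Fmax=101$ and total cost $1101$, whereas every $F\le 2$ forces two replenishments and cost at least $2001$. So your proposal to restrict attention to the at most $n$ left endpoints of the plateaus of $u(F)$ (valid, since $u$ is integer-valued, nonincreasing, with $u(1)=n$ and $u\ge 1$) is not a mere refinement of the paper's argument --- it is what is needed to make the argument sound.

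The remaining gap in your plan is the one you already name: producing those $\le n$ breakpoints within the $O(n^2)$ budget. Listing all differences $r_j-r_i+1$ gives $\Theta(n^2)$ candidates and hence $\Theta(n^3)$ work. One clean route: for each $c\in\{1,\dots,n\}$, the left endpoint $F_c$ of the plateau $\{F:u(F)=c\}$ satisfies $F_c-1=\min$, over contiguous partitions of $r_1<\cdots<r_n$ into $c$ blocks, of the maximum within-block span; the recurrence $D[c][i]=\min_{j<i}\max\bigl(D[c-1][j],\,r_i-r_{j+1}\bigr)$ can be filled in $O(n)$ per row (the crossover between the nondecreasing term $D[c-1][j]$ and the decreasing term $r_i-r_{j+1}$ moves monotonically with $i$), giving all $F_c$ in $O(n^2)$, after which the answer is $\min_c(F_c+Kc)$. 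Your secondary worry about slot feasibility is a non-issue: with distinct integer release dates the unit slots $[r_j+F-1,r_j+F)$ are pairwise disjoint for every $F\ge 1$, so the minimal feasible $F$ is simply~$1$.
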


\begin{proof}

Using Lemma~\ref{lem:offline_fmax_same_flowtime}, if we try every possible flow time for scheduling the jobs, and then find the best replenishment structure for this fixed schedule, we obtain the optimal solution by picking the solution that gives the minimal objective function value. Consider the following algorithm:

\medskip
\begin{varalgorithm}{Distinct{\_}rj{\_}Fmax}
\begin{small}
\caption{}\label{alg:offline_fmax_n2}
\begin{enumerate}
\item For each $F \in \{ 1, \ldots, n \}$, do steps \ref{step:offline_fmax_n2_first}-\ref{step:offline_fmax_n2_last}:
\begin{enumerate}[a.]
\item Construct a feasible schedule, where every job has flow time of $F$, by starting every job $j$ at $r_j +F-1$. Let $u_F := 0$. \label{step:offline_fmax_n2_first}
\item While there is a job without a resource ordered for it, let $j$ be the last such job. Replenish the resource at $r_j$, let $u_F := u_F + 1$, and repeat. \label{step:offline_fmax_n2_repl}
\item The solution for $F$ has objective function value $F + K u_F$.
\label{step:offline_fmax_n2_last}
\end{enumerate}
\item Output the best solution obtained in the first step.\end{enumerate}
\end{small}
\end{varalgorithm}
An example for six jobs with $r_1 = 1, r_2 = 2, r_3 = 3, r_4 = 5, r_5 = 7, r_6 = 8$ and fixed $F = 3$ is shown on Figure~\ref{fig:offline_fmax}.

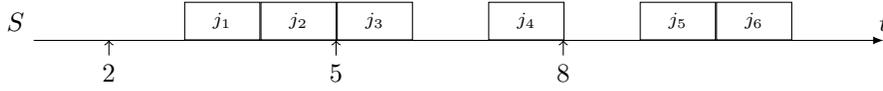
\begin{figure}
\begin{tikzpicture}
\def\ox{0} 
\def\oy{0} 
\def\ui{1}
\def\uii{4.02}
\def\uiii{7.04}

\coordinate(o) at (\ox,\oy); 
\coordinate(u1) at (\ui,\oy);
\coordinate(u2) at (\uii,\oy);

\tikzstyle{mystyle}=[draw, minimum height=0.5cm,rectangle, inner sep=0pt,font=\scriptsize]

\def\tl{11.3} 
\def\oyi{0}
\draw [-latex](\ox,\oyi) node[above left]{$S$} -- (\ox+\tl,\oyi) node[above,font=\small]{$t$};

\coordinate (uq1) at (\ui,\oyi);
\coordinate (uq2) at (\uii,\oyi);
\coordinate (uq3) at (\uiii,\oyi);

\draw[<-] (uq1) -- ($(uq1)-(0,0.2)$) node[below] {$2$};
\draw[<-] (uq2) -- ($(uq2)-(0,0.2)$) node[below] {$5$};
\draw[<-] (uq3) -- ($(uq3)-(0,0.2)$) node[below] {$8$};

\node(b1) [above right=-0.0cm and 1.0cm of u1,mystyle, minimum width=1 cm]{$j_1$};
\node(b2) [right=0.0cm of b1,mystyle, minimum width=1 cm]{$j_2$};
\node(b3) [right=0.0cm of b2,mystyle, minimum width=1 cm]{$j_3$};
\node(b4) [right=1cm of b3,mystyle, minimum width=1 cm]{$j_4$};
\node(b5) [right=1cm of b4,mystyle, minimum width=1 cm]{$j_5$};
\node(b6) [right=0cm of b5,mystyle, minimum width=1 cm]{$j_6$};

\end{tikzpicture}
\caption{Schedule and replenishment structure created by Algorithm~\ref{alg:offline_fmax_n2} for six jobs for $F=3$.}\label{fig:offline_fmax}
\end{figure}

The running time of the algorithm is $O(n^2)$.

We only have to show that for a fixed $F$, the replenishment structure defined by the algorithm gives a solution with a minimal number of replenishments. We proceed by induction on the number of jobs. For $n=1$ the statement clearly holds.
Consider an input $I$ with $n$ jobs, and denote the optimal number of replenishments for $I$ for the fixed flow time $F$ with $u_{F}(I)$.

For the last job in the schedule, there must be a replenishment at $r_n$. 
Delete the jobs which start at or after $r_n$ in the schedule, and let $I'$ be the input obtained  by deleting the same jobs from $I$. By the induction hypothesis, the algorithm determines the optimal number of replenishments for $I'$. Hence, $u_{F}(I) \leq u_{F}(I') + 1$. On the other hand, by the construction of $I'$, for all $j \in I'$, $r_j +F \leq r_n$. Hence, $u_{F}(I) \geq u_{F}(I') + 1$, and thus $u_{F}(I) = u_{F}(I') + 1$.
\end{proof}

Finally, we mention that it is easy to modify the dynamic program of Section~\ref{sec:dyn_prog} for the $\Fmax$ objective, so that we have:

\begin{theorem}\label{thm:offline_fmax_pjp}
$1|jrp, s=const,p_j=p,r_j|\Fmax+c_\Q$ is solvable in polynomial time.
\end{theorem}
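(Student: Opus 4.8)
The plan is to adapt the dynamic program \textbf{DynProg\_wjCj} of Theorem~\ref{thm:p1_dyn} (respectively its $p_j=p$ variant from Theorem~\ref{thm:w1_dyn}) to the $\Fmax$ objective. The state space can stay essentially the same: a state in layer $\tau_k$ is a tuple $[\tau_k;\alpha_1,\ldots,\alpha_s;\beta_1,\ldots,\beta_s;\gamma_1,\ldots,\gamma_s;\delta]$, recording for each resource the number of scheduled jobs requiring it, the time of its last replenishment, and the number of replenishments from it, plus the total replenishment count $\delta$. As before, the number of states is polynomial (at most $n^{3s+2}$) because every coordinate ranges over a set of size $O(n)$, and for the $p_j=p$ case one uses the enlarged time set $\T'=\{\tau+\lambda p : \tau\in\T,\ \lambda=0,\ldots,n\}$ of size $O(n^2)$.

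The first change is in which job the partial schedule should prefer. For $\sum w_jC_j$ the greedy choice at step~\ref{step:choosejobs} was ``largest weight first''; for $\Fmax$ the right rule is to schedule available jobs in non-decreasing release-date order (an EDD-type rule), since among jobs ready to be started, delaying the one with the smallest $r_j$ can only worsen its flow time, and an exchange argument analogous to the one at the end of the proof of Theorem~\ref{thm:p1_dyn} shows some optimal solution obeys this rule — this is exactly the structural content already proved in Lemma~\ref{lem:offline_fmax_struct} for $s=1$, and the same swap argument works coordinate-wise when each job needs one resource. The second change is the quantity propagated and minimized. Instead of carrying a partial total weighted completion time attached to each state, I would carry the running maximum flow time $\Fmax(S^N)$ of the jobs scheduled so far; when extending a state by a chosen job set $\J^k$ placed greedily from $\tau_k$ without idle time, update this value by the maximum of the old value and the flow times of the newly placed jobs. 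When two partial solutions reach the same state tuple, keep the one with the smaller running $\Fmax$ — this domination is valid because $\Fmax$ is a bottleneck (max) objective, so a smaller prefix maximum can never hurt any completion of the schedule. At the terminal layer $\tau_{|\T|+1}$ (resp.\ $\tau'_{|\T'|}$), among states with every $\alpha_i=|\J_i|$ (feasible full schedules, by the analogue of Lemma~\ref{lem:feas}(\ref{lem:feas_alpha})), output the one minimizing $\Fmax(S^N)+K_0\delta+\sum_{i=1}^s K_i\gamma_i$.

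The correctness proof then follows the template of Theorem~\ref{thm:p1_dyn} almost verbatim: take an optimal solution $(\Q^\star,S^\star)$ that is already in the normalized form (replenishments at release dates by Observation~\ref{obs:release}, jobs in non-decreasing $r_j$ order, idle time only immediately before replenishments, by Lemma~\ref{lem:offline_fmax_struct} generalized to constant $s$); follow the sequence of states $N_0,N_1,\ldots$ obtained by choosing $\bar\R=\R^\star_k$ at each layer; argue as in Claim~\ref{clm:periods} that the dynamic-program schedule $S^{N_{|\T|}}$ keeps the machine busy in exactly the same periods as $S^\star$; and conclude via the EDD exchange argument that the multiset of flow times, hence $\Fmax$, coincides, so the value found is optimal.

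I expect the main obstacle to be the structural lemma for constant $s$: the $s=1$ argument of Lemma~\ref{lem:offline_fmax_struct} swaps two adjacent out-of-order jobs, but with several resources one must check that such a swap (and the left-shift removing interior idle time) preserves the feasibility constraint ``each job is ready to be started at its start time in $\Q^\star$'' for \emph{every} resource it requires, not just one — this is fine because all replenishments sit at release dates, so any job moved earlier only needs resources that were already available at an even earlier release date, but it needs to be stated carefully. The second, milder subtlety is the $p_j=p$ case, where the chosen job set $\J^k$ is the \emph{entire} set of currently-ready unscheduled jobs (as in the proof of Theorem~\ref{thm:w1_dyn}) rather than a bounded number, so the layer index advances by $|\J^k|\cdot p$; one must check this still lands in $\T'$ and that the busy-period matching claim goes through with this coarser stepping, which it does for the same reason as in Theorem~\ref{thm:w1_dyn}. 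Everything else — polynomial state count, the six-part feasibility lemma, the reconstruction of the solution from parent pointers — carries over unchanged.
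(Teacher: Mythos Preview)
Your proposal is correct and takes essentially the same approach as the paper. The paper's own proof of Theorem~\ref{thm:offline_fmax_pjp} is a two-line sketch: reuse the dynamic program of Theorem~\ref{thm:w1_dyn} verbatim, except that in steps~\ref{step:dyn_last} and~\ref{step:choose_opt_sol} one compares and minimizes the maximum flow time rather than the total completion time, with soundness asserted to be ``analogous to those of Theorems~\ref{thm:p1_dyn} and~\ref{thm:w1_dyn}.'' You reproduce exactly this plan, and in fact spell out more than the paper does --- the change of the within-block ordering rule to non-decreasing $r_j$, the bottleneck-domination argument justifying why keeping the smaller running $\Fmax$ per state tuple is safe, and the need to revisit the structural Lemma~\ref{lem:offline_fmax_struct} when $s>1$ --- all of which the paper leaves implicit under ``analogous.''
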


\begin{proof}[Proof (sketch)]

\new{

Use the same  dynamic program as in the proof of Theorem~\ref{thm:w1_dyn}, except that in steps~\ref{step:dyn_last} and \ref{step:choose_opt_sol}, consider the maximum flow time instead of the total completion time. 

The proof of soundness of this algorithm is analogous to those of Theorems \ref{thm:p1_dyn} and  \ref{thm:w1_dyn}.
}
\end{proof}

\section{Competitive analysis of the online problem with min-sum criteria}
\label{sec:online_min-sum}
First, we provide a 2-competitive algorithm for the case \new{where the scheduling cost is the total completion time ($\sum C_j$)}, then we improve this algorithm to achieve the same result for the case \new{where the scheduling cost is the total flow time ($\sum F_j$)}.
The proof of the second result requires more sophisticated analysis, but the main idea is the same as that of the proof of the first result.

\subsection{Online algorithm for $1|jrp, s=1, p_j=1, r_j|\sum C_j + c_Q$}
\label{sec:online_Cj}
In the online version of the problem, we do not know the number of the jobs or anything about them before their release dates. 

Since there is only one resource and the processing time and the weight of each job is one, we can suppose that the order of the jobs is the same in any schedule (a fixed non-decreasing release date order).
To simplify the notation, we introduce $K:=K_0+K_1$ for the cost of a replenishment.

At each time point $t$, first we can examine the jobs released at $t$, then we have to decide whether to replenish \new{the resource} or not, and finally we can \new{start} a job from $t$ for which the resource is replenished.
Note that if  we have to decide about the replenishment at time point $t$ \textit{before} we get information about the newly released jobs, then the problem is the same as the previous one with each job release date increased by one, by assuming that ordering takes one time unit.

\new{In the following algorithm, let $\js_t\subseteq\J$ denote the set of unscheduled jobs at time $t$.
When we say that 'start the jobs of $\js_t$  from time $t$', then it means that the jobs in $\js_t$ are put on the machine from time $t$ on without any delays between them in increasing release date order. Observe that their total completion time will be $t\cdot |\js_t| + G(|\js_t|)$, where $G(a) := a(a+1)/2$.}

\medskip
\medskip
\begin{varalgorithm}{Online{\_}SumCj}
\begin{small}
\caption{}\label{alg:online}

Initialization: $t:=0$
\begin{enumerate}
\item \new{Determine the set $\js_t$ of unscheduled jobs at time $t$.}\label{step:onl1_1}
\item \new{If $t\cdot |\js_t|+G(|\js_t|)\geq K$, then replenish the resource, start the jobs of $\js_t$ from $t$, and let $t:=t+|\js_t|$. 
Otherwise, $t := t+1$}.
\label{step:online_sumwork}

\item Go to step~\ref{step:onl1_1}. \label{step:online_last}
\end{enumerate}
\end{small}
\end{varalgorithm}

Let $(S,\Q)$ denote the solution created by Algorithm~\ref{alg:online}, while $(S^\star,\Q^\star)$ an arbitrary optimal solution.
\new{Recall  the notation $v(sol)$ denoting the objective function value of a solution $sol$.}
Let $t_i$ be the time moment of the $i^{th}$ replenishment in $\Q$ and $t_0:=0$.
\new{To simplify our notations, we will use $\js_{(i)}:=\js_{t_i}$ for the set \new{(block)} of jobs  that start} in $[t_i,t_{i+1})$ in $S$, see Figure~\ref{fig:online_S}.

\begin{figure}
\begin{tikzpicture}
\def\ox{0} 
\def\oy{0} 
\def\ui{1}
\def\uii{6.50}
\def\uiii{8}
\coordinate(o) at (\ox,\oy); 
\coordinate(u1) at (\ui,\oy);
\coordinate(u2) at (\uii,\oy);
\coordinate(u3) at (\uiii,\oy);

\tikzstyle{mystyle}=[draw, minimum height=0.5cm,rectangle, inner sep=0pt,font=\scriptsize]

\def\tl{11.0} 
\def\oyi{0}
\draw [-latex](\ox,\oyi) node[above left]{$S$} -- (\ox+\tl,\oyi) node[above,font=\small]{$t$};

\coordinate (uq1) at (\ui,\oyi);
\coordinate (uq2) at (\uii,\oyi);
\coordinate (uq3) at (\uiii,\oyi);
\draw[<-] (uq1) -- ($(uq1)-(0,0.2)$) node[below] {$t_{1}$};
\draw[<-] ($(uq1)-(-1.5,0.0)$) -- ($(uq1)-(-1.5,0.2)$) node[below] {$t_2=t_1+\new{b}_1$};
\draw[<-] ($(uq1)-(-2.7,0.0)$) -- ($(uq1)-(-2.7,0.2)$) node[below right=0cm and -0.2cm] {$t_3=t_2+\new{b}_2$};
\draw[<-] (uq2) -- ($(uq2)-(0,0.2)$) node[below] {$t_4$};
\draw[<-] (uq3) -- ($(uq3)-(0,0.2)$) node[below] {$t_5$};
\draw[<-] ($(uq3)-(-1,0.0)$) -- ($(uq3)-(-1,0.2)$) node[below right=0cm and -0.2cm] {$t_6=t_5+\new{b}_5$};

\def\pi{0.7}
\node(b1) [above right=-0.01cm and -0.01cm of u1,mystyle, minimum width=1.5 cm]{$\js_{(1)}$};
\node(b3) [right=0cm of b1,mystyle, minimum width=1.2cm]{$\js_{(2)}$};
\node(b3) [right=0cm of b3,mystyle, minimum width=1.4 cm]{$\js_{(3)}$};

\node(b3) [above right=-0.01cm and -0.00cm of uq2,mystyle, minimum width=1.2cm]{$\js_{(4)}$};

\node(b3) [above right=-0.01cm and -0.00cm of uq3,mystyle, minimum width=1cm]{$\js_{(5)}$};
\node(b3) [right=0cm of b3,mystyle, minimum width=1.4 cm]{$\js_{(6)}$};

\end{tikzpicture}
\caption{Schedule $S$ created by Algorithm~\ref{alg:online}.}\label{fig:online_S}
\end{figure}
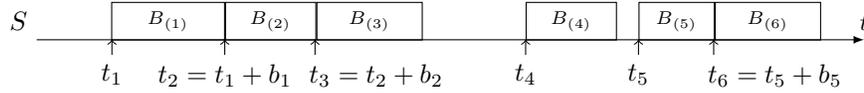

Clearly, the release date $r_j$ of a job $j\in \new{\jsi}$ has to be in $[t_{i-1}+1,t_i]$.
For technical reasons we introduce $\new{b_i:=|\jsi|}$,  $z_i:=|\{j\in \new{\jsi}:r_j=t_i\}|$, and $y_i:=\new{b}_i-z_i$.
The next observation follows from the condition of step~\ref{step:online_sumwork} of Algorithm~\ref{alg:online}.

\begin{observation}\label{obs:onl_schrule}
If the machine is idle in $[t_i-1,t_i]$, then $y_i(t_i-1)+G(y_i)<K$.
\end{observation}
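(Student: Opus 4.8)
The plan is to trace what Algorithm~\ref{alg:online} does at time $t_i-1$. First I would note that if the machine is idle throughout $[t_i-1,t_i]$, then the block $\js_{(i-1)}$ must already have finished by $t_i-1$, i.e.\ $t_{i-1}+b_{i-1}\le t_i-1$. Since after the $(i-1)^{st}$ replenishment the internal time pointer of the algorithm equals $t_{i-1}+b_{i-1}$ and is then increased by one at each iteration until the next replenishment (which occurs at $t_i$), the pointer takes the value $t=t_i-1$ at some iteration, and at that iteration step~\ref{step:online_sumwork} does \emph{not} trigger a replenishment (otherwise $t_i$ would not be the next replenishment date).

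Next I would identify $\js_{t_i-1}$, the set of jobs still unscheduled at time $t_i-1$. The last replenishment before $t_i$ took place at $t_{i-1}$, and at that moment the algorithm started every then-unscheduled job; hence no job with release date at most $t_{i-1}$ is unscheduled afterwards. On the other hand, no replenishment occurs strictly between $t_{i-1}$ and $t_i$, so every job released in $[t_{i-1}+1,t_i-1]$ is still unscheduled at $t_i-1$. Therefore $\js_{t_i-1}=\{\,j\in\J:\ r_j\in[t_{i-1}+1,t_i-1]\,\}$, and by the definition of the block $\js_{(i)}$ (whose jobs have release dates in $[t_{i-1}+1,t_i]$) together with $z_i$ and $y_i$, these are exactly the jobs of $\js_{(i)}$ whose release date is strictly less than $t_i$; thus $|\js_{t_i-1}|=b_i-z_i=y_i$.

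Finally, I would apply the (failed) test of step~\ref{step:online_sumwork} at the iteration with $t=t_i-1$: since the algorithm does not replenish there, we must have $t\cdot|\js_t|+G(|\js_t|)<K$, and substituting $|\js_t|=y_i$ gives $y_i(t_i-1)+G(y_i)<K$, as claimed. I do not expect any real obstacle: the only point requiring care is the bookkeeping argument establishing $|\js_{t_i-1}|=y_i$, and the rest is immediate from the definitions of $t_i$, $b_i$, $y_i$ and the stopping rule of the algorithm.
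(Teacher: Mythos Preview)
Your proof is correct and follows the same approach as the paper, which simply states that the observation ``follows from the condition of step~\ref{step:online_sumwork} of Algorithm~\ref{alg:online}.'' You have merely spelled out in detail why the algorithm evaluates the test at $t=t_i-1$ and why $|\js_{t_i-1}|=y_i$, which the paper leaves implicit.
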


\new{We} divide  $v(S,\Q)$ among the \new{blocks} $\jsi$ in the following way: for $i=1,2,\ldots$, let 
\[
ALG_i:=K+\sum_{j\in\jsi}C_j=K+t_i \new{b}_i+ \new{G(b_i)},
\]
i.e., the total completion time of the jobs of $\jsi$ in $S$ plus $K$, which is the cost of the replenishment at $t_i$.
Since the sets $\jsi$ are disjoint, we have $v(S,\Q)=\sum_{i\geq 1} ALG_i$.
Finally, note that any gap in $S$ has to be finished at $t_i$ for some $i \geq 1$. 
We divide the optimum value in a similar way: we introduce values $OPT_i$, $i=1,2\ldots$, where 
\begin{align*}
OPT_i:=\begin{cases}
K+\sum_{j\in\jsi}C^\star_j, &\text{if  $\exists\ (\R^\star_\ell, \tau_\ell)\in \Q^\star$ such that $t_{i-1} < \tau_\ell < t_i$}.\\
\sum_{j\in\jsi}C^\star_j, &\text{otherwise.}
\end{cases}
\end{align*}
Observe that  $v(S^\star,\Q^\star)\geq\sum_{i\geq 1} OPT_i$.
Now we prove that Algorithm~\ref{alg:online} is 2-competitive.

\begin{theorem}\label{thm:onl_2}
Algorithm~\ref{alg:online} is 2-competitive for the online  \linebreak $1|jrp,s=1,  p_j=1, r_j|\sum C_j+c_\Q$ problem.
\end{theorem}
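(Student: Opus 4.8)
The plan is to show that $ALG_i \le 2\cdot OPT_i$ for every $i \ge 1$, which immediately yields $v(S,\Q) = \sum_i ALG_i \le 2\sum_i OPT_i \le 2\,v(S^\star,\Q^\star)$. So the whole argument reduces to a block-by-block comparison, and the work is in establishing this local inequality.

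First I would dispose of the easy direction coming from the replenishment cost. If the block $\jsi$ is "expensive" in the sense that $ALG_i$ contains the $+K$ term (it always does by definition), I want to charge the $K$ in $ALG_i$ against the optimum. The natural split: on one hand $\sum_{j\in\jsi} C_j \le \sum_{j\in\jsi} C^\star_j$ is \emph{not} generally true, so instead I would bound $t_i b_i + G(b_i)$ against $\sum_{j\in\jsi}C^\star_j$ plus possibly $K$. The key structural fact to exploit is the stopping rule of Algorithm~\ref{alg:online}: the resource was replenished at $t_i$ precisely because $t_i b_i + G(b_i) \ge K$ (step~\ref{step:online_sumwork}), and at every earlier moment $t \in [t_{i-1}, t_i)$ the accumulated quantity was below $K$. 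Also every job $j \in \jsi$ has $r_j \le t_i$, so $C^\star_j \ge r_j + 1 \ge$ something, but more usefully $\sum_{j\in\jsi} C^\star_j \ge \sum_{j\in\jsi}(r_j + 1)$ is weak; what I really want is a lower bound reflecting that the optimum must \emph{also} schedule these $b_i$ jobs somewhere after their release dates, and if it does not pay a replenishment inside $(t_{i-1},t_i)$ then it must delay them.

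The heart of the argument, I expect, is: either (a) the optimum makes a replenishment strictly inside $(t_{i-1}, t_i)$, in which case $OPT_i$ carries a $+K$, and then I need $t_i b_i + G(b_i) \le K + \sum_{j\in\jsi}C^\star_j$, which combined with $ALG_i = K + t_i b_i + G(b_i) \le 2K + \sum C^\star_j \le 2(K + \sum C^\star_j) = 2\,OPT_i$ would follow once I show $t_i b_i + G(b_i) \le K + \sum_{j\in\jsi}C^\star_j$; or (b) the optimum makes no replenishment in $(t_{i-1},t_i)$, so all jobs of $\jsi$ with release dates in $(t_{i-1},t_i]$ are not started before the first $\Q^\star$-replenishment at or after $t_i$ — actually at or after their release dates but the nearest one is $\ge t_i$ in the relevant cases — forcing $C^\star_j \ge t_i$-ish, hence $\sum_{j\in\jsi}C^\star_j \ge t_i b_i$ roughly (with care about the jobs released exactly at $t_i$ versus earlier, i.e. the $z_i$ versus $y_i$ split and Observation~\ref{obs:onl_schrule}), and then $ALG_i = K + t_i b_i + G(b_i)$ must be bounded by $2\,OPT_i = 2\sum C^\star_j$. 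Here the $+K$ is the annoyance: in case (b) there is no $K$ in $OPT_i$, so I must absorb it into $2\sum_{j\in\jsi} C^\star_j - (t_i b_i + G(b_i))$; this is where Observation~\ref{obs:onl_schrule} (and the fact that $OPT$ could not have avoided a replenishment for these jobs overall, even if not in this particular interval — so some earlier $OPT_{i'}$ block carries the slack, meaning I may actually need a more global amortization rather than a strictly per-block one) comes in.

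The main obstacle is precisely this: the clean per-block inequality $ALG_i \le 2\,OPT_i$ may \emph{fail} for blocks where $OPT$ pays no replenishment, and one must either (i) show that the "missing $K$" is compensated by the completion-time gap within the same block using the idleness condition (Observation~\ref{obs:onl_schrule} says idle blocks have small $y_i$-contribution, which limits how much earlier than $t_i$ the optimum could have started these jobs), or (ii) set up a charging scheme across consecutive blocks — e.g. charge part of $ALG_i$ to $OPT_{i}$ and part to a neighboring $OPT_{i'}$ that does carry a $K$. I would first attempt route (i): carefully case-split on whether the machine is idle just before $t_i$ in $S$. If it is busy throughout $[t_{i-1}, t_i)$, the jobs arrived densely and $\sum C^\star_j$ is automatically large enough (the optimum can't beat a tight packing by much). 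If it is idle in $[t_i - 1, t_i]$, then by the stopping rule the pre-$t_i$ jobs ($y_i$ of them) had $y_i(t_i-1) + G(y_i) < K$, so the $K$ we pay is "justified" mostly by the $z_i$ jobs released exactly at $t_i$, whose optimal completion times are $\ge t_i + 1, t_i+2,\dots$ anyway — giving the slack to cover $K$. I would carry out this two-case analysis, handle the boundary bookkeeping for $z_i$ vs.\ $y_i$, and conclude $ALG_i \le 2\,OPT_i$ in every case, summing to finish.
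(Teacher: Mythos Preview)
Your overall architecture—decompose $v(S,\Q)$ and a lower bound on $v(S^\star,\Q^\star)$ into block contributions $ALG_i$ and $OPT_i$, then compare—matches the paper. Your case (b) is essentially correct and is the paper's Claim~\ref{clm:onl_2comp}: when $\Q^\star$ has no replenishment in $(t_{i-1},t_i)$, every $j\in\jsi$ must be replenished at time $\ge t_i$ in $\Q^\star$, and since the job order is fixed this forces $C^\star_j\ge C_j$, whence $OPT_i=\sum C^\star_j\ge\sum C_j\ge K$ by the stopping rule, and $ALG_i\le 2\,OPT_i$ follows. So your worry that case (b) is the hard one is inverted.

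The genuine gap is in case (a), specifically for a block that is \emph{not} preceded by an idle period in $S$. Here Observation~\ref{obs:onl_schrule} is unavailable, and your ``the optimum can't beat a tight packing by much'' is not an argument: the optimum may have accumulated a head start of up to $y_i$ time units at the last gap and carried it through several consecutive blocks, so $C^\star_j$ can be strictly smaller than $C_j$ for jobs in these intermediate blocks. A purely local, per-block bound has no handle on this head start. The paper resolves this not by per-block analysis but by grouping all blocks between two consecutive idle periods of $S$, letting $\ell'$ be the first block in the group for which $\Q^\star$ has no replenishment in its interval, applying the per-block bound (your case (b)) for $\mu\ge\ell'$, and then proving a single \emph{summed} inequality $\sum_{\mu=0}^{\ell'-1}ALG_{i+\mu}\le 2\sum_{\mu=0}^{\ell'-1}OPT_{i+\mu}$ for the prefix. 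The crucial point is that Observation~\ref{obs:onl_schrule} applies to the \emph{first} block of the group (which \emph{is} preceded by idleness), and this single application controls the entire head-start term $(t_i-1)y_i<K$ in the summed calculation. Your route~(ii) gestures at this, but the concrete mechanism—group by gaps, sum over the prefix up to $\ell'$, and invoke the idleness observation once—is what is missing from the plan.
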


\begin{proof}

Suppose that there is a gap in $S$ before $t_i$ ($i\geq 1$), and the next gap starts in $[t_{i+\ell},t_{i+\ell+1})$.
Then, we have $\sum_{\mu=0}^{\ell} \new{b}_{i+\mu}$ jobs \new{that start} between the two neighboring gaps, and their completion times are $t_i+\nu$, where $\nu=1,2,\ldots,\sum_{\mu=0}^{\ell} \new{b}_{i+\mu}$.
Hence,
\begin{align*}
\sum_{\mu=0}^{\ell} ALG_{i+\mu}=(\ell+1)K+t_i\sum_{\mu=0}^{\ell} \new{b}_{i+\mu} + \new{G\left(\sum_{\mu=0}^{\ell} b_{i+\mu}\right)}.
\end{align*}

Now we verify that $\sum_{\mu=0}^{\ell} ALG_{i+\mu}\leq 2\cdot \sum_{\mu=0}^{\ell} OPT_{i+\mu}$, thus the theorem follows from the previous observations on $ALG_i$ and $OPT_i$.
Let $0\leq \ell'\leq \ell$ be the smallest index such that  there is no replenishment in $[t_{i+\ell'-1}+1,t_{i+\ell'}-1]$ in $\Q^\star$.
If there is no such index, then let $\ell':=\ell+1$.


\begin{claim}\label{clm:onl_2comp}
$ALG_{i+\mu}\leq 2\cdot OPT_{i+\mu}$ for all $\mu\in[\ell',\ell]$.
\end{claim}
\begin{proof}
If $\ell'=\ell+1$, then the claim is trivial.

Otherwise, we have $r_j\geq t_{i+\ell'-1}+1=:t'$ for each job $j\in\cup_{\mu=\ell'}^{\ell}\js_{(i+\mu)}$, hence they cannot be \new{started} before the first replenishment after $t'$.
The first replenishment after $t'$ in $\Q^\star$ is not earlier than $t_{i+\ell'}$ in $S^\star$ due to the definition of $\ell'$.
We have assumed that the order of the jobs is the same in every schedule, and  there is no idle time among the jobs in $\cup_{\mu=\ell'}^{\ell}\js_{(i+\mu)}$ in $S$ by the choice of $i$ and $\ell$, thus, we have $C^\star_j\geq C_j$ for each $j\in \cup_{\mu=\ell'}^{\ell}\js_{(i+\mu)}$.
Since $OPT_{i+\mu}\geq\sum_{j\in\js_{(i+\mu})}C^\star_j\geq \sum_{j\in\js_{(i+\mu})}C_j\geq K$ for all $\mu\in[\ell',\ell]$, where the last inequality follows from the condition of step~\ref{step:online_sumwork} of Algorithm~\ref{alg:online}, we have $ALG_{i+\mu}=K+\sum_{j\in\js_{(i+\mu)}}C_j\leq 2\cdot OPT_{i+\mu}$ for all $\mu\in[\ell',\ell]$.
\end{proof}

If $\ell'=0$, then we have proved the theorem.
From now on, we suppose $\ell'\geq 1$.
Claim~\ref{clm:onl_2comp} shows that it \new{remains} to prove $\sum_{\mu=0}^{\ell'-1} ALG_{i+\mu}\leq 2\cdot \sum_{\mu=0}^{\ell'-1} OPT_{i+\mu}$, if there is a replenishment in every interval $[t_{i+\mu-1}+1,t_{i+\mu}-1]$, $\mu=0,\ldots,\ell'-1$ in $\Q^\star$.
See Figure~\ref{fig:onl_proof} for an illustration of a possible realization of $S^\star$.

\begin{figure}
\new{
\begin{tikzpicture}
\def\ox{0} 
\def\oy{0} 
\def\ui{1}
\def\uii{6.50}
\def\uiii{8}
\coordinate(o) at (\ox,\oy); 
\coordinate(u1) at (\ui,\oy);
\coordinate(u2) at (\uii,\oy);
\coordinate(u3) at (\uiii,\oy);

\tikzstyle{mystyle}=[draw, minimum height=0.5cm,rectangle, inner sep=0pt,font=\scriptsize]

\def\tl{11.0} 
\def\oyi{0}
\draw [-latex](\ox,\oyi) node[above left]{$S^{\;\;}$} -- (\ox+\tl,\oyi) node[above,font=\small]{$t$};

\coordinate (uq1) at (\ui,\oyi);
\coordinate (uq2) at (\uii,\oyi);
\coordinate (uq3) at (\uiii,\oyi);
\draw[<-] ($(uq1)-(0.2,0.0)$) -- ($(uq1)-(0.2,0.2)$) node[below] {$t_{i-1}$};
\draw[<-] ($(uq1)-(-1.32,0.0)$) -- ($(uq1)-(-1.32,0.2)$) node[below] {$t_i$};
\draw[<-] ($(uq1)-(-3.22,0.0)$) -- ($(uq1)-(-3.22,0.2)$) node[below right=0cm and -0.2cm] {$t_{i+1}=t_i+b_i$};

\draw[dashed] ($(uq1)+(2.9,0.0)$) -- ($(uq1)+(2.9,0.5)$);
\draw[dashed] ($(uq1)+(4.9,0.0)$) -- ($(uq1)+(4.9,0.5)$);

\draw [decorate,decoration={brace,amplitude=3pt}] (2.32,0.5) --node [above]{$y_i$} (3.9,0.5);
\draw [decorate,decoration={brace,amplitude=3pt}] (3.9,0.5) --node [above]{$z_i$} (4.22,0.5);
\draw [decorate,decoration={brace,amplitude=3pt}] (4.22,0.5) --node [above]{$y_{i+1}$} (5.9,0.5);
\draw [decorate,decoration={brace,amplitude=3pt}] (5.9,0.5) --node [above]{$z_{i+1}$} (6.52,0.5);

\def\pi{0.7}
\node(b1) [above right=-0.0cm and -0.2cm of u1,mystyle, minimum width=1.1 cm]{$\js_{(i-1)}$};
\node(b0) [left=0.2cm of b1, minimum width=0.1 cm]{$\ldots$};
\node(b3) [right=0.4cm of b1,mystyle, minimum width=1.9cm]{$\js_{(i)}$};
\node(b3) [right=0cm of b3,mystyle, minimum width=2.3 cm]{$\js_{(i+1)}$};
\node(b3) [right=0.5cm of b3, minimum width=0.1 cm]{$\ldots$};

\end{tikzpicture}

\begin{tikzpicture}
\def\ox{0} 
\def\oy{0} 
\def\ui{1}
\def\uii{7.10}
\def\uiii{8}
\coordinate(o) at (\ox,\oy); 
\coordinate(u1) at (\ui,\oy);
\coordinate(u2) at (\uii,\oy);
\coordinate(u3) at (\uiii,\oy);

\tikzstyle{mystyle}=[draw, minimum height=0.5cm,rectangle, inner sep=0pt,font=\scriptsize]

\def\tl{11.0} 
\def\oyi{0}
\draw [-latex](\ox,\oyi) node[above left]{$S^\star$} -- (\ox+\tl,\oyi) node[above,font=\small]{$t$};

\coordinate (uq1) at (\ui,\oyi);
\coordinate (uq2) at (\uii,\oyi);
\coordinate (uq3) at (\uiii,\oyi);
\draw[] ($(uq1)-(0.2,0.0)$) -- ($(uq1)-(0.2,0.2)$) node[below] {$t_{i-1}$};
\draw[] ($(uq1)-(-1.32,0.0)$) -- ($(uq1)-(-1.32,0.2)$) node[below] {$t_i$};
\draw[] ($(uq1)-(-3.22,0.0)$) -- ($(uq1)-(-3.22,0.2)$) node[below] {$t_{i+1}$};
\draw[] (uq2) -- ($(uq2)-(0,0.2)$) node[below] {$t_{i+\ell'-1}$} node[right=0.15cm]{(no repl.)};
\draw[<-] ($(uq2)+(1.7,0.0)$) -- ($(uq2)+(1.7,-0.2)$) node[below right=0cm and -0.2cm] {$t_{i+\ell'}$};
\draw[<-] ($(uq2)+(2.9,0.0)$) -- ($(uq2)+(2.9,-0.2)$) node[below right=0cm and -0.2cm] {$t_{i+\ell'+1}$};

\draw[<-] ($(uq1)+(0.2,0.0)$) -- ($(uq1)-(-0.2,0.2)$) node[below] {};
\draw[<-] ($(uq1)+(1.5,0.0)$) -- ($(uq1)-(-1.5,0.2)$) node[below] {};
\draw[<-] ($(uq1)+(3.6,0.0)$) -- ($(uq1)-(-3.6,0.2)$) node[below] {};

\draw[dashed] ($(uq1)+(1.8,0.0)$) -- ($(uq1)+(1.8,0.5)$);
\draw[dashed] ($(uq1)+(3.8,0.0)$) -- ($(uq1)+(3.8,0.5)$);

\def\pi{0.7}
\node(b1) [above right=-0.0cm and 0.2cm of u1,mystyle, minimum width=1.9 cm]{$\jsi$};
\node(b0) [left=0.4cm of b1, minimum width=0.1 cm]{$\ldots$};

\node(b3) [right=0cm of b1,mystyle, minimum width=2.3cm]{$\js_{(i+1)}$};
\node(b3) [right=0.5cm of b3, minimum width=0.1 cm]{$\ldots$};

\node(b3) [above right=-0.01cm and 1.70cm of uq2,mystyle, minimum width=1.2cm]{$\js_{(i+\ell')}$};
\node(b3) [right=0.1cm of b3, minimum width=0.1 cm]{$\ldots$};
\end{tikzpicture}}
\caption{\new{Schedule $S$ created by Algorithm~\ref{alg:online} and} a possible realization of $S^\star$.}\label{fig:onl_proof}
\end{figure}
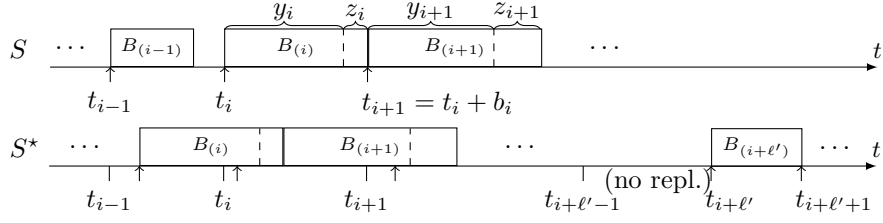

\new{Let $\beta:=\sum_{\mu=1}^{\ell'-1} \new{b}_{i+\mu}$.} 
Observe that 
\begin{align}
\sum_{\mu=0}^{\ell'-1} OPT_{i+\mu}\geq&\ell'K+(t_{i-1}+1)y_i+\new{G(y_i)}+t_i\left(z_i+\new{\beta}\right)+\new{G(z_i + \beta)},\label{ieq:opt_lb1}
\end{align}
since $OPT_{i+\mu}=K+\sum_{j\in\J^S_{i+\mu}}C^\star_j$ for all $0\leq \mu\leq \ell'-1$,  there are $y_i$ jobs with a release date at least $t_{i-1}+1$, and another $z_i+\new{\beta}$ jobs with a release date at least $t_i$. 

\new{On the other hand, we have 
\begin{equation}
\sum_{\mu=0}^{\ell'-1} ALG_{i+\mu}=\ell'K+t_i(y_i+z_i+\beta)+G(y_i+z_i+\beta),\label{ieq:alg_form}
\end{equation}
}
therefore,
\begin{align*}
2\cdot\sum_{\mu=0}^{\ell'-1} OPT_{i+\mu}&-\sum_{\mu=0}^{\ell'-1} ALG_{i+\mu}\\
\geq & \ell'K+2(t_{i-1}+1)y_i+t_i\left(z_i+\new{\beta}\right)-t_iy_i+\new{2G(y_i)}+\\
&\new{2G(z_i+\beta)-G(y_i+z_i+\beta)}\\
\geq &\ell'K+(2t_{i-1}+1)y_i+t_i\left(z_i+\new{\beta}\right)-(t_i-1)y_i\\
\geq &\ell'K+(2t_{i-1}+1)y_i+t_i\left(z_i+\new{\beta}\right)-K\geq 0,
\end{align*}
where \new{the first inequality is a direct consequence of (\ref{ieq:opt_lb1}) and (\ref{ieq:alg_form}), the second one follows from the fact that $2G(a) + 2G(b) \geq G(a+b)$ for any $a$ and $b$.} 
The first inequality of the last line follows from Observation~\ref{obs:onl_schrule}, and the second from $\ell'\geq 1$.

Finally, we show that the above analysis is tight.
Suppose that there is only one job with a release date 0.
Algorithm~\ref{alg:online} \new{starts} this job from $t=K-1$, thus $v(S,\Q)=2K$.
In the optimal solution $(S^\star,\Q^\star)$, this job starts at $t=0$, thus $v(S^\star,\Q^\star)=K+1$.
As $K$ tends to infinity $2K/(K+1)$ tends to 2, which shows that  Algorithm~\ref{alg:online} is not $\alpha$-competitive  for any $\alpha<2$.
\end{proof}
We close this section by lower bounds on the best possible competitive ratios:

\begin{theorem}\label{thm:onl_negative}
There is no $\left(\frac32-\varepsilon\right)$-competitive algorithm for any constant $\varepsilon>0$ for 
$1|jrp, s=1, p_j=1, r_j|\sum C_j+c_\Q$.
\end{theorem}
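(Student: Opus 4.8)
The plan is to run a standard adversary argument against an arbitrary deterministic online algorithm $A$. I would fix $\varepsilon>0$, let $K:=K_0+K_1$ be a large constant (its size chosen at the end depending on $\varepsilon$), and have the adversary first release a single job $j_1$ at time $0$ and watch how long $A$ postpones it. On the instance $I_0$ consisting of $j_1$ alone the optimum is $v(\mathrm{OPT},I_0)=K+1$ (replenish at $0$, run $j_1$ in $[0,1]$), and if $A$ never replenishes its cost is infinite; so I may assume $A$ makes a first replenishment at some finite time $t_1\ge 0$ on $I_0$, after which $j_1$ completes at some $C_1\ge t_1+1$. I then branch on $C_1$.

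\emph{Case 1: $C_1\ge K/2$ on $I_0$.} Here the adversary releases nothing more, so $v(A,I_0)\ge K+C_1\ge \tfrac32 K$ and
\[
\frac{v(A,I_0)}{v(\mathrm{OPT},I_0)}\ \ge\ \frac{3K/2}{K+1}\ =\ \frac32\Bigl(1-\tfrac1{K+1}\Bigr),
\]
which exceeds $\tfrac32-\varepsilon$ once $K$ is large enough.

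\emph{Case 2: $C_1<K/2$ on $I_0$.} Then $t_1\le C_1-1<K/2-1$, and the adversary commits to the instance $I_1$ obtained from $I_0$ by additionally releasing a second job $j_2$ at time $t_1+1$. Since $I_0$ and $I_1$ reveal the same information up to time $t_1$ (only $j_1$, at time $0$) and $A$ is deterministic, $A$ still replenishes at $t_1$ on $I_1$; as $r_{j_2}=t_1+1>t_1$, job $j_2$ forces a further replenishment at time $\ge t_1+1$, so $A$ spends at least $2K$ on replenishments, and the two unit jobs (available only at $t_1$ and $t_1+1$) give $C_1+C_2\ge (t_1+1)+(t_1+2)$, hence $v(A,I_1)\ge 2K+2t_1+3$. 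Replenishing once at $t_1+1$ and running $j_1,j_2$ consecutively shows $v(\mathrm{OPT},I_1)\le K+(t_1+2)+(t_1+3)=K+2t_1+5$, so
\[
\frac{v(A,I_1)}{v(\mathrm{OPT},I_1)}\ \ge\ \frac{2K+2t_1+3}{K+2t_1+5}\ =\ 1+\frac{K-2}{K+2t_1+5}\ \ge\ \frac{3K+1}{2K+3},
\]
the last step using that the fraction is decreasing in $t_1$ together with $t_1<K/2-1$. Since $\tfrac{3K+1}{2K+3}\to\tfrac32$, this again beats $\tfrac32-\varepsilon$ for $K$ large. In both cases $A$ is not $(\tfrac32-\varepsilon)$-competitive.

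The points that need care are all minor: tuning the threshold — $K/2$ is exactly the value at which the bound from $I_0$ in Case 1 and the bound from $I_1$ in Case 2 both just clear $\tfrac32-\varepsilon$, so a threshold that is too small or too large breaks one of the two cases; the observation that passing from $I_0$ to $I_1$ does not alter $A$'s first replenishment, which is where determinism (equivalently, an adaptive adversary that reads off $t_1$) and the strict inequality $r_{j_2}>t_1$ are used; and the routine verification that the two displayed ratios are $\ge \tfrac32-\varepsilon$ for every admissible $t_1$ once $K\ge K(\varepsilon)$. I do not anticipate any genuinely hard step.
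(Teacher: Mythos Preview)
Your proof is correct and follows essentially the same adversary argument as the paper: release one job at time $0$, and depending on when the algorithm first commits, either stop (giving ratio $\frac{K+t+1}{K+1}$) or release a second job just after the commitment (giving ratio $\frac{2K+2t+3}{K+2t+5}$). The only cosmetic difference is that the paper balances the two ratios by solving for their crossing point $\bar t\in[K/2-5/4,\,K/2-1]$, whereas you do an explicit case split at the threshold $C_1=K/2$; both yield the same $3/2$ lower bound.
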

\begin{proof}
Suppose that there is only one job arriving at $0$.
If an algorithm \new{starts} it at some time point $t$, then that algorithm cannot be better than $\frac{K+t+1}{K+1}$-competitive, because it is possible that no other jobs will arrive. 
However, if it \new{starts} the first job at $t$, then it is possible that another job  arrives at $t+1$.
In this case \new{starting the two jobs at time $t+1$ and $t+2$, respectively, with one replenishment yields a solution} of  value $K+2t+5$, while $v(S,\Q)=2K+2t+3$, thus the competitive ratio cannot be better than $\frac{2K+2t+3}{K+2t+5}$.
Observe that if $K$ is given, then the first ratio increases, while the second one decreases as $t$ increases.
This means that  we have to find a time point $\bar{t}\geq 0$ such that $\frac{K+\bar{t}+1}{K+1}=\frac{2K+2\bar{t}+3}{K+2\bar{t}+5}$, because then $\frac{K+\bar{t}+1}{K+1}$ is a lower bound on the best competitive ratio.

Some algebraic calculations show that $\bar{t}\in [K/2-5/4,K/2-1]$, thus the lower bound is at least $\frac{(3/2)K-1/4}{K+1}$. 
Therefore, for any $\varepsilon>0$, there is a sufficiently large $K$, such that there is no $(3/2-\varepsilon)$-competitive algorithm for the problem.
\end{proof}

When the job weights are arbitrary, then a slightly stronger bound can be derived.
\begin{theorem}\label{thm:onl_negative_weighted}
There is no $\left(\frac{\sqrt{5}+1}{2}-\epsilon\right)$-competitive algorithm for any constant $\epsilon > 0$ for $1|jrp, s=1, p_j=1, r_j|\sum w_jC_j+c_\Q$.
\end{theorem}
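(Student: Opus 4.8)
The plan is to run the adaptive adversary of Theorem~\ref{thm:onl_negative}, but to exploit the job weights so that the branch in which a second job is released becomes relatively more costly for every online algorithm, pushing the bound from $3/2$ up to the golden ratio. Concretely, I would release a single job $j_1$ with $r_{j_1}=0$, $p_{j_1}=1$, $w_{j_1}=1$, fix a threshold $T$ (chosen below to balance the two branches), and then react to the algorithm: if it starts $j_1$ at some time $t\ge T$, no further jobs arrive; if it starts $j_1$ at some time $t<T$, a second job $j_2$ is released at time $t+1$, with $r_{j_2}=t+1$, $p_{j_2}=1$, and a tiny weight $w_{j_2}=\varepsilon>0$. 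Since any algorithm must eventually start $j_1$ (otherwise its ratio is unbounded), exactly one of the two cases applies, and the adversary only needs to know, by time $t+1$, whether $j_1$ was started by time $t$.

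In the first case an optimal solution replenishes at $0$ and runs $j_1$ immediately, with cost $K+1$, while the algorithm made at least one replenishment and has $C_{j_1}=t+1\ge T+1$, so its cost is at least $K+T+1$; hence its competitive ratio is at least $\frac{K+T+1}{K+1}$. In the second case the algorithm has already committed a replenishment no later than $t$ and has $j_1$ completed at $t+1$, so for $j_2$ it needs a \emph{second} replenishment (at a time $\ge t+1$) and can complete $j_2$ no earlier than $t+2$; thus its cost is at least $2K+(t+1)+\varepsilon(t+2)$. For the optimum I would argue, using $K$ much larger than $T$ (so a second replenishment is wasteful) and $\varepsilon<1$ (so $j_1$ should precede $j_2$), that a single replenishment at $t+1$ followed by $j_1$ and then $j_2$ is optimal, with cost $K+(t+2)+\varepsilon(t+3)$; hence the ratio is at least $\frac{2K+(t+1)+\varepsilon(t+2)}{K+(t+2)+\varepsilon(t+3)}$, which for $\varepsilon\to 0$ tends to $\frac{2K+t+1}{K+t+2}$ and is decreasing in $t$.

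I would then choose $T$ so that $\frac{K+T+1}{K+1}=\frac{2K+T+1}{K+T+2}$; this quadratic gives $T=\frac{\sqrt{5K^2+4K}-(K+2)}{2}$ (rounded up to an integer), and as $K\to\infty$ both sides tend to $1+\frac{\sqrt5-1}{2}=\frac{\sqrt5+1}{2}$. Combining the pieces: given any $\delta>0$, pick $K$ large and $\varepsilon$ small; in the first case the ratio is at least $\frac{K+T+1}{K+1}\ge\frac{\sqrt5+1}{2}-\delta$, while in the second case $t<T$ forces the ratio to be at least $\frac{2K+t+1}{K+t+2}$ up to an $O(\varepsilon)$ correction, and since $\frac{2K+t+1}{K+t+2}$ is decreasing in $t$ this is at least $\frac{2K+T+1}{K+T+2}=\frac{K+T+1}{K+1}\ge\frac{\sqrt5+1}{2}-\delta$ as well. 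As $\delta>0$ was arbitrary, no $\big(\frac{\sqrt5+1}{2}-\epsilon\big)$-competitive algorithm can exist.

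The delicate point, as always with such lower bounds, is to make the adaptive argument watertight for \emph{every} algorithm: one must check that any extra replenishments or idle periods the algorithm inserts can only increase its cost, so the stated lower bounds on the algorithm's cost really hold unconditionally; and one must correctly pin down the offline optimum in the two-job branch — ruling out the two-replenishment solution (valid because $T=\Theta(K)\ll K$) and fixing the processing order by weight (valid because $w_{j_2}=\varepsilon<w_{j_1}=1$). The algebra that locates $T$ and verifies that the limit is exactly $\frac{\sqrt5+1}{2}$ is then routine, and a short calculation (optimizing over the weight of $j_1$) confirms that weight $1$ is the adversary's best choice, so the bound cannot be improved beyond the golden ratio by this construction.
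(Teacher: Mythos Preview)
Your proposal is correct and follows essentially the same approach as the paper: release a unit-weight job at time $0$, react adaptively to the algorithm's start time $t$, and in the second branch release a job of vanishing weight at $t+1$; balancing the two ratios $\frac{K+t+1}{K+1}$ and $\frac{2K+t+1}{K+t+2}$ (the $w_2\to 0$ limit of the paper's $c_2$) and letting $K\to\infty$ yields the golden ratio. The only cosmetic difference is that the paper keeps the second weight $w_2$ as a symbolic parameter throughout and takes $w_2\to 0$ at the end, whereas you fix $w_2=\varepsilon$ up front and phrase the argument via an explicit threshold $T$; the algebra and the resulting $\bar t=T=\frac{\sqrt{5K^2+4K}-(K+2)}{2}$ are identical.
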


\begin{proof}
Suppose that job $j_1$ arrives at time 0 with weight $w_1=1$. If no other jobs arrive and the algorithm waits until time $t$ before \new{starting} $j_1$ then it is at least $c_1(t)$-competitive where $c_1(t)=\frac{K+t+1}{K+1}$. If another job $j_2$ arrives at time $t+1$ with weight $w_2$, then the algorithm is at least $c_2(t)$-competitive where $c_2(t)=\frac{2K+t+1+(t+2)w_2}{K+t+2+(t+3)w_2}$. To get a lower bound for an arbitrary online algorithm, we want to calculate the value of
$$
\max_{K,w_2} \min_{t}\max(c_1(t),c_2(t)).
$$	
It is easy to see that $c_1(t)$ is an increasing function of $t$, and if $K>w_2+1$ then $c_2(t)$ is a decreasing function of $t$. The second part can be proved by the following simple calculation:
$$
c_2(t+1)-c_2(t)=\frac{(w_2+1)(w_2+1-K)}{(K+t+2+(t+3)w_2)(K+t+3+(t+4)w_2)}<0
$$
if $K>w_2+1$.
So we get the best value $\overline{t} \geq 0$ when $c_1(t)=c_2(t)$.
Solving the equation, we get that
$$
\overline{t}=\frac{\sqrt{4K^2w_2-4Kw_2^2+5K^2+2Kw_2+5w_2^2+4K+4w_2}-K-3w_2-2}{2(w_2+1)}.
$$
Substituting $\overline{t}$ into $c_1(t)$ we get the following formula
\begin{multline*}
c(K,w_2)=\\ 
\frac{\sqrt{4K^2w_2-4Kw_2^2+5K^2+2Kw_2+5w_2^2+4K+4w_2}+K+2Kw_2-w_2}{2(w_2+1)(K+1)}.
\end{multline*}
\noindent
So
$$
\lim_{K\to\infty} c(K,w_2)=\frac{\sqrt{4w_2+5}+2w_2+1}{2(w_2+1)},
$$
and 
$$
\lim_{w_2\to 0} \frac{\sqrt{4w_2+5}+2w_2+1}{2(w_2+1)} =  \frac{\sqrt{5}+1}{2},
$$
which gives the desired result.
\end{proof}

\subsection{Online algorithm for $1|jrp, s=1, p_j=1, r_j|\sum F_j + c_Q$}\label{sec:online_Fj}

In this section we describe a 2-competitive algorithm for the online version of the total flow time minimization problem for the special case, where there is only one resource, and $p_j=1$ for all jobs.
Observe that this problem differs from the problem of the previous section only in the objective function.
Hence, we can also suppose that the order of the jobs is the same in any schedule, and $K:=K_0+K_1$.

The algorithm is almost the same as in the previous section and several parts of the proof are analogous.
\new{We also use the  notations of the previous section, e.g., $\js_{t}, G(a)$, etc.
Observe that if we start the jobs of $\js_{t}$ from $t$, then their total flow time will be $\sum_{j\in\js_{t}}(t-r_j)+G(|\js_{t}|)$.
}

\medskip
\begin{varalgorithm}{Online{\_}SumFj}
\begin{small}
\caption{}
\label{alg:online_flow}

Initialization: $t:=0$
\begin{enumerate}
\item \new{Determine the set $\js_{t}$ of unscheduled jobs at time $t$.}\label{step:onlinesumflow_first}
\item \new{If $\sum_{j\in\js_{t}}(t-r_j)+G(|\js_{t}|)\geq K$, then replenish the resource, start the jobs of $\js_{t}$ from $t$ and let $t:=t+|\js_{t}|$. Otherwise, $t:=t+1$.}\label{step:onlineflow_sumwork}
\item Go to step~\ref{step:onlinesumflow_first}. \label{step:onlwinesumflow_last}
\end{enumerate}
\end{small}
\end{varalgorithm}


Let $(S,\Q)$ denote the solution created by Algorithm~\ref{alg:online_flow}, while $(S^\star,\Q^\star)$ an arbitrary optimal solution.
The notations $t_i, \new{\js_{(i)},b}_i,y_i,z_i$ have the same meaning as in the previous section,  see again  Figure~\ref{fig:online_S} for illustration.
The next observation is analogous to Observation~\ref{obs:onl_schrule}:

\begin{observation}\label{obs:onl_schrule_flow}
If the machine is idle in $[t_i-1,t_i]$, then $\sum_{j\in \new{\js_{t_i-1}}}(t_i-1-r_j)+\new{G(y_i)}<K$.
\end{observation}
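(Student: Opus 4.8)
The plan is to imitate the proof of Observation~\ref{obs:onl_schrule} almost verbatim, simply reading off the behaviour of Algorithm~\ref{alg:online_flow} at time $t_i-1$. First I would record that whenever a replenishment occurs at some time $\tau$, the algorithm immediately puts the (nonempty) set $\js_\tau$ on the machine starting at $\tau$, so the machine is busy on $[\tau,\tau+1]$. Hence, if the machine is idle on $[t_i-1,t_i]$, there was no replenishment at $t_i-1$ (in particular $t_{i-1}\le t_i-2$), which means that when Algorithm~\ref{alg:online_flow} processed time $t_i-1$ in step~\ref{step:onlineflow_sumwork} it took the ``otherwise'' branch. By the test in that step this gives
\[
\sum_{j\in\js_{t_i-1}}(t_i-1-r_j)+G\big(|\js_{t_i-1}|\big)<K .
\]

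It then remains to identify $|\js_{t_i-1}|$ with $y_i$. At the preceding replenishment time $t_{i-1}$ every then-unscheduled job was placed on the machine, so no job with release date at most $t_{i-1}$ is still unscheduled at time $t_i-1$; therefore $\js_{t_i-1}$ consists exactly of the jobs released in $(t_{i-1},t_i-1]$, and these are precisely the jobs of the block $\js_{(i)}$ whose release date is strictly less than $t_i$, i.e.\ $b_i-z_i=y_i$ of them. Substituting $|\js_{t_i-1}|=y_i$ into the displayed inequality yields the observation.

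The argument is essentially a one-line reading of the algorithm, so there is no real obstacle; the only points that deserve a sentence each are (i) that a replenishment forces the machine to be busy in the following unit interval — so ``idle on $[t_i-1,t_i]$'' genuinely rules out a replenishment at $t_i-1$ — and (ii) the bookkeeping that $\js_{t_i-1}$ contains no job inherited from an earlier block, which is exactly what makes $|\js_{t_i-1}|=y_i$ rather than something larger.
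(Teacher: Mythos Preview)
Your proof is correct and matches the paper's approach: the paper simply states that the observation is analogous to Observation~\ref{obs:onl_schrule} and follows from the condition in step~\ref{step:onlineflow_sumwork}, and you have spelled out precisely that reasoning. One tiny quibble: the parenthetical ``in particular $t_{i-1}\le t_i-2$'' is not quite right when $i=1$ (since $t_0=0$ is a convention, not a replenishment time), but this remark is never used and the core argument---that idleness forces the ``otherwise'' branch at $t_i-1$, and that $|\js_{t_i-1}|=y_i$---is sound.
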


Let $ALG_i:=K+\sum_{j\in\new{\js_{(i)}}}F_j$ ($i\geq 1$), thus we have $v(S,\Q)=\sum_{i\geq 1}ALG_i$.
Let $s_i$ denote the number of replenishments in $\Q^\star$ in $[t_{i-1}+1,t_i-1]$, and let $OPT_i:=s_iK+\sum_{j\in\new{\js_{(i)}}}F^\star_j$.
Observe that $v(S^\star,\Q^\star)\geq\sum_{i\geq 1}OPT_i$. We are ready to prove the main result of this section.

\begin{theorem}\label{thm:onl_flow}
Algorithm~\ref{alg:online_flow} is 2-competitive for the online \linebreak $1|jrp, s=1, p_j=1, r_j|\sum F_j+c_\Q$ problem.
\end{theorem}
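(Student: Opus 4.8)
The plan is to mimic the structure of the proof of Theorem~\ref{thm:onl_2}, replacing completion times by flow times throughout, and to be careful about the two places where the flow-time objective behaves differently: the lower bound $OPT_i \geq \sum_{j \in \js_{(i)}} F^\star_j \geq K$ used in the "tail" case no longer holds automatically (a job scheduled very close to its release date has tiny flow time, so the sum over a block need not reach $K$), and the quantities $\sum_{j \in \js_{(i)}} r_j$ now enter the bookkeeping. So first I would fix a maximal run of blocks $\js_{(i)}, \ldots, \js_{(i+\ell)}$ lying strictly between two consecutive gaps of $S$ (a gap ending exactly at $t_i$, the next gap starting in $[t_{i+\ell}, t_{i+\ell+1})$), exactly as in the earlier proof, and reduce the theorem to showing $\sum_{\mu=0}^{\ell} ALG_{i+\mu} \leq 2 \sum_{\mu=0}^{\ell} OPT_{i+\mu}$ for each such run. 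Note that within such a run the machine is busy, so the jobs of $\cup_\mu \js_{(i+\mu)}$ occupy the consecutive slots $t_i+1, \ldots, t_i + \sum_\mu b_{i+\mu}$ in $S$.

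Next I would introduce $\ell'$ as the smallest index in $[0,\ell]$ with no replenishment of $\Q^\star$ inside $[t_{i+\ell'-1}+1,\, t_{i+\ell'}-1]$ (and $\ell':=\ell+1$ if none exists), and split the run at $\ell'$. For the "upper part" $\mu \in [\ell',\ell]$ I would argue as before that every job of these blocks has release date $\geq t_{i+\ell'-1}+1=:t'$, hence in $S^\star$ cannot start before the first $\Q^\star$-replenishment after $t'$, which by definition of $\ell'$ is no earlier than $t_{i+\ell'}$; combined with the fixed job order and the absence of idle time in $S$ on this stretch, this gives $C^\star_j \geq C_j$, hence $F^\star_j \geq F_j$, for every such job, so $ALG_{i+\mu} = K + \sum_{j} F_j$. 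To conclude $ALG_{i+\mu} \leq 2\, OPT_{i+\mu}$ here I still need $\sum_{j\in\js_{(i+\mu)}} F_j \geq K$; this follows from the replenishment condition in step~\ref{step:onlineflow_sumwork}, because block $\js_{(i+\mu)}$ is created precisely when the accumulated flow-time expression $\sum_{j\in\js_{t_{i+\mu}}}(t_{i+\mu}-r_j)+G(|\cdot|)$ first reaches $K$, and these are exactly the flow times $F_j$ of the jobs of $\js_{(i+\mu)}$ in $S$ (they start at $t_{i+\mu}$). For the "lower part" $\mu \in [0,\ell'-1]$, where $\Q^\star$ does replenish inside each $[t_{i+\mu-1}+1, t_{i+\mu}-1]$, I would lower-bound $\sum_{\mu=0}^{\ell'-1} OPT_{i+\mu}$ by $\ell' K$ plus a sum of $F^\star_j$ over all jobs in these blocks, and then bound each $F^\star_j$ from below using $C^\star_j \geq$ (the first $\Q^\star$-replenishment time after $r_j$) and $r_j \geq t_{i-1}+1$ for the $y_i$ "old" jobs and $r_j \geq t_i$ for the remaining $z_i + \beta$ jobs (with $\beta := \sum_{\mu=1}^{\ell'-1} b_{i+\mu}$), mirroring inequality~(\ref{ieq:opt_lb1}). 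On the $ALG$ side, $\sum_{\mu=0}^{\ell'-1} ALG_{i+\mu} = \ell' K + \sum_{j} F_j$ where the $F_j$ are the flow times in $S$, and since those jobs occupy consecutive slots starting at $t_i+1$, $\sum_j F_j = t_i(y_i+z_i+\beta) + G(y_i+z_i+\beta) - \sum_j r_j$.

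Then I would carry out the arithmetic $2\sum OPT - \sum ALG \geq 0$ as in the earlier proof: the $\sum_j r_j$ terms should largely cancel between the $OPT$ lower bound and the $ALG$ expression (in $OPT$ each $F^\star_j$ contributes a $-r_j$ while the replenishment-time lower bound contributes something $\geq$ the corresponding term in $ALG$), the quadratic terms are handled by the superadditivity fact $2G(a)+2G(b)\geq G(a+b)$ already invoked in the proof of Theorem~\ref{thm:onl_2}, and what remains is $\geq \ell' K + (2t_{i-1}+1)y_i + t_i(z_i+\beta) - (\text{something} \leq K)$, where the subtracted quantity is controlled by Observation~\ref{obs:onl_schrule_flow} (which says $\sum_{j\in\js_{t_i-1}}(t_i-1-r_j)+G(y_i) < K$ when there is a gap before $t_i$), and $\ell' \geq 1$ finishes it. The main obstacle I anticipate is precisely this last chain of inequalities: because flow time subtracts $r_j$, the $OPT$ lower bound is looser than in the $\sum C_j$ case, and I must choose the per-job lower bounds on $C^\star_j$ (namely "$\geq$ earliest $\Q^\star$-replenishment at or after $r_j$", using that such a replenishment sits in $[t_{i+\mu-1}+1,t_{i+\mu}-1]$) carefully enough that the $r_j$'s cancel and the residual is nonnegative; getting the bookkeeping of which jobs have release dates $\geq t_{i-1}+1$ versus $\geq t_i$ exactly right, and matching it against the slot positions in $S$, is where the argument is delicate. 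Finally I would note tightness is inherited from the single-job instance used at the end of the proof of Theorem~\ref{thm:onl_2}, since for one job released at $0$ flow time and completion time coincide.
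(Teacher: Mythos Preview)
Your tail argument for $\mu\in[\ell',\ell]$ is fine and matches the paper: the replenishment trigger in step~\ref{step:onlineflow_sumwork} does guarantee $\sum_{j\in\js_{(i+\mu)}}F_j\geq K$, and $F^\star_j\geq F_j$ goes through exactly as in Claim~\ref{clm:onl_2comp}.

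The gap is in the ``lower part'' $\mu\in[0,\ell'-1]$. Your claim that ``the $\sum_j r_j$ terms should largely cancel'' is wrong. If you lower–bound $\sum_{\mu=0}^{\ell'-1}OPT_{i+\mu}$ by mirroring~(\ref{ieq:opt_lb1}), i.e., by bounding $\sum C^\star_j$ and then subtracting $\sum r_j$, you get an extra $-\sum_j r_j$ on the $OPT$ side, so
\[
2\sum_{\mu=0}^{\ell'-1}OPT_{i+\mu}-\sum_{\mu=0}^{\ell'-1}ALG_{i+\mu}
\;\geq\;\bigl(\text{residual from the }\textstyle\sum C_j\text{ proof}\bigr)\;-\;\sum_j r_j,
\]
and this leftover $-\sum_j r_j$ is \emph{not} controlled by anything in your argument. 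Concretely, take $K=7$, three jobs released at $9$, two at $10$, and one each at $11,12,13,14$; then Algorithm~\ref{alg:online_flow} has $t_1=10$, $t_2=15$, $y_1=3$, $z_1=2$, $\beta=4$, and if $\Q^\star$ replenishes at $9,10,14$ one has $\ell'=2$. Your lower bound gives $2\sum OPT\geq 14$, while $\sum ALG=54$, so your inequality fails badly (the true $\sum OPT_i$ here is $43$, so $2$-competitiveness actually holds). The bound $(t_{i-1}+1)y_i+G(y_i)-\sum_{j:y_i}r_j$ can even be negative, because a job released at $r_j$ close to $t_i-1$ contributes essentially $-r_j$ with nothing to offset it.

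The paper does \emph{not} sum over $\mu=0,\ldots,\ell'-1$; it proves $ALG_{i+\mu}\leq 2\,OPT_{i+\mu}$ for each such $\mu$ individually, with two different arguments. For $\mu=0$ it enumerates all $s_i$ replenishments $\tau_{i,1}<\cdots<\tau_{i,s_i}$ of $\Q^\star$ inside $(t_{i-1},t_i)$, partitions the $y_i$ early jobs by which $\tau_{i,k}$ first serves them, and uses the nonnegative terms $(\tau_{i,k}-r_j)$ together with a convexity bound $\sum_k G(b_i(k))\geq (s_i+1)\,G\!\bigl(\tfrac{y_i}{s_i+1}\bigr)$; this is exactly what makes the $r_j$'s disappear cleanly. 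For $1\leq\mu\leq\ell'-1$ the key new observation (absent from your sketch) is that every $j\in\js_{(i+\mu)}$ satisfies $F_j\leq F^\star_j+y_i$, since the fixed job order and the fact that only the $y_i$ early jobs of block $i$ can be scheduled before $t_i$ in $S^\star$ bound $C_j-C^\star_j\leq y_i$; combined with $K\geq G(y_i)$ from Observation~\ref{obs:onl_schrule_flow} and $\sum F^\star_j\geq G(b_{i+\mu})$, this yields $2\,OPT_{i+\mu}-ALG_{i+\mu}\geq G(y_i)+G(b_{i+\mu})-y_i b_{i+\mu}\geq 0$. Your aggregated approach misses both ingredients.
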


\begin{proof}Analogously to the proof of Theorem~\ref{thm:onl_2}, we prove $\sum_{\mu=0}^{\ell} ALG_{i+\mu}\leq 2\cdot \sum_{\mu=0}^{\ell} OPT_{i+\mu}$ for each $i$ such that in the schedule constructed by the algorithm,
the machine is idle in $[t_i-1,t_i)$, and the next idle period starts in $[t_{i+\ell},t_{i+\ell+1})$, from which the theorem follows.

Recall the definition of $\ell'$ from the proof of Theorem~\ref{thm:onl_2}. Then, the next claim is analogous to Claim~\ref{clm:onl_2comp}.

\begin{claim}\label{clm:onl_2flow}
$ALG_{i+\mu}\leq 2\cdot OPT_{i+\mu}$ for all $\mu\in[\ell',\ell]$.
\end{claim}
\begin{proof}
If $\ell'=\ell+1$, then the claim is trivial.

Otherwise, we have $r_j\geq t_{i+\ell'-1}+1=:t'$ for each job $j\in\cup_{\mu=\ell'}^{\ell}\new{\js_{(i+\mu)}}$, hence they cannot be \new{started} before the first replenishment after $t'$.
The first replenishment after $t'$ in $\Q^\star$ is not earlier than $t_{i+\ell'}$ in $S^\star$ due to the definition of $\ell'$.
We have assumed that the order of the jobs is the same in every schedule, and there is no idle time among the jobs in $\cup_{\mu=\ell'}^{\ell}\new{\js_{(i+\mu)}}$ in $S$, thus, we have $C^\star_j\geq C_j$ and $F^\star_j\geq F_j$ for each $j\in \cup_{\mu=\ell'}^{\ell}\new{\js_{(i+\mu)}}$.
Since $OPT_{i+\mu}\geq\sum_{j\in\new{\js_{(i+\mu)}}}F^\star_j\geq \sum_{j\in\new{\js_{(i+\mu)}}}F_j\geq K$ for all $\mu\in[\ell',\ell]$, where the last inequality follows from the condition of step~\ref{step:onlineflow_sumwork} of Algorithm~\ref{alg:online_flow}, we have $ALG_{i+\mu}=K+\sum_{j\in\new{\js_{(i+\mu)}}}F_j\leq 2\cdot OPT_{i+\mu}$ for all $\mu\in[\ell',\ell]$.
\end{proof}

If $\ell'=0$, then \new{our work is done}, thus from now on we suppose $\ell'\geq 1$.
It \new{remains} to prove  $\sum_{\mu=0}^{\ell'-1} ALG_{i+\mu}\leq 2\cdot \sum_{\mu=0}^{\ell'-1} OPT_{i+\mu}$, if there is at least one replenishment in every interval $[t_{i+\mu-1}+1,t_{i+\mu}-1]$, $\mu=0,\ldots,\ell'-1$ in $\Q^\star$, i.e., $s_{i+\mu}\geq 1$ for every $\mu=0,\ldots,\ell'-1$.

First, we prove $ALG_i\leq 2\cdot OPT_i$.
Let $\tau_{i,0}:=t_{i-1}$, and  $\tau_{i,1}\new{<}\tau_{i,2}\new{<}\ldots\new{<} \tau_{i,s_i}$ be the time points of the replenishments in $[t_{i-1}+1,t_i-1]$ in $\Q^\star$.
Let $\tau_{i,s_i+1}:=t_i$.
For $k=1,2,\ldots,s_i+1$, let $\new{\js_{(i)}}(k):=\{j\in \new{\js_{(i)}}: r_j\in(\tau_{i,k-1},\min\{t_i-1,\tau_{i,k}\}]\}\subseteq \new{\js_{t_i-1}}$ \new{and $b_i(k):=|\js_{(i)}(k)|$.}
See Figure~\ref{fig:flowtime_optrepl} for illustration.
Note that these sets are pairwise disjoint and  $\bigcup_{k= 1}^{s_i+1}\new{\js_{(i)}}(k)=\{j\in\new{\js_{(i)}}:r_j<t_i\} = \new{\js_{t_i-1}}$, and their \new{total} size is $y_i$.

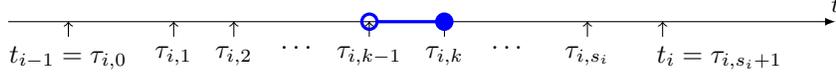
\begin{figure}
\begin{tikzpicture}
\def\ox{0} 
\def\oy{0} 
\def\ui{1}
\def\uii{6.50}
\def\uiii{8}
\coordinate(o) at (\ox,\oy); 
\coordinate(u1) at (\ui,\oy);
\coordinate(u2) at (\uii,\oy);
\coordinate(u3) at (\uiii,\oy);

\tikzstyle{mystyle}=[draw, minimum height=0.5cm,rectangle, inner sep=0pt,font=\scriptsize]

\def\tl{11.0} 
\def\oyi{0}
\draw [-latex](\ox,\oyi) node[above left]{} -- (\ox+\tl,\oyi) node[above,font=\small]{$t$};

\coordinate (uq1) at (\ui,\oyi);
\coordinate (uq2) at (\uii,\oyi);
\coordinate (uq3) at (\uiii,\oyi);
\draw[<-] ($(uq1)-(0.2,0.0)$) -- ($(uq1)-(0.2,0.2)$) node[below] {$t_{i-1}=\tau_{i,0}$};
\draw[<-] ($(uq1)-(-1.2,0.0)$) -- ($(uq1)-(-1.2,0.2)$) node[below] {$\tau_{i,1}$};
\draw[<-] ($(uq1)-(-2.0,0.0)$) -- ($(uq1)-(-2.0,0.2)$) node[below] {$\tau_{i,2}$} node[below right=0cm and 0.5cm]{$\ldots$};
\draw[<-] ($(uq2)-(1.7,0.0)$) -- ($(uq2)-(1.7,0.2)$) node[below] {$\tau_{i,k-1}$};
\draw[<-] ($(uq2)-(0.7,0.0)$) -- ($(uq2)-(0.7,0.2)$) node[below] {$\tau_{i,k}$} node[below right=0cm and 0.5cm]{$\ldots$};
\draw[<-] ($(uq2)+(1.2,0.0)$) -- ($(uq2)+(1.2,-0.2)$) node[below] {$\tau_{i,s_i}$};
\draw[<-] ($(uq2)+(2.2,0.0)$) -- ($(uq2)+(2.2,-0.2)$) node[below right=0cm and -0.2cm] {$t_{i}=\tau_{i,s_i+1}$};

\draw[very thick, blue,o-*] ($(uq2)-(1.82,0.0)$)  -- ($(uq2)-(0.58,0.0)$);

\end{tikzpicture}
\caption{Replenishment times in $[t_{i-1}+1,t_i-1]$ in $\Q^\star$. The release dates of the jobs of $\new{\js_{(i)}}(k)$ are from the blue interval.}\label{fig:flowtime_optrepl}
\end{figure}

Each job in  $\new{\js_{(i)}}(k)$ gets resource at $\tau_{i,k}$ in $\Q^\star$, if $k\leq s_i$, thus they cannot start earlier  than $\tau_{i,k}$ in $S^\star$. 
The earliest start time of a job from $\new{\js_{(i)}}(s_i+1)$ is $t_i$.
Since $F_j=(\tau_{i,k}-r_j)+(C_j-\tau_{i,k})$, thus we have $\sum_{j\in\new{\js_{(i)}}(k)}F_j\geq \sum_{j\in\new{\js_{(i)}}(k)}(\tau_{i,k}-r_j)+ \new{G(b_{(i)}(k))}$  for any $1\leq k\leq s_{i}+1$.
Applying the previous inequality for each $k$ and $\sum_{j\in\new{\js_{(i)}}\setminus\new{\js_{t_i-1}}}F_j\geq \new{G(z_i)}$, we get
\begin{align*}
OPT_i&\geq s_iK + \sum_{k= 1}^{s_i+1}\sum_{j\in\new{\js_{(i)}}(k)}(\tau_{i,k}-r_j)+\sum_{k= 1}^{s_i+1} \new{G(\new{b_{(i)}}(k))} +\new{G(z_i)}\\
&\geq s_iK + \sum_{k= 1}^{s_i+1}\sum_{j\in\new{\js_{(i)}}(k)}(\tau_{i,k}-r_j) + \new{G\left(\frac{y_i}{s_i+1}\right)}\cdot (s_i+1)+\new{G(z_i)},
\end{align*}
where the last inequality follows from simple algebraic rules.
Furthermore, we have
\begin{align*}
&ALG_i=K+\sum_{j\in\new{\js_{(i)}}}F_j=K+\sum_{j\in\new{\js_{t_i-1}}}(t_i-r_j)+\new{G(y_i)}+y_iz_i+\new{G(z_i)},
\end{align*}
thus
\begin{align*}
&2\cdot OPT_i-ALG_i\geq (2s_i-1)K+2\cdot\sum_{k= 1}^{s_i+1}\sum_{j\in\new{\js_{(i)}}(k)}(\tau_{i,k}-r_j) + \\ 
&+ y_i\left(\frac{y_i}{s_i+1}+1\right)+
\new{G(z_i)}-\sum_{j\in\new{\js_{t_i-1}}}(t_i-r_j)-\new{G(y_i)}-y_iz_i 
\end{align*}
\begin{align*}
& \geq(2s_i-2)K+2\cdot\sum_{k= 1}^{s_i+1}\sum_{j\in\new{\js_{(i)}}(k)}(\tau_{i,k}-r_j)+ y_i\left(\frac{y_i}{s_i+1}+1\right)+\new{G(z_i)}-\\
&-y_i-y_iz_i=(2s_i-2)K+2\cdot\sum_{k= 1}^{s_i+1}\sum_{j\in\new{\js_{(i)}}(k)}(\tau_{i,k}-r_j)+\frac{y_i^2}{s_i+1}+\new{G(z_i)}-y_iz_i,
\end{align*}
where the second inequality follows from Observation~\ref{obs:onl_schrule_flow} and from $|\new{\js_{t_i-1}}|=y_i$.

On the one hand, if $s_i=1$, then $2\cdot OPT_i-ALG_i\geq y_i^2/2+z_i^2/2-y_iz_i\geq 0$.
On the other hand, if $s_i\geq 2$, then we use again Observation~\ref{obs:onl_schrule_flow}, and we get $2\cdot OPT_i-ALG_i\geq (2s_i-3)K+\new{G(y_i)}+\new{G(z_i)}-y_iz_i\geq y_i^2/2+z_i^2/2-y_iz_i\geq 0$.
Therefore, we have proved $ALG_i\leq 2\cdot OPT_i$.

Now we prove $ALG_{i+\mu}\leq 2\cdot OPT_{i+\mu}$ for any $1\leq \mu\leq \ell'-1$.
Let $1\leq \mu\leq \ell'-1$,  and $j'\in\new{\js_{(i+\mu)}}$ be arbitrary.
Suppose that  $j'$ has the $n_{j'}^{th}$ smallest release date among the jobs in $\new{\js_{(i+\mu)}}$, i.e., it is the $\left(\sum_{\nu=1}^{i+\mu-1}\new{b_{(\nu)}}+n_{j'}\right)^{th}$ job  in the fixed non-decreasing release date order that determine the order of the jobs in any schedule.
Observe that $C_{j'}=t_i+\sum_{\nu=0}^{\mu-1}\new{b_{(i+\nu)}}+n_{j'}$, because the algorithm \new{starts} all of the jobs of $\bigcup_{\nu=0}^{\mu-1}\new{\js_{(i+\nu)}}$ from $t_i$ without any gap, and after that, it starts the jobs from $\new{\js_{(i+\mu)}}$ also without any gap.
However, it is possible that $y_i$ jobs from $\new{\js_{(i)}}$ are \new{started} before $t_i$ in $S^\star$, but every other job from  $\bigcup_{\nu=0}^{\mu}\new{\js_{(i+\nu)}}$ has a release date at least $t_i$.
Hence, we have $C^\star_{j'}\geq t_i+z_i+\sum_{\nu=1}^{\mu-1}\new{b_{(i+\nu)}}+n_{j'}$, since the order of the jobs is fixed.
Since $\new{b_{(i)}}=y_i+z_i$, we have $C_{j'}\leq C^\star_{j'}+y_i$ and $F_{j'}\leq F^\star_{j'}+y_i$. 

Let $h:=\new{b_{(i+\mu)}}$.
Since $\sum_{j\in\new{\js_{(i+\mu)}}}F^\star_j\geq \new{G(h)}$, $OPT_{i+\mu}\geq K+\sum_{j\in\new{\js_{(i+\mu)}}}F^\star_j$, and  $ALG_i\leq K+\sum_{j\in\new{\js_{(i+\mu)}}}(F^\star_j+y_i)$, we have 
\begin{align*}
&2\cdot OPT_{i+\mu}-ALG_{i+\mu}\geq K+\new{G(h)}-hy_i\geq \new{G(y_i)}+\new{G(h)}-hy_i\geq 0,
\end{align*}
where the second inequality follows from Observation~\ref{obs:onl_schrule_flow}.

Finally, observe that the problem instance used at the end of Theorem~\ref{thm:onl_2} (there is only one job $j$, with $r_j=0$) shows that the above analysis is tight, i.e., Algorithm~\ref{alg:online_flow} is not $\alpha$-competitive for any $\alpha<2$.

\end{proof}

The next theorem gives a lower bound on the best possible competitive ratio.

\begin{theorem}\label{thm:online_sum_fj_3/2}
There does not exist an online algorithm for $1|jrp, s=1,p_j=1, r_j|\sum F_j+c_\Q$ with competitive ratio better than $\frac{3}{2} - \epsilon$ for any constant $\epsilon > 0$.
\end{theorem}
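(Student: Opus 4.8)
The plan is to mirror the adversarial construction used for Theorem~\ref{thm:onl_negative}, since the single‑job ``waiting penalty'' is identical for $\sum F_j$ and $\sum C_j$ (a job released at $0$ and started at $t$ has flow time $t+1$ versus optimal $1$). The adversary releases one unit‑processing job $j_1$ with $r_{j_1}=0$ and watches the online algorithm. If the algorithm never starts $j_1$ its cost is unbounded while the optimum is $K+1$, so we may assume it starts $j_1$ at some finite time $t\ge 0$, having already (by irrevocability) committed at least one replenishment at some moment $\le t$. The adversary will decide, only after seeing the algorithm commit to this $t$, whether a second job is released.

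I would then bound the competitive ratio in two mutually exclusive scenarios. (a) \emph{No further job.} The algorithm pays at least $K+F_{j_1}=K+(t+1)$, whereas replenishing and running $j_1$ at time $0$ costs $K+1$; hence the ratio is at least $c_1(t):=\frac{K+t+1}{K+1}$. (b) \emph{A second unit job $j_2$ with $r_{j_2}=t+1$.} Since $j_2$ needs a replenishment at a time $\ge r_{j_2}=t+1$, while every replenishment the algorithm has made occurs at a time $\le t<t+1$, it must order a fresh replenishment at some time $\ge t+1$; moreover $j_1$ irrevocably occupies $[t,t+1]$, so $j_2$ cannot start before $t+1$ and $F_{j_2}\ge 1$. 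Thus the algorithm pays at least $2K+(t+1)+1=2K+t+2$. On the other hand, the feasible solution for $\{r_{j_1}=0,\ r_{j_2}=t+1\}$ that uses one replenishment at $t+1$ and runs $j_1,j_2$ in $[t+1,t+2]$ and $[t+2,t+3]$ costs $K+(t+2)+2=K+t+4$, so the ratio is at least $c_2(t):=\frac{2K+t+2}{K+t+4}$.

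Exactly as in Theorem~\ref{thm:onl_negative}, $c_1$ is increasing in $t$ and $c_2$ is decreasing in $t$ (for $K>2$), so the adversary, controlling whether $j_2$ appears, forces a ratio of at least $\min_{t\ge 0}\max\{c_1(t),c_2(t)\}$, which equals the common value at the point $\bar t$ where $c_1(\bar t)=c_2(\bar t)$. Solving the resulting quadratic gives $\bar t=\tfrac12\bigl(K-4+\sqrt{5K^2+4K+8}\bigr)$ and $c_1(\bar t)\to\frac{1+\sqrt5}{2}>\tfrac32$ as $K\to\infty$; in particular, for every $\varepsilon>0$ one can choose $K$ large enough that the forced ratio exceeds $\tfrac32-\varepsilon$, and since the optimum cost grows without bound with $K$ this also absorbs any additive constant, proving the claim. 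If one prefers to avoid the quadratic, the fixed threshold $t=K/2$ suffices: for $t\ge K/2$ scenario (a) gives ratio $\ge\frac{3K/2+1}{K+1}$, which tends to $\tfrac32$; for $t<K/2$ one has $K+t+4<2K+2$, so scenario (b) gives ratio $\ge\frac{2K+t+2}{K+t+4}$, which is decreasing in $t$ and hence at least its value at $t=K/2$, namely $\frac{2.5K+2}{1.5K+4}$, which is $\ge\tfrac32$ once $K\ge 16$.

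The algebra is routine; the one step that needs care — and is the crux — is the lower bound $2K+t+2$ on the algorithm's cost in scenario (b). This uses the timing convention spelled out in Section~\ref{sec:online_Cj} (a job released at $t+1$ becomes known only at time $t+1$, after the algorithm has irreversibly fixed its replenishment(s) through time $t$ and started $j_1$) together with the rule that $j_2$'s resource must be replenished no earlier than $r_{j_2}=t+1$; these jointly force both the extra replenishment and the unit delay of $j_2$, no matter what $t$ the algorithm chooses. I would write this out explicitly so that the adaptive adversary is unambiguously legal, and note (as in Section~\ref{sec:online_Cj}) that the harmless ``ordering takes one time unit'' variant is handled by shifting release dates, so the bound is robust to that modelling choice.
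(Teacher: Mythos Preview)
Your argument is correct, but it differs from the paper's in one substantive way. The paper uses a \emph{non-adaptive} adversary: it always releases the second job $j_2$ at time $t+1$, computes $ALG\ge 2K+t+2$ and $OPT=\min(2K+2,\,K+t+2)$, and then does a three-way case split on the relation between $t$ and $K$ to obtain the $\tfrac32-\varepsilon$ bound. You instead reuse the \emph{adaptive} adversary of Theorem~\ref{thm:onl_negative}, letting the adversary decide after seeing $t$ whether $j_2$ appears, and you balance $c_1(t)=(K+t+1)/(K+1)$ against $c_2(t)=(2K+t+2)/(K+t+4)$.

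What this buys you is a strictly stronger conclusion: your crossover analysis yields a lower bound of $\tfrac{1+\sqrt5}{2}\approx 1.618$ rather than $\tfrac32$, so you in fact prove more than Theorem~\ref{thm:online_sum_fj_3/2} asserts. (Your use of $K+t+4$ as an upper bound on $OPT$ in scenario~(b) is fine even when $2K+2$ is smaller, since that only makes the true ratio larger than your $c_2(t)$; and at the balancing point $\bar t\approx\tfrac{\sqrt5-1}{2}K$ one has $K+\bar t+4<2K+2$, so the bound is tight there.) The paper's route is shorter and avoids solving a quadratic, at the price of the weaker constant; your route matches the machinery already set up for $\sum C_j$ and extracts a better bound for $\sum F_j$. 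Either way the stated theorem follows.
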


\begin{proof}
Suppose that  one job is released at $0$, and some online algorithm $ALG$ \new{starts} it at time $t\geq 0$, then another job at time $t+1$ is released. The offline optimum for this problem instance with 2 jobs is $OPT = \min(2K+2, K+t+2)$, while the online solution has a cost of $2K+t+2$.

\begin{enumerate}
\item If $K < t$, then  $OPT = 2K+2$, therefore $\frac{ALG}{OPT} \geq \frac{3}{2}$.
\item If $K \geq t+2$, then $OPT = K+t+2, \frac{ALG}{OPT} = \frac{2K+t+2}{K+t+2} \geq \frac{3}{2}$.
\item If $t \leq K \leq t+1$, then for any $\varepsilon> 0$ we have $\frac{ALG}{OPT} \geq \frac{3K+2}{2K+2} > \frac32-\varepsilon$  if $K$ is sufficiently large.
\end{enumerate}
\end{proof}

\section{Competitive analysis for the online $1|jrp, r_j|\Fmax + c_Q$ problem}
\label{sec:online_fmax}
Throughout this section we assume $p_j = 1$ for each job $j$, and there is a single resource only ($s = 1$). We will describe a $\sqrt{2}$-competitive algorithm for a  semi-online variant, which we call \textit{\new{regular} input},
where $r_j = j$ for $j \geq 1$. That is, the job release dates are known in advance, but we do not know how many jobs will arrive.
We also provide lower bounds for the competitive ratio of any algorithm for \new{regular} as well as general input. 

For the competitive analysis we need a lower bound for the offline optimum, which is the topic of the next section.

\subsection{Lower bounds for the offline optimum}

Let  $\psum = \sum_{j=1}^n {p_j}$.

\begin{lemma}\label{lem:lower_bound_min}
$\min_F (K \big \lceil \frac{p_{\mathrm{sum}}}{F} \big \rceil + F)$ is a lower bound for the offline optimum.
\end{lemma}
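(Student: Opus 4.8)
The plan is to show that any feasible solution $(S,\Q)$ has cost at least $K\lceil \psum/F \rceil + F$ for some choice of $F$, specifically $F := \Fmax(S)$, the maximum flow time realized by $S$. The scheduling cost of the solution is $\Fmax(S) = F$ by definition, so it suffices to prove that the replenishment cost $c_\Q$ is at least $K\lceil \psum/F\rceil$; the lemma then follows because $\min_{F'}(K\lceil\psum/F'\rceil + F') \le K\lceil\psum/F\rceil + F \le c_\Q + \Fmax(S) = v(S,\Q)$, and this holds for every feasible solution, hence for the optimum.

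To bound $c_\Q$ from below, first note that with a single resource the replenishment cost is exactly $K$ times the number $q$ of replenishments, so I need $q \ge \lceil \psum/F\rceil$. The key step is a counting argument. Consider a feasible solution with maximum flow time $F$. Let the replenishments occur at times $t_1 < \dots < t_q$. Since every job $j$ must be started (and hence completed) within $F$ time units of its release date, and a job requiring the resource can only start at or after some replenishment occurring in $[r_j, C_j]$, each job $j$ is "served" by a replenishment $t_\ell$ with $r_j \le t_\ell$ and $t_\ell < C_j \le r_j + F$, i.e. $t_\ell \in [r_j, r_j+F)$. Now partition the jobs according to which replenishment serves them (pick any one if several qualify). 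The jobs served by a fixed replenishment $t_\ell$ all have release dates in $(t_\ell - F, t_\ell]$, and all of them are processed (non-preemptively, one machine) and completed by time $\max_j C_j \le \max_j(r_j + F) \le t_\ell + F$; moreover they cannot start before $t_\ell$. Hence the total processing time of jobs assigned to $t_\ell$ is at most $(t_\ell + F) - t_\ell = F$. Summing over all $q$ replenishments, $\psum = \sum_{j} p_j \le qF$, so $q \ge \psum/F$, and since $q$ is an integer, $q \ge \lceil \psum/F\rceil$.

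I would then assemble these pieces: for the optimal solution $(S^\star,\Q^\star)$ with $F^\star := \Fmax(S^\star)$, the above gives $c_{\Q^\star} \ge K\lceil\psum/F^\star\rceil$, hence $v(S^\star,\Q^\star) = c_{\Q^\star} + F^\star \ge K\lceil\psum/F^\star\rceil + F^\star \ge \min_F(K\lceil\psum/F\rceil + F)$, which is exactly the claimed bound. One technical point to handle carefully: the minimization on the right-hand side should be understood over positive values of $F$ (or at least over $F \ge \max_j p_j$, or over the relevant finite range), and I should make sure $F^\star > 0$, which holds since $p_j > 0$ forces a positive flow time for at least the first job processed.

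The main obstacle is making the assignment-and-packing argument fully rigorous: I must be careful that every job requiring the resource is indeed covered by a valid replenishment in $[r_j, C_j]$ (immediate from feasibility), that the "served by $t_\ell$" buckets are well-defined and disjoint, and — the subtle part — that the jobs in one bucket genuinely fit in an interval of length $F$. The last point uses that they cannot begin before $t_\ell$ (the replenishment serving them) and must finish by $t_\ell + F$ (since their release dates are $\le t_\ell$ and their flow times are $\le F$); the non-overlap constraint then forces their total length $\le F$. I would state this as a short sub-claim. Everything else is routine.
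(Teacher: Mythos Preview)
Your proposal is correct and follows the same idea as the paper's proof: the paper simply asserts in one line that ``if in a solution the maximum flow time of jobs is $F$, then at least $\lceil \psum/F \rceil$ replenishments are needed,'' whereas you actually supply the packing argument that justifies this claim. One small point to tighten when you write it up: when you say ``pick any one if several qualify,'' you should specifically pick a replenishment $t_\ell \in [r_j, S_j]$ (which exists by feasibility), since an arbitrary $t_\ell \in [r_j, r_j+F)$ could lie after $S_j$ and would break your ``cannot start before $t_\ell$'' step---but you already flag this as the subtle part to state carefully.
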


\begin{proof}
If in a solution the maximum flow time of jobs is $F$, then at least $\big \lceil \frac{\psum}{F} \big \rceil$ replenishment is needed, which gives the lower bound above.
\end{proof}

\begin{lemma}\label{lem:lower_bound_sqrt}
$2 \sqrt{K \psum}$ is a lower bound for the offline optimum.
\end{lemma}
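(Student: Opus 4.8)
The plan is to derive this bound directly from Lemma~\ref{lem:lower_bound_min} by dropping the ceiling and applying the arithmetic--geometric mean inequality. First I would observe that for every $F > 0$ we have $\lceil \psum/F \rceil \ge \psum/F$, hence
\[
K \left\lceil \frac{\psum}{F} \right\rceil + F \;\ge\; \frac{K\psum}{F} + F .
\]
Thus $\min_F \left( K \lceil \psum/F \rceil + F \right) \ge \min_{F>0} \left( \frac{K\psum}{F} + F \right)$, where on the right the minimum may be taken over all positive reals (this only weakens the bound, so it is still a lower bound on the offline optimum via Lemma~\ref{lem:lower_bound_min}).

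Next I would apply AM--GM to the two nonnegative quantities $K\psum/F$ and $F$:
\[
\frac{K\psum}{F} + F \;\ge\; 2\sqrt{\frac{K\psum}{F}\cdot F} \;=\; 2\sqrt{K\psum},
\]
with equality attained at $F = \sqrt{K\psum}$. Chaining this with the previous display and invoking Lemma~\ref{lem:lower_bound_min} gives that the offline optimum is at least $2\sqrt{K\psum}$, which is the claim.

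There is essentially no real obstacle here; the only point requiring a word of care is that the minimization in Lemma~\ref{lem:lower_bound_min} is effectively over integer values of $F$ (flow times of unit jobs are integral), so I would explicitly note that relaxing $F$ to range over the positive reals can only decrease the quantity being minimized, and therefore the resulting value $2\sqrt{K\psum}$ remains a valid lower bound. No new definitions or lemmas beyond Lemma~\ref{lem:lower_bound_min} are needed.
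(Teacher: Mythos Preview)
Your proposal is correct and follows essentially the same approach as the paper: drop the ceiling in the bound of Lemma~\ref{lem:lower_bound_min} and minimize $K\psum/F + F$ over positive $F$, obtaining the minimum $2\sqrt{K\psum}$ at $F=\sqrt{K\psum}$. Your explicit use of AM--GM and the remark about relaxing $F$ to the reals are minor elaborations of the same argument.
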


\begin{proof}
Consider the formula from Lemma~\ref{lem:lower_bound_min} without the ceiling function, which is also a lower bound for the optimum: $\min_F (K  \frac{\psum}{F} + F)$. The minimum is obtained in $F = \sqrt{Kp_{sum}}$, and it has a value of $2 \sqrt{K \psum}$.
\end{proof}

If $p_j = 1$ for every job, \new{we get} $\min_F (K \lceil \frac{n}{F} \rceil + F)$ and $2 \sqrt{K n}$ as lower bounds. 

Note that for a \new{regular} input, the first lower bound will give exactly the offline optimum. However, for a general input, the above lower bounds can be weak.

\subsection{Online algorithm for \new{regular} input}

For \new{regular} input, we can give the exact formula for the offline optimum using the lower bound obtained in Lemma~\ref{lem:lower_bound_sqrt}. However, in the online problem we don't know the total number of jobs in the input, therefore we propose an algorithm that determines the replenishment times in advance.

We  distinguish the cases $K = 1$ and $K > 1$. Let
\[t_i :=
  \begin{cases}
    \frac{i(i+1)}{2} &  \text{if } K=1, \\
    K \frac{i^2 + 3i}{2} &  \text{if } K>1.
  \end{cases}
\]

\begin{varalgorithm}{Max{\_}Flowtime}
\new{
\begin{small}
\caption{}\label{alg:online_fmax}

Initialization: $t:=1$ and $i:=1$.
\begin{enumerate}
\item If $t = t_i$ or no job arrives at $t$, then \label{step:online_fmax_first}
\begin{enumerate}[a.]
\item Replenish the resource, and start the yet unscheduled jobs from $t$.
\item If there are no more jobs, then stop; otherwise
let $t:=t+(t_i-t_{i-1})$, $i:=i+1$.
\end{enumerate}
Else  $t:=t+1$.
\item Go to step~\ref{step:online_fmax_first}. \label{step:online_fmax_last}
\end{enumerate}
\end{small}
}
\end{varalgorithm}

\new{In Algorithm~\ref{alg:online_fmax}, the replenishments occur at the time points $t_i$, and also after the release date of the last job. 
At each replenishment all yet unscheduled jobs are put on the machine in increasing release date order.} The schedule obtained for the first few jobs when $K=1$ is shown in Figure~\ref{fig:online_fmax}.

\begin{figure}[!th]
\begin{tikzpicture}
\def\ox{0} 
\def\oy{0} 
\def\ui{0}
\def\uii{2}
\def\uiii{5}
\def\uiiii{9}
\coordinate(o) at (\ox,\oy); 
\coordinate(u1) at (\ui,\oy);
\coordinate(u2) at (\uii,\oy);
\coordinate(u3) at (\uiii,\oy);
\coordinate(u4) at (\uiiii,\oy);

\tikzstyle{mystyle}=[draw, minimum height=0.5cm,rectangle, inner sep=0pt,font=\scriptsize]

\def\tl{11} 
\def\oyi{0}
\draw [-latex](\ox,\oyi) node[above left]{$S$} -- (\ox+\tl,\oyi) node[above,font=\small]{$t$};

\coordinate (uq1) at (\ui,\oyi);
\coordinate (uq2) at (\uii,\oyi);
\coordinate (uq3) at (\uiii,\oyi);
\coordinate (uq4) at (\uiiii,\oyi);
\draw[<-] (uq1) -- ($(uq1)-(0,0.2)$) node[below] {$t_{1} = 1$};
\draw[<-] (uq2) -- ($(uq2)-(0,0.2)$) node[below] {$t_2=3$};
\draw[<-] (uq3) -- ($(uq3)-(0,0.2)$) node[below] {$t_3=6$};
\draw[<-] (uq4) -- ($(uq4)-(0,0.2)$) node[below] {$t_4=10$};

\node(b1) [above right=0.0cm and 0.0cm of u1,mystyle, minimum width=0.9 cm]{$j_1$};
\node(b2) [above right=0.0cm and 0.0cm of u2,mystyle, minimum width=0.9 cm]{$j_2$};
\node(b22) [right=0.0cm of b2,mystyle, minimum width=0.9 cm]{$j_3$};
\node(b3) [above right=0.0cm and 0.0cm of u3,mystyle, minimum width=0.9 cm]{$j_4$};
\node(b32) [right=0.0cm of b3,mystyle, minimum width=0.9 cm]{$j_5$};
\node(b33) [right=0.0cm of b32,mystyle, minimum width=0.9 cm]{$j_6$};
\node(b4) [above right=0.0cm and 0.0cm of u4,mystyle, minimum width=0.9 cm]{$j_7$};
\node(b42) [right=0.0cm of b4,mystyle, minimum width=0.9 cm]{$j_8$};

\end{tikzpicture}
\caption{Schedule $S$ created by Algorithm~\ref{alg:online_fmax} for $K=1$.}\label{fig:online_fmax}
\end{figure}
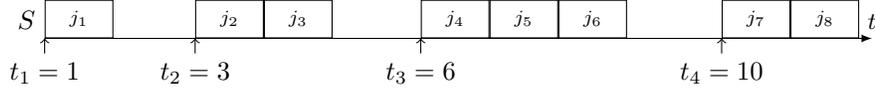

\begin{theorem}\label{thm:online_fmax_sqrt2}
Algorithm~\ref{alg:online_fmax} is $\sqrt{2}$-competitive on \new{regular} input.
\end{theorem}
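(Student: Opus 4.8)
The plan is to bound the cost of the schedule $S$ produced by Algorithm~\ref{alg:online_fmax} against the offline optimum $OPT$, which by Lemma~\ref{lem:lower_bound_sqrt} satisfies $OPT \ge 2\sqrt{Kn}$ (for $p_j=1$), and for regular input in fact equals $\min_F(K\lceil n/F\rceil + F)$. Suppose $n$ jobs arrive, and let $m$ be the index of the last replenishment that actually happens in Algorithm~\ref{alg:online_fmax}: this is either some $t_i \le n$ (if the preset schedule is still "active") together with a final replenishment when the jobs run out, or it is a replenishment triggered by the "no job arrives" clause. First I would pin down the structure of $S$: the replenishments occur at $t_1 < t_2 < \cdots$ and, crucially, the gaps $t_i - t_{i-1}$ are increasing in $i$ (equal to $i$ when $K=1$, and to $K(i+1)$ when $K>1$). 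Since all jobs released in $(t_{i-1}, t_i]$ are started at $t_i$ in release-date order with no idle time, and since the block starting at $t_i$ has at most $t_i - t_{i-1}$ jobs in it (one per time unit of regular input), the schedule never "overflows" — each block of jobs fits before the next replenishment. The maximum flow time is then governed by the oldest job in each block: a job released at $t_{i-1}+1$ and completing by $t_i$ contributes flow time at most $t_i - t_{i-1}$, plus the length of the block, giving $F_{\max}(S) \le$ (largest realized gap) $+ O(\text{largest block size})$, which is essentially $2(t_m - t_{m-1})$ or so, i.e.\ $\Theta(m)$ when $K=1$ and $\Theta(Km)$ when $K>1$.

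Next I would count costs. The replenishment cost is $K \cdot (\text{number of replenishments})$, which is $Km$ up to an additive $O(K)$ for the final flush. So $v(S) \le Km + c_1\cdot(t_m - t_{m-1}) + O(K)$ for an absolute constant $c_1$. On the other hand $n > t_{m-1}$ (otherwise the $m$-th replenishment would not have been reached), so $n = \Theta(t_{m-1}) = \Theta(m^2)$ when $K=1$ and $\Theta(Km^2)$ when $K>1$ — this is exactly why the $t_i$ were chosen quadratically spaced: it balances the per-replenishment cost $Km$ against the flow-time cost $\Theta(Km)$, each of which is then $\Theta(\sqrt{Kn})$. Plugging into $OPT \ge 2\sqrt{Kn}$, both terms of $v(S)$ are bounded by constant multiples of $OPT$, and the constants have been tuned — via the precise coefficients $i(i+1)/2$ and $K(i^2+3i)/2$ — so that the total ratio comes out to $\sqrt 2$ rather than merely $O(1)$. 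I would carry this out by writing $v(S)$ and $OPT$ both as explicit functions of $m$ (using the closed forms for $t_m$), taking the worst case over where $n$ falls in the interval $(t_{m-1}, t_m]$ — the ratio is largest either at $n = t_{m-1}+1$ or at $n = t_m$ — and verifying $v(S)/OPT \le \sqrt2$ for all $m \ge 1$ by direct (but careful) algebra. The two cases $K=1$ and $K>1$ must be handled separately because the offline optimum behaves slightly differently (when $K=1$, $F=\lceil\sqrt n\rceil$ and the ceiling matters; when $K>1$ the continuous relaxation is tighter).

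The main obstacle I anticipate is not the asymptotic $O(1)$ bound, which is routine, but squeezing the constant down to exactly $\sqrt 2$. This requires (i) an exact, not just order-of-magnitude, expression for $F_{\max}(S)$ — I need to check whether the worst flow time is attained by the first job of the last block, by an early job in a middle block, or right at the boundary where $n = t_m$ and the final block is full — and (ii) matching this against the exact offline optimum for regular input, including the interaction of the ceiling $\lceil n/F\rceil$ with the choice of $F$. A secondary subtlety is the "no job arrives" branch: if the input stops early, Algorithm~\ref{alg:online_fmax} flushes immediately, so I must confirm that an early stop only helps the ratio (the optimum also shrinks, but the realized gap is then smaller than the preset $t_i - t_{i-1}$, so $F_{\max}(S)$ drops at least proportionally). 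I would organize the final write-up as: (1) structural claim that blocks fit and that replenishments happen exactly at the $t_i$ up to the flush; (2) exact formula or tight upper bound for $F_{\max}(S)$ and for the number of replenishments in terms of $m$ and $n$; (3) the offline optimum formula for regular input; (4) case analysis $K=1$ / $K>1$ with the endpoint-of-interval worst case, concluding $v(S) \le \sqrt 2 \, OPT$.
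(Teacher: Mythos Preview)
Your plan is correct and matches the paper's approach: split on $K=1$ versus $K>1$, locate $n$ relative to the sequence $t_i$, compute the algorithm's cost exactly, and compare to the lower bound $2\sqrt{Kn}$ from Lemma~\ref{lem:lower_bound_sqrt}. Two simplifications will spare you the obstacles you anticipate: first, for regular input the block starting at $t_i$ has exactly $t_i - t_{i-1}$ jobs and each of them has flow time exactly $t_i - t_{i-1}$ (not $2(t_i - t_{i-1})$), so $F_{\max}(S)$ is just the largest realized gap; second, the paper never needs the exact ceiling optimum---the continuous bound $2\sqrt{Kn}$ already suffices for the $\sqrt{2}$ constant, with the sub-cases being $n=t_i$ and $n=t_i+j$ (and for $K>1$ a further split of $j$ at $K(i+1)$).
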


\begin{proof}
We will prove the theorem separately for $K=1$ and $K>1$.

For $K=1$, first we check the competitive ratio for \new{the case $n=t_i$}. The maximum flow time is given by the  job \new{started} in $t_i$, therefore the total cost of the algorithm is: $i + t_i - t_{i-1} = 2 i$.

Using the lower bound for the offline optimum obtained in Lemma~\ref{lem:lower_bound_sqrt}, it is enough to check if
$2\sqrt{2} \sqrt{\frac{i(i+1)}{2}} \geq 2i$,
which holds for each $i \geq 1$.

Now suppose we have \new{$n=t_i + j$} jobs, where $1 \leq j < t_{i+1}- t_i = i+1$. In the online algorithm the last replenishment is at \new{$t_i + j$}, and the cost is: $i + 1 + \max\{t_i - t_{i-1},j\} = 2i + 1$. It is easy to verify that 
$
 2\sqrt{2} \sqrt{\frac{i(i+1)}{2} + j} \geq 2i+1,
$
\new{hence}, the algorithm is $\sqrt{2}$-competitive for $K=1$.

For $K > 1$, there are three cases to consider : $n < 2K$, $n = t_i$ and $n = t_i + j$ for $1\leq j < t_{i+1} - t_i = K(i+2)$.

If $n < 2K$, \new{we obtain that $ALG = K+n = OPT$}.

If $n = t_i$, the total cost of the algorithm is $Ki + t_i - t_{i-1} = K(2i+1)$.
Thus, we have to compare it to the lower bound from Lemma~\ref{lem:lower_bound_sqrt}, which is $2K \sqrt{\frac{i^2 + 3i}{2}}$.
It is easy to verify that $2\sqrt{2}K \sqrt{\frac{i^2 + 3i}{2}} \geq K(2i+1)$ for every $i,K \geq 1$.

If $n = t_i+j$ for $1 \leq j < (t_{i+1} - t_i) = K(i+2)$, the total cost of the algorithm is
 $K(i+1) + \max\{j,t_i - t_{i-1}\} = K(i+1) + \max\{j,K(i+1)\}$. 
 
For $1 \leq j \leq K(i+1)$, one can easily verify that $2\sqrt{2}K \sqrt{\frac{i^2 + 3i}{2} + j} \geq K(2i+2)$, and for $K(i+1) < j < K(i+2)$,  $2\sqrt{2}K \sqrt{\frac{i^2 + 3i}{2} + j} \geq K(2i+3)$ hold, from which $\sqrt{2} OPT \geq ALG$ follows. 

Therefore, we have shown that the algorithm is $\sqrt{2}$-competitive. \new{The analysis is tight}, because one can upper bound the offline optimum for $n$ jobs with $2 \sqrt{Kn} + K$ for any fixed $K$, and for $n = t_i$ we have:

\begin{enumerate}
\item If $K = 1$, then $ \frac{ALG}{OPT} \geq \frac{2i}{\sqrt{2i(i+1)} + 1} \rightarrow \sqrt{2} \text{ if } i \rightarrow \infty.$
\item If $K > 1$, then $ \frac{ALG}{OPT} \geq \frac{K(2i+1)}{K\sqrt{2(i^2+3i)} + K} \rightarrow \sqrt{2} \text{ if } i \rightarrow \infty.$
\end{enumerate}
\vskip -12pt
\end{proof}

Algorithm \ref{alg:online_fmax} belongs to a family of online algorithms, where the first replenishment occurs at $t_1$ (assuming $n \geq t_1$), and after scheduling the jobs ready to be started, it leaves a gap of length $K \delta$, where $\delta$ is some fixed positive integer, before replenishing again (unless there are no more jobs, in which case the algorithm replenishes immediately, and start the remaining jobs). 

In Algorithm~\ref{alg:online_fmax}  we have $t_1 = 1$ for $K=1$ and $t_1 = 2K$ for $K>1$, and $\delta = 1$ for both cases. By simple calculations it can also be shown that these type of algorithms can have the best competitive ratio of $\sqrt{2}$, therefore our algorithm is the best possible among them.

\begin{theorem}\label{thm:online_fmax_4/3}
On \new{regular} input, there does not exist an online algorithm with competitive ratio better than $\frac{4}{3}$.
\end{theorem}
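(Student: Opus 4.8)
The plan is to pin the algorithm down with the shortest \new{regular} input that exposes it. Run the given deterministic online algorithm on the \new{regular} input $r_j=j$, $j=1,2,\dots$, and let $t_1$ be the time of its first ordering (if it never orders, no job is ever completed, so it is not competitive); we may assume $t_1\ge 1$ is integral. Because, on an input consisting of just the jobs $1,\dots,t_1+1$, the history seen by the algorithm up to time $t_1$ is identical to the one on the indefinite input, the algorithm still makes its first ordering at $t_1$; the adversary therefore reveals exactly those $t_1+1$ jobs and then stops. First I would show that this single instance already forces the ratio above $\tfrac43-\varepsilon$ once $K:=K_0+K_1$ is chosen large enough (depending on $\varepsilon$).

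For the algorithm side: nothing is ordered before $t_1$, so job $1$ cannot start before $t_1$ and $\Fmax\ge t_1$; moreover job $t_1+1$ is released after $t_1$, so it requires a second ordering, whence $c_\Q\ge 2K$. Thus the algorithm pays at least $2K+t_1$. For the optimum I would use that on a \new{regular} input of $n$ jobs, $q$ orderings splitting the jobs into $q$ near-equal consecutive blocks give a feasible solution of cost $Kq+\lceil n/q\rceil$ (by Lemma~\ref{lem:lower_bound_min} this is in fact the optimum for \new{regular} inputs), so, writing $x:=t_1/K$,
\[
\mathrm{OPT}(t_1+1)\ \le\ \min_{q\ge 1}\left(Kq+\left\lceil\frac{t_1+1}{q}\right\rceil\right)\ \le\ K\,\varphi(x)+2,
\qquad \varphi(x):=\min_{q\in\mathbb Z_{>0}}\left(q+\frac xq\right).
\]
The key step is then the elementary inequality $\varphi(x)\le \tfrac34(x+2)$ for all $x\ge 0$: on the range $x\in[q(q-1),q(q+1)]$ on which a given $q$ attains $\varphi$, it suffices to verify it at the left endpoint for $q\ge 2$ (where it becomes $(3q-5)(q-2)\ge 0$) and at $x=2$ for $q=1$ (where it holds with equality). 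Substituting, $\mathrm{OPT}(t_1+1)\le \tfrac34(2K+t_1)+2$.

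Putting the two bounds together, the competitive ratio on this instance is at least
\[
\frac{2K+t_1}{\tfrac34(2K+t_1)+2},
\]
which tends to $\tfrac43$ as $K\to\infty$ (since $2K+t_1\ge 2K$); hence for every $\varepsilon>0$ there is a $K$ making it exceed $\tfrac43-\varepsilon$, which is Theorem~\ref{thm:online_fmax_4/3}. The main conceptual point — and the reason the crude estimate $\mathrm{OPT}(n)\le K+n$ (one ordering) is not enough on its own — is that for $t_1$ much larger than $2K$ the optimum uses $\Theta(\sqrt{t_1/K})$ orderings, and it is exactly the uniform inequality $\varphi(x)\le\tfrac34(x+2)$, tight at $x=2$ (i.e.\ $t_1=2K$, where $\mathrm{OPT}=3K+1$ against algorithmic cost $\ge 4K$), that keeps the bound at $4/3$ across the whole range of possible first‑ordering times; once this inequality is in place, no case analysis on $t_1$ is needed.
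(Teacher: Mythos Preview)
Your proof is correct and uses the same adversary as the paper: let the deterministic algorithm commit to its first ordering at time $t_1$, then stop the \new{regular} input at job $t_1+1$, forcing a second ordering and hence cost at least $2K+t_1$. The difference is entirely in how you bound the optimum. The paper splits into three ranges of $t_1$ relative to $2K$ and, for each, exhibits a concrete schedule (one ordering when $t_1$ is small, two orderings when $t_1$ is large) to upper bound $\mathrm{OPT}$, obtaining ratio $\ge\tfrac43$ outright in the outer ranges and ratio $\to\tfrac43$ as $K\to\infty$ in the middle window $t_1\in[2K-3,2K+3]$. You instead package all of this into the single piecewise-linear inequality $\varphi(x)=\min_{q\ge 1}(q+x/q)\le\tfrac34(x+2)$, tight at $x=2$ (i.e.\ $t_1=2K$), which covers every $t_1$ uniformly and in particular handles arbitrarily large $t_1$ without extra work. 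The cost of your approach is the additive $+2$ from the ceiling, so you only get the ratio $\ge\tfrac43-\varepsilon$ asymptotically in $K$ across the board rather than exactly $\tfrac43$ in the outer ranges; since the theorem only claims $\tfrac43-\varepsilon$, nothing is lost.
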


\begin{proof}
\new{Suppose an online algorithm makes the first replenishment at time point $t$, and consider a regular input with $t+1$ jobs}. Then $ALG = 2K + t$, since there is one replenishment in $t$ serving $t$ jobs with maximum flow time of $t$, and there is another replenishment in $t+1$ serving only one job. We distinguish three cases for the possible values of $t$:

\begin{enumerate}
\item If $t \leq 2K-4$, then $OPT \leq K+t+1$, \new{since this is the cost of a solution with one replenishment in $t+1$}, therefore: 
\[ \frac{ALG}{OPT} \geq \frac{2K+t}{K+t+1} \geq \frac{4}{3}.\]
\item If $2K-3 \leq t \leq 2K+3$, then $ALG = 4K+i$, where $i = t-2K \in [-3,3]$, and \new{$OPT \leq 3K+3$, since this is the cost of a solution with one replenishment}. Therefore,  
\[
 \frac{ALG}{OPT} \rightarrow \frac{4}{3}, \text{ if } K \rightarrow \infty.
 \]
\item If $t \geq 2K+4$, then  \new{ $OPT \leq 2K + \lceil \frac{t+1}{2} \rceil$, since this is the cost of a solution with two replenishments}, therefore: 
\[\frac{ALG}{OPT} \geq \frac{2K+t}{2K + \lceil \frac{t+1}{2} \rceil} \geq \frac{2K+t}{2K + \frac{t+1}{2} + 1} \geq \frac{4}{3}.\]
\end{enumerate}
\vskip -12pt
\end{proof}

\subsection{Lower bound on the competitive ratio}
If the input is not \new{regular} (i.e., the release dates of the jobs are arbitrary), then we have a higher lower bound:

\begin{theorem}\label{thm:onl_fmax_negative_general_case}
There is no $\left(\frac{\sqrt{5}+1}{2}-\epsilon\right)$-competitive online algorithm for any constant $\epsilon > 0$ for $1|jrp, s=1, p_j=1, r_j|\Fmax+c_\Q$.
\end{theorem}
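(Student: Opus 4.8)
The plan is to adapt the adversarial argument behind Theorem~\ref{thm:onl_negative_weighted}, using a \emph{batch} of simultaneously released unit jobs in place of the single heavy job. Throughout, write $K := K_0 + K_1$ for the cost of one replenishment (recall $s=1$).

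First I would fix the adversary's opening move: release one job $j_1$ at time $0$ and let $t\ge 0$ be the first moment at which the online algorithm replenishes (it must replenish eventually, otherwise it incurs infinite cost, so $t$ is finite). The algorithm cannot start $j_1$ before $t$, so in every continuation $F_{j_1}\ge t+1$. After observing $t$, the adversary chooses one of two continuations. In \emph{Branch A} no further jobs arrive; then the online cost is at least $K+t+1$, whereas replenishing and running $j_1$ at time $0$ costs $K+1$, so the ratio is at least $\rho_A(t):=\frac{K+t+1}{K+1}$. In \emph{Branch B} the adversary releases $m:=t+2$ additional unit jobs at time $t+1$; since the algorithm's only replenishment so far occurred at $t<t+1$, it must replenish a second time, no earlier than $t+1$, after which the $t+2$ new jobs occupy $t+2$ time units, forcing maximum flow time at least $t+2$ — hence online cost at least $2K+t+2$. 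On the offline side, a single replenishment at time $t+1$ with $j_1$ scheduled first serves all $t+3$ jobs with $\Fmax=t+2$ at cost $K+t+2$, so the optimum is at most $K+t+2$ and the ratio is at least $\rho_B(t):=\frac{2K+t+2}{K+t+2}$.

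Then I would combine the two branches: any online algorithm has competitive ratio at least $\min_{t\ge 0}\max\bigl(\rho_A(t),\rho_B(t)\bigr)$, and since $\rho_A$ is increasing, $\rho_B$ is decreasing, both continuous, with $\rho_A(0)=1<\rho_B(0)$, this minimum is attained at the unique crossing point $t^\star>0$. Solving the quadratic $\rho_A(t)=\rho_B(t)$ and substituting back yields the clean bound
\[
\rho(K)\;=\;\frac{K+\sqrt{5K^2+8K+4}}{2(K+1)}\;\xrightarrow[K\to\infty]{}\;\frac{\sqrt5+1}{2}.
\]
Integrality of $t$ is not an issue, since $\min_{t\in\mathbb Z}\max(\rho_A,\rho_B)\ge\min_{t\in\mathbb R}\max(\rho_A,\rho_B)=\rho(K)$ holds trivially. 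Thus for every $\epsilon>0$ there is a sufficiently large $K$ for which no online algorithm is $\bigl(\tfrac{\sqrt5+1}{2}-\epsilon\bigr)$-competitive, which is the claim.

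The step I expect to be the main obstacle is pinning down Branch B: one has to argue carefully that $m=t+2$ is the batch size that maximizes the adversary's leverage, that the algorithm truly cannot beat $2K+t+2$ (it gains nothing by delaying $j_1$ or interleaving it with the batch, and its first replenishment is already sunk cost), and that the offline optimum is exactly $K+t+2$ and not lower. Once these are established, the rest is the routine algebra on $\rho_A$ and $\rho_B$ sketched above.
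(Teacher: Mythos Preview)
Your argument is essentially the paper's own proof: release one job at time $0$, wait for the algorithm to commit, then either stop or release a batch at time $t+1$, and balance the two resulting ratios. The paper happens to use a batch of $t$ jobs (rather than your $t+2$), giving $c_2(t,K)=\frac{2K+t+1}{K+t+2}$ and $\bar t=\frac{\sqrt{5K^2+4K}-K-2}{2}$, but the asymptotics are identical.

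One arithmetic slip in your Branch~B: with $t+3$ unit jobs scheduled consecutively from time $t+1$, the last of the $t+2$ new jobs completes at $2t+4$ and has flow time $t+3$, so the offline cost is at most $K+t+3$, not $K+t+2$. Correcting $\rho_B(t)$ to $\frac{2K+t+2}{K+t+3}$ changes the closed form for $\rho(K)$ but not the limit $\frac{\sqrt5+1}{2}$, so the conclusion stands. (Your remark that $m=t+2$ ``maximizes the adversary's leverage'' is also not needed: any $m$ of order $t$ yields the same asymptotic ratio, which is why the paper's choice $m=t$ works just as well.)
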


\begin{proof}
Consider an arbitrary algorithm.
Suppose that job $j_1$ arrives at time 0.
If no more jobs arrive, and the algorithm \new{starts} $j_1$ at $t$, then it is at least $c_1(t,K)$-competitive, where $c_1(t,K)=\frac{K+t+1}{K+1}$.
If $t$ further jobs arrive at  $t+1$, then the algorithm is at least $c_2(t,K)$-competitive, where $c_2(t,K)=\frac{2K+t+1}{K+t+2}$. 
To get a lower bound for an arbitrary online algorithm, we want to calculate the value of
$$
\max_{K} \min_{t}\max(c_1(t,K),c_2(t,K)).
$$	
It is easy to see that $c_1(t,K)$ is an increasing function of $t$, while $c_2(t,K)$ is a decreasing function of $t$. 
Therefore, $\max(c_1(t,K),c_2(t,K))$  is minimal for a $\bar{t}\geq 0$ if  $c_1(\bar {t},K)=c_2(\bar{t},K)$.
Some algebraic calculations show that this happens if $\bar{t}=\frac{\sqrt{5K^2+4K}-K-2}{2}$.
Substituting $\bar{t}$ into $c_1(t,K)$, we get $\lim_{K\to\infty} c_1(\bar{t},K)=\frac{\sqrt{5}+1}{2}$.
 Therefore, for any $\varepsilon>0$, there is a sufficiently large $K$, such that there is no $\left(\frac{\sqrt{5}+1}{2}-\epsilon\right)$-competitive algorithm for the problem.
\end{proof}

\section{Conclusions}
\label{sec:conclude}
In this paper we have combined single machine scheduling with the joint replenishment problem, where the processing of any job on the machine can only be started if the required item types are ordered after the release date of the job.
We have proved complexity results, and devised polynomial time algorithms both for the offline and the online variant of the problem.
However, several open questions remained, and we list some of the most intriguing ones. Is the offline problem with the $\sum w_j C_j$ objective solvable in polynomial time when $p_j=p$ and $s$ is constant, while the $w_j$ are arbitrary? What is the best competitive ratio for the considered  online problems? \new{In particular, for the online problem with regular input and maximum flow time objective, only the number of the jobs is unknown, yet, there is a gap between the best upper and lower bound.} What can we say in case of more complex machine environments?

\section*{Acknowledgment}

The authors are grateful to the anonymous reviewers for constructive comments that helped to improve the presentation.

\section*{Conflict of interest}

The authors declare that they have no conflict of interest.

\bibliographystyle{spmpsci}      
\bibliography{joint_rep}   


\end{document}